\newcommand{\R}{{\mathbb R}}
\newcommand{\N}{{\mathbb N}}
\newcommand{\F}{{\mathcal F}}
\newcommand{\bP}{{\mathbb P}}
\newcommand{\cF}{{\mathcal F}}
\newcommand{\abs}[1]{\left| #1 \right|}
\newtheorem{theorem}{Theorem}[section]
\newtheorem{lemma}[theorem]{Lemma}
\newtheorem{proposition}[theorem]{Proposition}
\theoremstyle{remark}
\theoremstyle{remark}
\newcommand{\norm}[1]{\left|\left|#1\right|\right|}
\newcommand{\argmin}{\mathrm{argmin }}
\newcommand{\TV}{\operatorname{TV}}
\newcommand{\inc}{\operatorname{inc}}
\DeclareMathOperator{\diag}{diag}
\DeclareMathOperator{\BV}{BV}
\DeclareMathOperator{\st}{s.t.}
\title{Fused Density Estimation: Theory and Methods}
\author[1]{Robert Bassett}
\author[2]{James Sharpnack}
\affil[1]{Department of Operations Research, Naval Postgraduate School}
\affil[2]{Department of Statistics, UC Davis}
\begin{document}
\maketitle

\begin{abstract}
In this paper we introduce a method for nonparametric density estimation on infrastructure networks. We define \emph{fused density estimators} as solutions to a total variation regularized maximum-likelihood density estimation problem. We provide theoretical support for fused density estimation by proving that the squared Hellinger rate of convergence for the estimator achieves the minimax bound over univariate densities of log-bounded variation. We reduce the original variational formulation in order to transform it into a tractable, finite-dimensional quadratic program.  Because random variables the networks we consider generalizations of the univariate case, this method also provides a useful tool for univariate density estimation. Lastly, we apply this method and assess its performance on examples in the univariate and infrastructure network settings. We compare the performance of different optimization techniques to solve the problem, and use these results to inform recommendations for the computation of fused density estimators.

\end{abstract}

\section{Introduction}

In the pantheon of statistical tools, the histogram remains the primary way to explore univariate empirical distributions.
Since its introduction by Karl Pearson in the late 19th century, the form of the histogram has remained largely unchanged.
In practice, the regular histogram, with its equal bin widths chosen by simple heuristic formulas, remains one of the most ubiquitous statistical methods.
Most methodological improvements on the regular histogram have come from the selection of bin widths---this includes varying bin widths to construct  irregular histograms---motivated by thinking of the histogram as a piecewise constant density estimate.
In this work, we study a piecewise constant density estimation technique based on total variation penalized maximum likelihood. We call this method fused density estimation (FDE).
We extend FDE from irregular histogram selection to density estimation over geometric networks, which can be used to model observations on infrastructure networks like road systems and water supply networks.
The use of fusion penalties for density estimation is inspired by recent advances in theory and algorithms for the fused lasso over graphs \cite{padilla,Sharpnack}.
Our thesis, that FDE is an important algorithmic primitive for statistical modeling, compression, and exploration of stochastic processes, is supported by our development of fast implementations, minimax statistical theory, and experimental results.

In 1926, \cite{sturges1926choice} provided a heuristic for regular histogram selection where, naturally, the bin width increases with the range and decreases with the number of points.
The regular histogram is an efficient density estimate when the underlying density is uniformly smooth, but irregular histograms can `zoom in' to regions where there is more data and better capture the local smoothness of the density.
A simple irregular histogram, known as the equal-area histogram, is constructed by partitioning the domain so that each bin has the same number of points.
\cite{denby2009variations} noted that the equal-area histogram can often split bins unnecessarily when the density is smooth and merge bins when the density is variable, and proposed a heuristic method to correct this oversight. 
Recently, \cite{li2016essential} proposed the essential histogram, an irregular histogram constructed such that it has the fewest number of bins and lies within a confidence band of the empirical distribution. While theoretically attractive, in practice its complex formulation is intractable and requires approximation. If the underlying density is nearly constant over a region, then the empirical distribution is well approximated locally by a constant, and hence the essential histogram will tend to not split this region into multiple bins.
Such a method is called {\em locally adaptive}, because it adapts to the local smoothness of the underlying density.

In Figure \ref{fig:histcomp}, we compare FDE to the regular histogram, both of which have 70 bins.
Because FDE can be thought of as a bin selection procedure, in this example, we recompute the restricted MLE after the bin selection, which is common practice for model selection with lasso-type methods.
We see that with 70 bins the regular histogram can capture the variability in the left-most region of the domain but under-smooths in the right-most region.
We can compare this to FDE which adapts to the local smoothness of the true density.
As a natural extension of 1-dimensional data, we will consider distributions that lie on geometric networks---graphs where the edges are continuous line segments---such as is common in many infrastructure networks.
Another motivation to use total variation penalties is that they are easily defined over any geometric network, in contrast to other methods, such as the essential histogram and multiscale methods.
Figure \ref{fig:SanDiego} depicts the FDE for data in downtown San Diego. The geometric network is generated from the road network in the area, and observations on the geometric network are the locations of eateries (data extracted from the OpenStreetMap database \cite{OSM}).

\begin{figure}
    \centering
    \includegraphics[width=.9\textwidth]{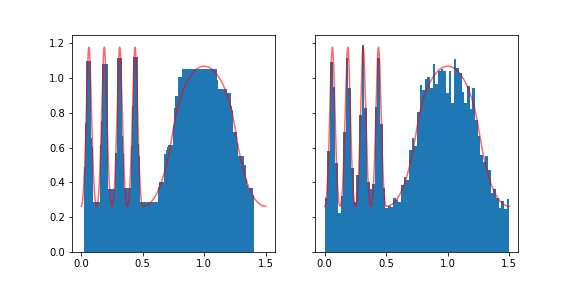}
    \caption{A comparison of FDE (left) and the regular histogram (right) of 10,000 data points from a density (red) with varying smoothness---both have 70 bins. }
    \label{fig:histcomp}
\end{figure}

\begin{figure}[t]
\centering
\includegraphics[width = .9\textwidth]{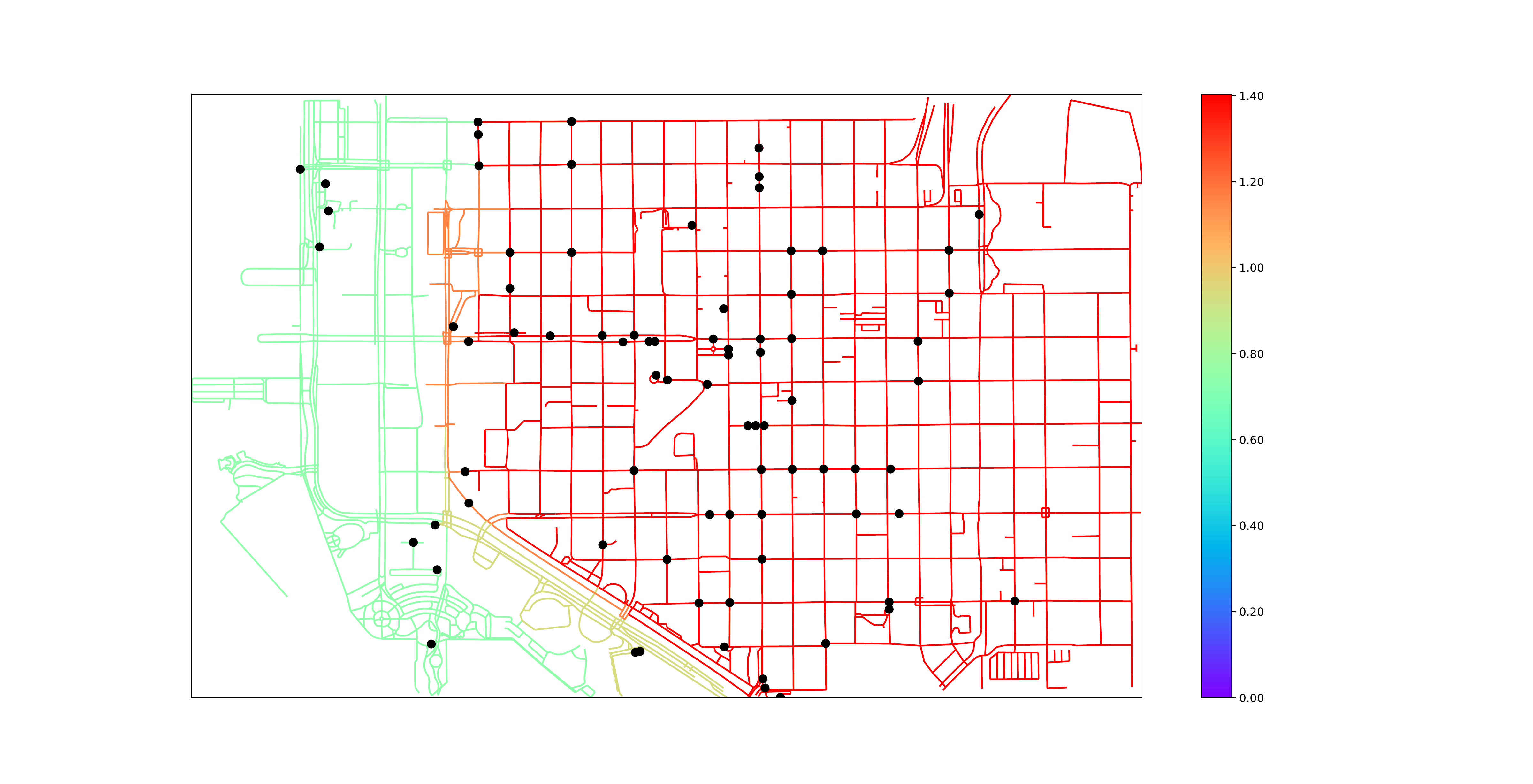}
\vspace{-.5cm}
\caption{FDE for the location of eateries in downtown San Diego.}
\label{fig:SanDiego}
\end{figure}

Without any constraints, maximum likelihood will select histograms that have high variation (as in Figure \ref{fig:histcomp}), so to regularize the problem, we bias the solution to have low total variation.
Total variation penalization is a popular method for denoising images, time series, and signals over the vertices of a graph, with many modern methods available for computation, such as alternating direction method of multipliers, projected Newton methods, and split Bregman iteration \cite{tibshirani2005sparsity,rudin1992nonlinear,beck2009fast,Sharpnack}. 
Distributions over geometric networks, which we consider here, are distinguished from this literature by the fact that observations can occur at any point along an edge of the network. 
This leads to a variational density estimation problem, which we reduce to a finite dimensional formulation.

\noindent {\bf Contribution 1.} We show that the variational FDE is equivalent to a total variation penalized weighted least squares problem enabling fast optimization.

In order to justify the use of FDE, we will analyze the statistical performance of FDE for densities of log-bounded variation over geometric networks.
The majority of statistical guarantees for density estimates control some notion of divergence between the estimate and the true underlying density.
Several authors have used the $L_2$ loss (mean integrated square error) to evaluate their methods for tuning the bin width for the regular histogram \cite{scott1979optimal, freedman1981histogram, birge1997model}. 
While it is appealing to use $L_2$ loss, it is not invariant to choice of base measure, and divergence measures such as $L_1$, Hellinger loss, and the Kullback-Leibler (KL) divergence are preferred for maximum likelihood---an idea pioneered by Le Cam \cite{le2012asymptotics} and furthered by \cite{devroye1985nonparametric, hall1988minimizing}.
By appealing to Hellinger loss, \cite{birge2006many} proposed a method for optimal choice of the number of bins in a regular histogram, and we will similarly focus on Hellinger loss.

\noindent {\bf Contribution 2.} We provide a minimax non-parametric Hellinger distance rate guarantee for FDE in the univariate case, over densities of log-bounded variation.


When the $\log$-density lies in a Sobolev space, an appropriate non-parametric approach to density estimation is maximum likelihood with a smoothing splines penalty \cite{silverman}.
The smoothing spline method is not locally adaptive because it does not adjust to the local smoothness of the density or $\log$-density.
Epi-splines, \cite{RoysetWets}, are density estimates formed by maximizing the likelihood such that the density, or log-density, has a representation in a local basis and lies in a prespecified constraint set.
\cite{donoho1996density} and \cite{koo1996wavelet} studied wavelet thresholding for density estimation and proved $L_p$ rate and KL-divergence guarantees respectively.
In a related work, \cite{koo2000logspline} considered log-spline density estimation from binned data with stepwise knot selection.
\cite{willett2007multiscale} used a recursive partitioning approach to form adaptive polynomial estimates of the density, a similar approach to wavelet decomposition.

Total variation penalization has previously been proposed as a histogram regularization technique in \cite{koenker2007density, sardy2010density, padilla2015nonparametric}.
Particularly, \cite{padilla2015nonparametric} separately studies the variational form of the fused density estimate and a discrete variant, and provides theoretical guarantees for Lipschitz classes.
Our computational results improve on these works by minimizing a variational objective directly,  instead of separately proposing discrete approximations to the variational problem. Our theoretical analysis improves on previous work by studying total variation classes directly by showing that FDE achieves the minimax rate for Hellinger-risk over all densities of log-bounded variation. Moreover, we consider density estimation on geometric networks, and extend our Hellinger rate guarantees to this novel setting.

%


\noindent {\bf Contribution 3.} We prove that the same Hellinger distance rate guarantee for the univariate case also holds for any connected geometric network.


\subsection{Problem Statement}

When considering road systems and water networks, we observe that individual roads or pipes can be modeled as line segments, and the entire network constructed by joining these segments at nodes of intersection.
Mathematically, we model this as a \emph{geometric network} $G$, a finite collection of nodes $V$ and edges $E$, where each edge is identified with a closed and bounded interval of the real line. 
Each edge in the network has a well-defined notion of length, inherited from the length of the closed interval.  
We fix an orientation of $G$ by assigning, for each edge $e = \{v_{i}, v_{j}\}$, a bijection between $\{v_{i},v_{j}\}$ and the endpoints of the closed interval associated with $e$.  
This corresponds to the intuitive notion of ``gluing'' edges together to form a geometric network. Because we only discuss geometric networks in this paper, we will often refer to them as networks.

A \emph{point} in a geometric network $G$ is an element of one of the closed intervals identified with edges in $G$, modulo the equivalence of endpoints corresponding to the same node. After assigning an orientation to the network, a point can be viewed as a pair $(e,t)$, where $e$ is an edge and $t$ is a real number in the interval identified with $e$. However, because we wish to emphasize the network as a geometric object in its own right, we will only use this notation when our use of univariate theory makes it necessary.
\begin{figure}[ht]
\hspace{-.75cm}
\begin{subfigure}{.55\textwidth}
  \centering
  \includegraphics[width=\linewidth]{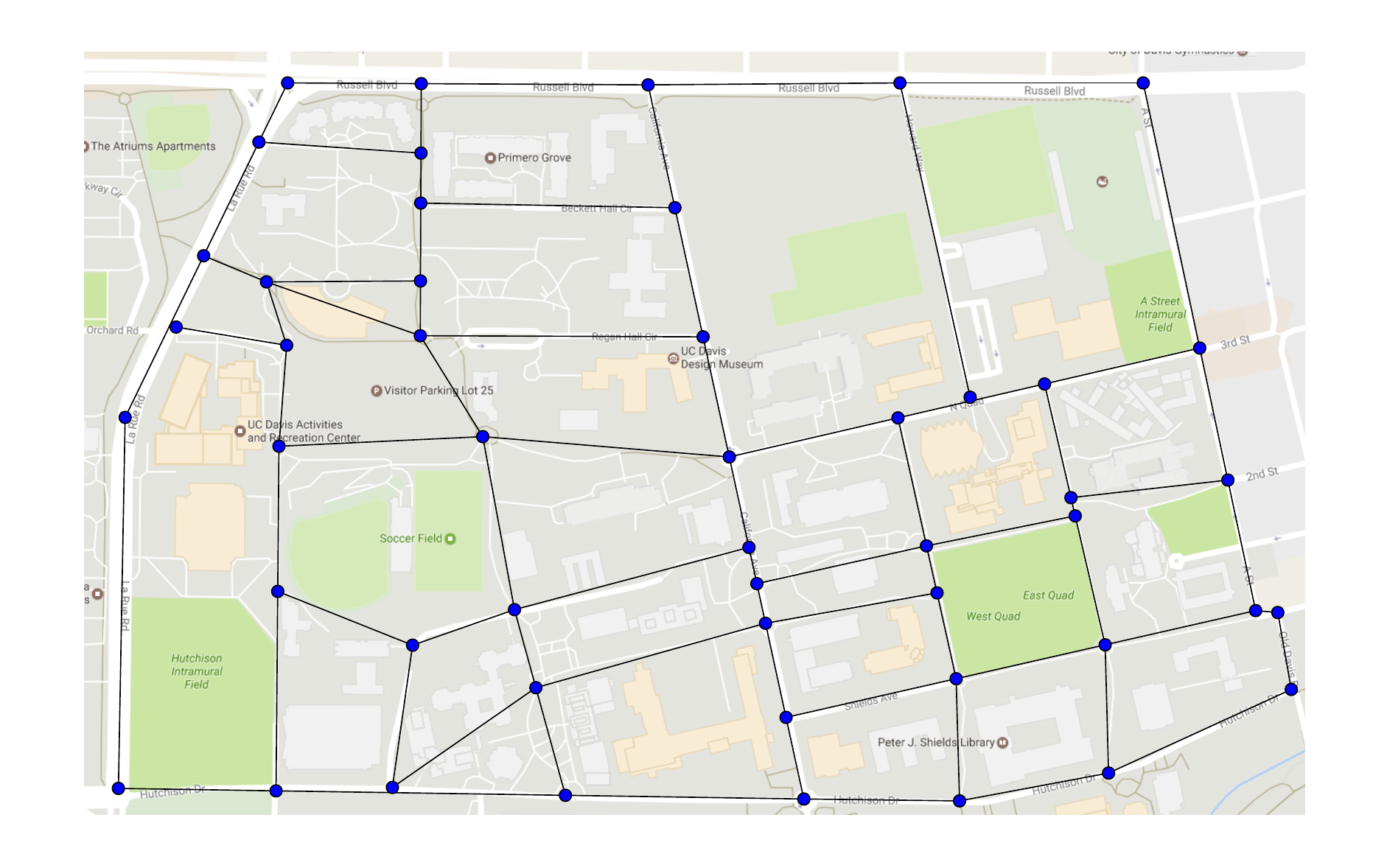}
\end{subfigure}%
\begin{subfigure}{.55\textwidth}
  \centering
  \includegraphics[width=\linewidth]{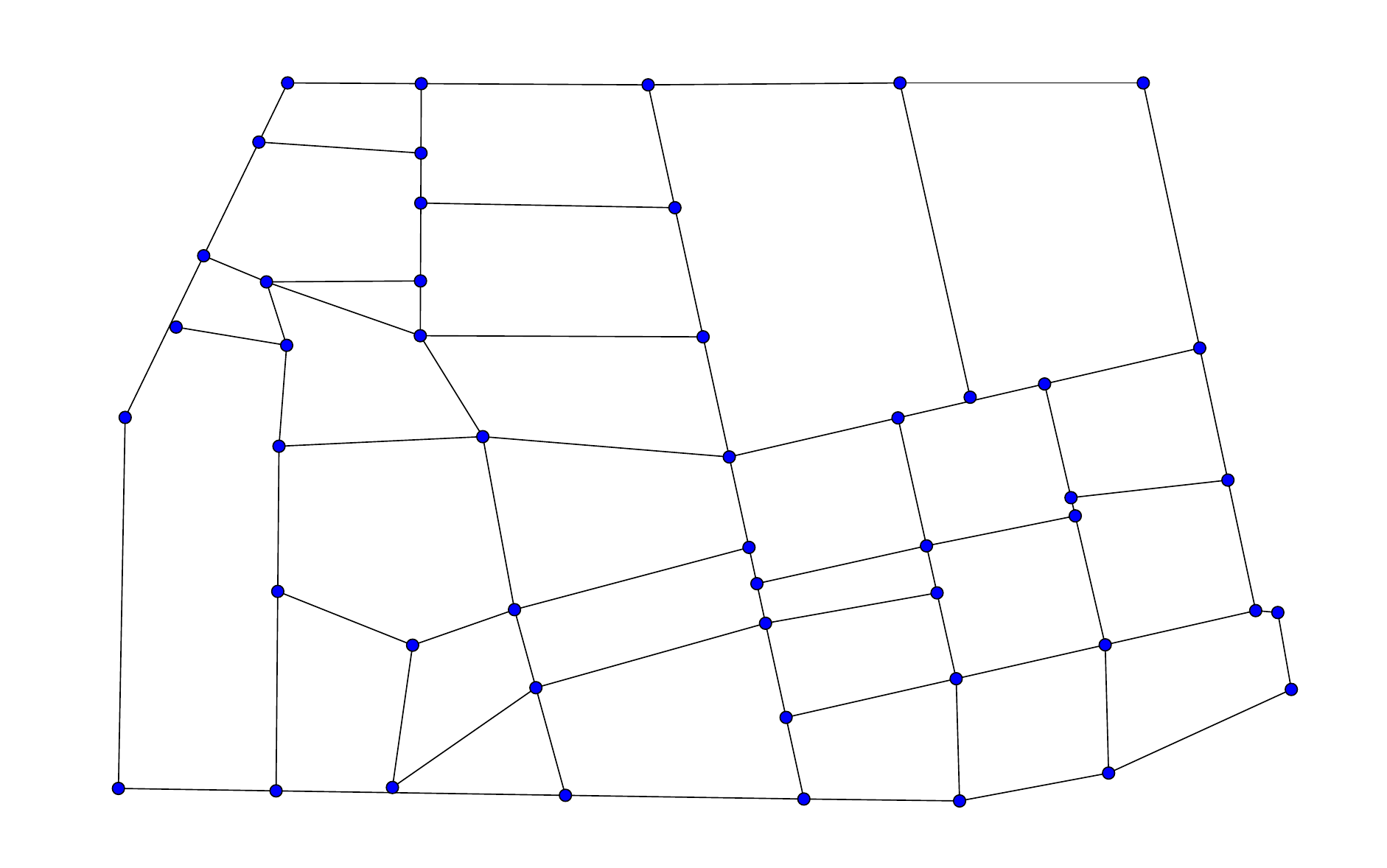}
\end{subfigure}
\caption{An example of a geometric network. This network is formed from bike paths on a university campus. Nodes identify intersections of paths. Edges are paths connecting these intersections, and the length of each edge is the length of the corresponding path. In forming the geometric network, we discard all information related to its embedding in $\R^2$, only preserving the network structure and path lengths.}
\end{figure}

A real-valued function $g$, defined on a geometric network $G$, is a collection of univariate functions $\{g_{e}\}_{e \in E}$, defined on the edges of $G$. We require that the function respects the network structure, by which we mean that for any two edges $e_1$ and $e_2$ which are incident at a node $v$, $g_{e_{1}}(v) = g_{e_{2}}(v)$. We abuse notation slightly--by referring to $g_{e}(v)$, we mean $g$ evaluated at the endpoint of the interval identified to $v$. A geometric network $G$ inherits a measure from its univariate segments in a natural way, as the sum of the Lebesgue measure along each segment. With this measure we have a straight-forward extension of the Lebesgue measure to $G$, making $G$ a measurable space. 

For any random variable taking values on the network $G$, we will assume that the measure induced by the random variable is absolutely continuous with respect to the base measure, $dx$, and so has density $f$.  We will abuse notation by using $dx$ to refer to both the Lebesgue measure and the base measure on a geometric graph; which of these we mean will be clear from its context. Furthermore, we assume that the density is non-zero everywhere, so that its logarithm is well defined. Moreover, we will assume the log-density is not arbitrarily variable, and for this purpose we will use the notion of total variation.
Let $B \subseteq \mathbb{R}$. The \emph{total variation} of a function $g: B \to \R$ is defined as
$$\TV(g) = \sup_{P \subset B} \sum_{z_{i} \in P} \abs{g(z_{i}) - g(z_{i+1})}.$$
The supremum is over all partitions, or finite ordered point-subsets $P$, of $B$. 
For a real-valued function $g$ defined on a network $G$, we extend the univariate definition to
$$\TV(g) = \sum_{e \in E} \TV(g_{e}).$$
One advantage of the use of the $\TV$ penalty is that is it invariant to the choice of the segment length in the geometric network, so scaling the edge by a constant multiplier leaves the total variation unchanged. As a consequence fused density estimation will be invariant to the choice of edge length.

Let $f_{0}$ be a density on a geometric network $G$, and $x_{1},...,x_{n}$ an independent sample identically distributed according to $f_{0}$. 
Let $P_{n} = \frac{1}{n} \sum \delta_{x_{i}}$ be the empirical measure associated to the sample. 
We let $P_n$ act on a function, by which we mean that we take the expectation of that function with respect to $P_n$. So for any function $f$,
$$P_{n}(f) = \int f \, d P_{n} = \frac{1}{n} \sum_{i=1}^{n} f(x_{i}).$$
We will also use $P(f)$ to denote $\int f \, dP$ for non-empirical measures $P$.

Fix $\lambda \in \mathbb{R}^{+}$. A \emph{fused density estimator} (FDE) of $f_{0}$ is a density $\hat{f} = \exp(\hat g)$, such that the log-density $\hat{g}$ is minimizer or the following program,
\begin{equation} \label{fde}
\min \; -P_{n}(g) + \lambda \TV(g) \text{ s.t. } \int e^{g} \, dx = 1
\end{equation}
where the minimum is taken over all functions $g: G \to \mathbb{R}$ for which the expression is finite and the resulting $f$ is a valid density. 
That is, $f \in \mathcal{F}$ and $g \in \mathcal{G}$ where
$$\mathcal{F} = \{e^{g}: g \in \mathcal{G}\}, \; \; \; \mathcal{G} = \left\{g : \TV(g) < \infty, \; \; \int_{G} e^{g} \, dx = 1 \right\}.$$
The set $\mathcal{F}$ will be referred as the set of densities with \emph{log-bounded variation}. Indeed, the integration constraint on elements of $\mathcal{G}$ makes them log-densities.
Note that densities in $\mathcal{F}$ are necessarily bounded above and away from zero, as a result of the total variation condition.

The program in \eqref{fde} is variational, because it is a minimizer over an infinite dimensional function space.
It is quite common for variational problems in non-parametric statistics to involve a reproducing kernel Hilbert space (RKHS) penalty, as opposed to a total variation penalty \cite{wahba1990spline}.
In the RKHS setting setting, the Hilbert space allows us to establish representer theorems, which reduce the variational program to an equivalent finite dimensional one, so that it can be solved numerically.  
The space of functions of bounded variation, on the other hand, is an example of a more general Banach space, so RKHS results cannot be applied to this setting. In the next section, we discuss representer theorems for \eqref{fde}, and further show that it can be solved using a sparse quadratic program.

\section{Computation} \label{Computation}

In this section we provide results toward the computation of fused density estimators. The key challenge is the variational formulation of the Fused Density Estimator \eqref{fde}. 
To this end, we prove that solutions to the variational problem can be finitely parametrized. Moreover, we show that after applying this representer theorem, the finite-dimensional analog of \eqref{fde} has an equivalent formulation as a total variation penalized least-squares problem. 
Our main theorem of this section, which reduces the computation of a fused density estimator to a weighted fused-lasso problem, follows.

\begin{theorem}[Informal] \label{MainCompThm}
For $\lambda>\frac{1}{2n}$ the FDE exists almost surely. It can be computed as the minimizer to a finite-dimensional convex, sparse, and total variation penalized quadratic program. That is, FDEs are solutions to an optimization of the form
\begin{equation}\label{InformalProb}
\min_{z\in \R^d} \frac{1}{2} z^{T} P z + a^{T} z + \norm{Dz}_{1}.
\end{equation}
The details of this theorem, by which we mean the constructions of $P$, $D$, $a$, and the connection between the minimizer $\hat{z}$ of \eqref{InformalProb} and the FDE $\hat{f}$, are given later in this section.
\end{theorem}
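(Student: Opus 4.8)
The plan is to prove the statement in three stages: (i) show that for $\lambda>\tfrac1{2n}$ the infimum in \eqref{fde} is finite and attained, and that $\tfrac1{2n}$ is the sharp threshold; (ii) a representer theorem, showing that some minimizer $\hat g$ is piecewise constant on the partition of $G$ cut out by the data points and the nodes; and (iii) substitution of this finite-dimensional ansatz, followed by algebraic manipulation into the form \eqref{InformalProb}.

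For (i), the crux is a pointwise bound on $g$ at each data point in terms of its local total variation. Fix a feasible $g$ and, for each $i$, a short interval $B_i$ around $x_i$ inside its edge, with the $B_i$ pairwise disjoint; this is possible on the probability-one event that the $x_i$ are distinct and avoid the (finitely many) nodes, which is where the ``almost surely'' enters. For $y$ in either half of $B_i$ we have $g(y)\ge g(x_i)-\TV(g;\text{that half})$; integrating $e^g$ over the two halves and applying the AM--GM inequality to the two exponentials gives $\int_{B_i}e^g\ge \abs{B_i}\,e^{g(x_i)-\frac12\TV(g;B_i)}$. Summing over $i$, using $\sum_i\int_{B_i}e^g\le\int_G e^g=1$ and $\sum_i\TV(g;B_i)\le\TV(g)$, yields $-P_n(g)\ge-\tfrac1{2n}\TV(g)-C$ for a constant $C$ depending only on the sample spacing; hence the objective is at least $(\lambda-\tfrac1{2n})\TV(g)-C$, which is bounded below, and whose sublevel sets control $\TV(g)$ and --- through $\int e^g=1$ --- also $\norm{g}_\infty$. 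Helly's selection theorem, together with lower semicontinuity of $\TV$ and continuity of the likelihood and normalization functionals on the (now compact) sublevel sets, produces a minimizer. The factor $\tfrac12$ is precisely the per-unit-height cost $2\lambda-\tfrac1n$ of a two-sided spike at an interior point of an edge, which also explains why equality in the threshold does not suffice and why a data point coinciding with a degree-one node would alter the bound.

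For (ii), let $\hat g$ be a minimizer and $(s,t)$ one of the open segments of the data/node partition. Since $-P_n$ does not see $\hat g$ on $(s,t)$, among all perturbations of $\hat g$ supported on $[s,t]$ that keep $\hat g$ fixed elsewhere and preserve feasibility, $\hat g|_{[s,t]}$ must minimize $\lambda\,\TV_{[s,t]}(\cdot)+\mu\int_{[s,t]}e^{(\cdot)}\,dx$, where $\mu$ is the multiplier attached to $\int_G e^g=1$. The integrand $\lambda\abs{h'}+\mu e^{h}$ has no non-constant absolutely continuous stationary arc --- the $\abs{h'}$ term contributes a locally constant momentum while $\partial_h(\mu e^h)\neq0$ --- so the minimizer is a step function, and minimizing the exponential term over the placement of any interior step drives its width to zero; hence $\hat g$ is constant on $(s,t)$. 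Optimizing the value at each data point then forces $\hat g(x_i)$ to equal the larger of the two adjacent segment values (a larger value only adds total variation, a smaller value only lowers the likelihood, and the two effects cancel exactly at $\lambda=\tfrac1{2n}$), while at a node $\hat g$ takes a median of the incident segment values. Carrying this out on every segment reduces \eqref{fde} to a finite minimization over the vector $c=(c_j)$ of segment values subject to $\sum_j\ell_j e^{c_j}=1$, with $\ell_j$ the length of segment $j$. I expect this exchange argument --- and in particular reconciling the segment-by-segment reasoning with the single global normalization constraint --- to be the main technical obstacle.

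For (iii), expand the reduced objective in $c$. Writing $\max(a,b)=\tfrac{a+b}2+\tfrac{\abs{a-b}}2$, the likelihood term $-P_n(\hat g)$ splits into a linear part and a sum of absolute differences across the data-point boundaries, and the latter merges with $\lambda\TV(\hat g)=\lambda\norm{Dc}_1$ into one weighted fused-lasso penalty $\norm{\Lambda Dc}_1$ --- weight $\lambda-\tfrac1{2n}>0$ on data-point edges and $\lambda$ on node edges, where $D$ is the difference operator of the segment-adjacency graph. Finally, reparametrizing by the bin masses $z_j=\ell_j e^{c_j}$ makes the constraint $\sum_j z_j=1$ affine and turns the likelihood contribution into a weighted least-squares discrepancy between $z$ and a fixed empirical vector, yielding the quadratic $\tfrac12 z^T P z+a^T z$ with the difference operator folded into the penalty; this is the announced form \eqref{InformalProb}, and pinning down the precise $P$, $D$, $a$ and the reconstruction $\hat z\mapsto\hat f$ is exactly the ``details'' the statement defers. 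Making this last reparametrization produce an \emph{exactly} quadratic objective, rather than merely a convex one, is the other place where I would expect the real work to lie.
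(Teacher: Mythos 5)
Your stages (i) and (ii) point in workable directions, and stage (i) is in fact a genuinely different route from the paper's: the paper obtains existence only after the finite-dimensional reduction, from boundedness and attainment of the primal--dual pair of extended linear-quadratic programs, whereas you argue directly in function space via the coercivity bound $-P_{n}(g) \geq -\tfrac{1}{2n}\TV(g) - C$ (your AM--GM computation over the two halves of $B_i$ is correct), Helly selection, and lower semicontinuity of $\TV$; that is a legitimate direct-methods alternative. Stage (ii) targets the paper's Lemma \ref{FinDim}, but your argument there is loose: a minimizer of bounded variation is not a priori a step function, and the ``no stationary arc / step width driven to zero'' reasoning does not substitute for a proof. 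The paper's argument is a one-line comparison: on any data-free open interval $(a,b)$ replace $\hat g$ by the constant $\min\{\hat g(a),\hat g(b)\}$, which leaves $P_n$ unchanged, does not increase $\TV$, and strictly decreases $\int e^{g}$ unless $\hat g$ was already constant there. Moreover, the difficulty you flag---reconciling local perturbations with the global constraint---is exactly what the paper removes up front with Lemma \ref{UnCon}, which shows the multiplier of $\int e^{g}\,dx = 1$ is exactly $1$; leaving an undetermined $\mu$ in place is not harmless, because the precise coefficient of the integration term feeds into the coefficients of the final program.

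The genuine gap is stage (iii), which is the heart of the theorem. Reparametrizing by bin masses $z_j=\ell_j e^{c_j}$ does not yield \eqref{InformalProb}: the fused penalty $\abs{c_j-c_{j+1}}$ becomes $\abs{\log(z_j/\ell_j)-\log(z_{j+1}/\ell_{j+1})}$, which is not the $\ell_1$ norm of any linear map of $z$, and the likelihood term, which is linear in $c$, becomes a sum of logarithms rather than a quadratic; no direct change of variables in the objective produces an exactly quadratic program, which your final sentence half-concedes. The paper's device is different in kind: it never rewrites the objective at all. It writes the subdifferential optimality condition in $(c,k)$, notes that every absolute-value term has a subgradient set depending only on the \emph{ordering} of its arguments---hence invariant under the order-preserving map $z=e^{c}$, $h=e^{k}$---and that the gradient of the integration term $s_{e,i}e^{c_{e,i}}$ equals $s_{e,i}z_{e,i}$, which is precisely the gradient in $z$ of $\tfrac{s_{e,i}}{2}z_{e,i}^2$. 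Thus \eqref{PrimalProb} is a \emph{different} optimization problem engineered to share the same optimality conditions: its variable $z$ is the density value on each segment (not the bin mass), its quadratic term $\tfrac12 z^{\top} S z$ comes from the normalization integral (not from the likelihood, which contributes only the linear term $w^{\top}z$), and there is no affine constraint $\sum_j z_j = 1$, normalization being automatic by Lemma \ref{UnCon}. Without this matching-of-optimality-conditions idea (or an equivalent), your outline does not reach the announced quadratic-program form, so the central computational claim of the theorem remains unproved in your proposal.
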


Theorem \ref{MainCompThm} demonstrates that the FDE \eqref{fde} can be computed as a specific incarnation of the generalized lasso, for which there are well known fast implementations \cite{arnold2016efficient}.
In practice we will solve the dual to this problem, which we discuss in Theorem \ref{Dual}.
Theorem \ref{RestateThm} is a precise restatement of Theorem \ref{MainCompThm}. In order to prove it, we proceed through a series of important lemmas. Lemma \ref{UnCon} proves that minimizers of \eqref{fde} exist almost surely for $\lambda > 1/2n$, below which the solution degenerates to dirac masses at observations. The almost surely qualification pertains to the maximum number of observations which occur at any single point of the network; if observations occur simultaneously then $\lambda$ must be increased to overcome degeneracy. Lemma \ref{UnCon} further transforms the FDE problem from constrained to unconstrained by removing the integration constraint. From this new formulation, Lemma \ref{FinDim} shows that the search space for the fused density estimator problem can be reduced from functions of bounded variation to an equivalent, finite-dimensional version. Theorem \ref{RestateThm} performs the final step in the proof--demonstrating that the previously derived finite-dimensional problem can be solved using a $\ell_1$ penalized quadratic program. The last subsection in this section is tangential, but sheds further light on the structure of fused density estimators. Proposition \ref{OrderingProp}, which we refer to as the Ordering Property, qualifies the local-adaptivity of fused density estimators by describing their local structure. Omitted proofs can be found in Appendix \ref{AppA}.

\subsection{Main Computational Results}

Our first lemma reduces the fused density estimator problem, \eqref{fde}, to an unconstrained program where the integral constraint is incorporated into the objective. 
This result is originally due to Silverman \cite{silverman}, who proved the result in the context of univariate density estimation and Sobolev-norm penalties. 
Minor modifications allow us to extend it to geometric networks and the non-Sobolev total variation penalty.
\begin{lemma}\label{UnCon}
The problem
\begin{equation} \label{UnConProb}
\min_{g \in \mathcal{G}} -P_{n}(g) + \lambda \TV(g) + \int_{G} e^{g} \, dx
\end{equation}
gives an equivalent formulation of \eqref{fde}, because minimizers $\hat{g}$ of \eqref{UnConProb} satisfy $\int_{G} e^{\hat{g}} \, dx = 1$.
\end{lemma}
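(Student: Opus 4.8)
The plan is to show that the constrained problem \eqref{fde} and the unconstrained problem \eqref{UnConProb} have the same minimizers by exploiting the fact that, along the ray $g \mapsto g + c$ for a constant $c \in \R$, the penalty term $\TV(g)$ is unchanged and the linear term $-P_n(g)$ shifts by $-c$, while the integral term $\int_G e^g\,dx$ scales by $e^c$. First I would take an arbitrary $g$ with $\TV(g) < \infty$ (not necessarily satisfying the integral constraint) and let $Z = \int_G e^g\,dx \in (0,\infty)$; finiteness of $Z$ follows because a function of bounded variation on a network of finite total length is bounded, and positivity is automatic. Then I would consider the one-parameter family $g_c = g + c$ and compute the objective of \eqref{UnConProb} as a function of $c$: it equals $-P_n(g) - c + \lambda \TV(g) + Z e^c$. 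Minimizing this scalar convex function over $c \in \R$ gives the first-order condition $-1 + Z e^c = 0$, i.e. $e^c = 1/Z$, so the optimal shift is $c^\star = -\log Z$, and at this value $g_{c^\star}$ satisfies $\int_G e^{g_{c^\star}}\,dx = Z e^{c^\star} = 1$, i.e. $g_{c^\star} \in \mathcal{G}$.

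Next I would assemble these observations into the equivalence. Given any minimizer $\hat g$ of \eqref{UnConProb}: if $\int_G e^{\hat g}\,dx \neq 1$, then by the scalar computation above the shift $\hat g - \log(\int_G e^{\hat g}\,dx)$ strictly decreases the objective of \eqref{UnConProb}, contradicting minimality; hence $\int_G e^{\hat g}\,dx = 1$, which is the claim stated in the lemma. Conversely, to see that \eqref{UnConProb} really is an ``equivalent formulation'' I would check that the two problems have the same optimal value and corresponding optimizers: for any $g \in \mathcal{G}$, the objective of \eqref{UnConProb} equals $-P_n(g) + \lambda \TV(g) + 1$, so on the feasible set of \eqref{fde} the unconstrained objective is just the constrained objective plus the constant $1$; and any $g$ with $\TV(g) < \infty$ can be shifted into $\mathcal{G}$ without increasing the unconstrained objective (by the minimization over $c$, the shift only helps). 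Therefore minimizing \eqref{UnConProb} over all of $\{g : \TV(g) < \infty\}$, or over $\mathcal{G}$, yields the same minimizers, and these coincide (up to the additive constant $1$ in the objective value) with minimizers of \eqref{fde}.

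The routine part is checking that $Z = \int_G e^g\,dx$ is finite and positive for $g$ of bounded variation — this uses that $G$ has finite total length and that bounded variation on each edge implies boundedness, as already noted in the excerpt after the definition of $\mathcal{F}$. The main subtlety, and the only place where care is genuinely needed, is the logical structure of the ``equivalence'' claim: one must be careful that the minimization in \eqref{UnConProb} is over $\mathcal{G}$ (so the integral constraint is nominally still present as part of the domain $\mathcal{G}$), yet the conclusion $\int_G e^{\hat g}\,dx = 1$ is automatic, so the reduction that actually buys us something is the observation that we may equally well minimize \eqref{UnConProb} over the larger class $\{g : \TV(g) < \infty\}$ without the constraint, since the optimal constant shift restores it for free. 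I would state this explicitly to make the passage from constrained to genuinely unconstrained optimization (which is what later lemmas exploit) airtight. No convexity of the full functional in $g$ is needed — only convexity of the scalar map $c \mapsto Z e^c - c$, which is immediate.
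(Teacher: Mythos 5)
Your proposal is correct and follows essentially the same route as the paper: shift $g$ by a constant to normalize the integral, use shift-invariance of $\TV$ and the shift/scaling behavior of the linear and exponential terms, and conclude via the elementary scalar fact (your first-order condition in $c$ is equivalent to the paper's use of $x - \log x \geq 1$ with equality iff $x=1$). Your explicit remark that the minimization is effectively over all functions of bounded variation, with the constraint recovered for free, is a fair clarification of the statement but not a departure from the paper's argument.
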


We remark that the objective in Lemma \ref{UnCon} is equivalent to total variation penalized Poisson process likelihood, where the log-intensity is $g$, so our computations also apply to that setting.
Lemma \ref{UnCon} gives that the fused density estimator definition \eqref{fde} can instead be solved by the unconstrained problem \eqref{UnConProb} over all functions $g$ on $G$ of bounded variation. An alternative interpretation of the lemma is that the Lagrange multiplier associated to the constraint in \eqref{fde} is $1$.
The next lemma reduces the unconstrained problem \eqref{UnConProb} to an equivalent finite-dimensional version. The proof technique is analogous to similar results in \cite{MammenvandeGeer}. In the context of Reproducing Kernel Hilbert Spaces, results that reduce variational problem formulations to finite-dimensional analogs are referred to as \emph{representer theorems}, eg. \cite{wahba1990spline}. We will also use this language to describe our result, even though we are in a more general Banach space setting. The result demonstrates that FDEs have large, piecewise constant regions, which is a well known property of fusion penalties \cite{tibshirani2005sparsity, Kim, Sharpnack}.

\begin{lemma}[Representer Theorem] \label{FinDim}
A fused density estimator $\hat{f}$ must be piecewise constant along each edge. All discontinuities are contained in the set $\{x_{1},...,x_{n}\} \cup V$, the observations and the nodes of $G$.
\end{lemma}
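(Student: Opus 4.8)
The plan is to work on a single edge $e$, identified with an interval $[a,b]\subset\R$, and show that any minimizer $\hat g$ of the unconstrained problem \eqref{UnConProb} can be modified, without increasing the objective, so that $\hat g_e$ is piecewise constant with breakpoints only at the observations lying on $e$ and at the two endpoints (which are nodes). Because $\TV(\hat g)=\sum_e\TV(\hat g_e)$ and the continuity condition at nodes couples edges only through boundary values, it suffices to argue edge-by-edge while holding the values $\hat g_e(v)$ at the incident nodes fixed.

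Fix an edge $e$ and let $a=t_0<t_1<\dots<t_k<t_{k+1}=b$ be the (sorted) observation points on $e$ together with the endpoints. On each open subinterval $(t_{j},t_{j+1})$ there are no observations, so $P_n$ places no mass there; the only terms of the objective sensitive to the behavior of $g_e$ on $(t_j,t_{j+1})$ are $\lambda\,\TV\!\big(g_e\big|_{[t_j,t_{j+1}]}\big)$ and $\int_{t_j}^{t_{j+1}} e^{g_e}\,dx$, while the endpoint values $g_e(t_j),g_e(t_{j+1})$ are pinned (the $t_j$ are atoms of $P_n$ for $1\le j\le k$, the rest are nodes). Now replace $g_e$ on $[t_j,t_{j+1}]$ by a function that is constant on $(t_j,t_{j+1})$ at some level $c_j$ with $\min\{g_e(t_j),g_e(t_{j+1})\}\le c_j\le\max\{g_e(t_j),g_e(t_{j+1})\}$ (allowing a jump only at the two endpoints to match the pinned values). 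Such a replacement cannot increase the total-variation contribution on that subinterval: for \emph{any} function the TV over $[t_j,t_{j+1}]$ is at least $|g_e(t_j)-g_e(t_{j+1})|$, and a monotone step attaining the pinned endpoint values realizes exactly this minimum. It also does not increase $\int_{t_j}^{t_{j+1}}e^{g_e}$, since choosing $c_j=\min\{g_e(t_j),g_e(t_{j+1})\}$ (or, to be safe, the infimum of $g_e$ on the interval, which is finite by bounded variation) makes the integrand pointwise no larger than before — convexity of $\exp$ and the fact that $g_e$ on the subinterval stays within its own range suffice, but one can simply take $c_j$ small enough. Since $-P_n(g)$ is unchanged (no observations in the open subinterval, endpoint values fixed), the modified function has objective value $\le$ that of $\hat g$, and it is piecewise constant with discontinuities only in $\{x_1,\dots,x_n\}\cup V$. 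Applying this on every subinterval of every edge, and invoking Lemma \ref{UnCon} to translate back to \eqref{fde}, gives the claim; strict convexity of $\int e^g$ in fact forces \emph{every} minimizer to have this form, not merely one of them.

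The main obstacle is handling the total-variation bookkeeping rigorously at the breakpoints: one must be careful that the one-sided limits of $g_e$ at each $t_j$ from the left and from the right, together with the pinned value $g_e(t_j)$, are accounted for so that summing the per-subinterval TV does not under- or over-count the jump at $t_j$. The clean way to do this is to note $\TV(g_e)=\sum_{j=0}^{k}\TV\!\big(g_e\big|_{[t_j,t_{j+1}]}\big)$ exactly (total variation is additive over a partition of the domain into closed subintervals sharing endpoints), so each subinterval can be optimized independently and the minimum of each summand is $|g_e(t_j^+)-g_e(t_{j+1}^-)|$ adjusted by the jumps at the shared endpoints — but choosing the replacement to be continuous from inside at $t_j$ except for a single jump located exactly at $t_j$ keeps the total jump contribution at $t_j$ no larger than it was originally. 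A second, minor point is ensuring the replacement still lies in $\mathcal G$: it has finite TV by construction, and $\int_G e^g\,dx$ is finite and can be made to equal $1$ after the fact — but we are working with \eqref{UnConProb}, where Lemma \ref{UnCon} guarantees the integral self-normalizes at the optimum, so no separate feasibility argument is needed.
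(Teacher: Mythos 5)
Your overall strategy is the same as the paper's: on observation-free subintervals replace $\hat g$ by a constant, note the empirical term is untouched, argue that the total variation and the $\int e^{g}$ term do not increase, and invoke Lemma \ref{UnCon} (the paper phrases this as a contradiction on an arbitrary interval $(a,b)$ disjoint from $\{x_1,\dots,x_n\}\cup V$, with replacement level $\min\{\hat g(a),\hat g(b)\}$). However, your key step wobbles exactly where the two comparisons must hold \emph{simultaneously}. If $c_j$ is taken between the pinned endpoint values, the TV contribution is indeed $\abs{g_e(t_j)-g_e(t_{j+1})}$, but the integrand need not be pointwise smaller: $g_e$ may dip below $\min\{g_e(t_j),g_e(t_{j+1})\}$ inside the interval, so the integral can increase there (convexity of $\exp$ does not rescue this). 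Conversely, ``take $c_j$ small enough'' destroys the TV bound: pushing $c_j$ below the infimum of $g_e$ inflates the jumps at $t_j$ and $t_{j+1}$ without limit. The correct single choice, which you mention parenthetically but do not verify, is $c_j=\inf_{[t_j,t_{j+1}]}g_e$: then $e^{c_j}\le e^{g_e}$ pointwise, and since the infimum is approached at some point of the subinterval, inserting such a point into a partition gives $\TV(g_e|_{[t_j,t_{j+1}]})\ge (g_e(t_j)-c_j)+(g_e(t_{j+1})-c_j)$, which is exactly the TV of your replacement. That one line is what closes the gap; as literally written the step can fail. (The paper's own replacement by $\min\{\hat g(a),\hat g(b)\}$ requires the same care on the integral term, and the same fix works.)

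Second, the lemma asserts that \emph{every} FDE is piecewise constant with discontinuities only in $\{x_1,\dots,x_n\}\cup V$, so a weak-inequality modification only produces \emph{some} minimizer of the right shape. The paper obtains the strict statement by tracking equality cases: equality in the TV comparison forces $\hat g$ to be monotone on the interval, equality in the integral comparison forces a.e.\ agreement with the replacement, and the two together are incompatible with non-constancy. Your appeal to strict convexity of $g\mapsto\int e^{g}$ is not by itself sufficient: it identifies minimizers only up to null sets, and a minimizer could a priori differ from the piecewise-constant one at isolated non-observation points; what rules this out is again the TV term (an isolated spike strictly increases $\TV(g)$ while changing neither $P_n(g)$ nor $\int e^{g}\,dx$). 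So either argue strict improvement directly whenever $\hat g$ is non-constant on an observation-free interval, as the paper does, or supplement the strict-convexity argument with this null-set/TV observation.
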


Using Lemma \ref{FinDim}, we can parametrize fused density estimators with three finite-dimensional vectors: the fused density estimator at the observation points, $p$, the fused density estimator at the vertices of $G$, $k$, and the piecewise constant values of the fused density estimator, $c$. 
For simplicity, we will assume that no two observations occur at the same location, a condition that we can and will relax in the remark following Theorem \ref{RestateThm}.

Let $n_{e}$ denote the number of observations along edge $e$. We will denote by $p_{e,i}$ the value, in the vector $p$, of the $i$th ordered observation along edge $e$. For a FDE $f = e^g$, $c_{e,i}$ is the value taken by $g$ between the $(i-1)th$ and $i$th observation in the interval associated with $e$, where the $0$th and $(n_{e}+1)$th observations are set to be the endpoints of that interval. Similarly, $s_{e,i}= x_{e,i} - x_{e,i-1}$ with the convention that $x_{e,0}$ and $x_{e,n_{e}+1}$ are the endpoints of the interval. This gives the length of the segment between two observations, over which the FDE is piecewise constant.
We denote by $k_v$ the value in $k$ at the vertex $v$. For a given node $v$, let $\inc(v)$ denote the set of edges which are incident to $v$ and denote by $c_{e,v}$ the segment in $c$ which is incident to $v$. The problem \eqref{UnConProb} becomes
\begin{align*}
\min_{p,c,k} & \sum_{e \in E} \left\{ -\frac{1}{n} \sum_{i=1}^{n_{e}} p_{e,i} + \lambda \sum_{i=1}^{n_{e}} \abs{p_{e,i} - c_{e,i}} + \abs{p_{e,i}-c_{e,i+1}} +\sum_{i=1}^{n_{e}+1} s_{e,i} e^{c_{e,i}} \right\} + \lambda \sum_{v \in V} \sum_{e \in \inc(v)} \abs{k_{v}-c_{e,v}}
\end{align*}
The first summand, over the edges in $E$, gives the log-likelihood term, the total variation along an edge, and the integration term. The second summand gives the total variation at nodes of the geometric graph. 

Let $F$ denote the objective function above. We show that this problem can be further reduced by removing the $p_{e,i}$ variables. Indeed, for any vectors $\hat{c}, \hat{k}$, let $\tilde{F}(p) = F(p, \hat{c}, \hat{k})$. A necessary condition for any $\hat{p}, \hat{c}, \hat{k}$ to minimize $F$ is $\hat{p} \in \argmin_{p} \; \tilde{F}(p)$. But $\tilde{F}(p)$ does not have a lower bound when $\lambda < \frac{1}{2n}$. Furthermore, the set $\argmin_{p} \; \tilde{F}(p)$ is unbounded when $\lambda = \frac{1}{2n}$, and $\tilde{F}(p)$ has a unique minimum when $\lambda > \frac{1}{2n}$. These facts are clear from the graph of the functions $p_{e,i} \to -p_{e,i}/n + \lambda(\abs{p_{e,i} - c_{e,i}} + \abs{p_{e,i} - c_{e,i+1}}$, which occur as summands in $F$. The function is given in Figure \ref{fig:p_picture}.
\begin{figure}
\centering 
\includegraphics[width=12cm]{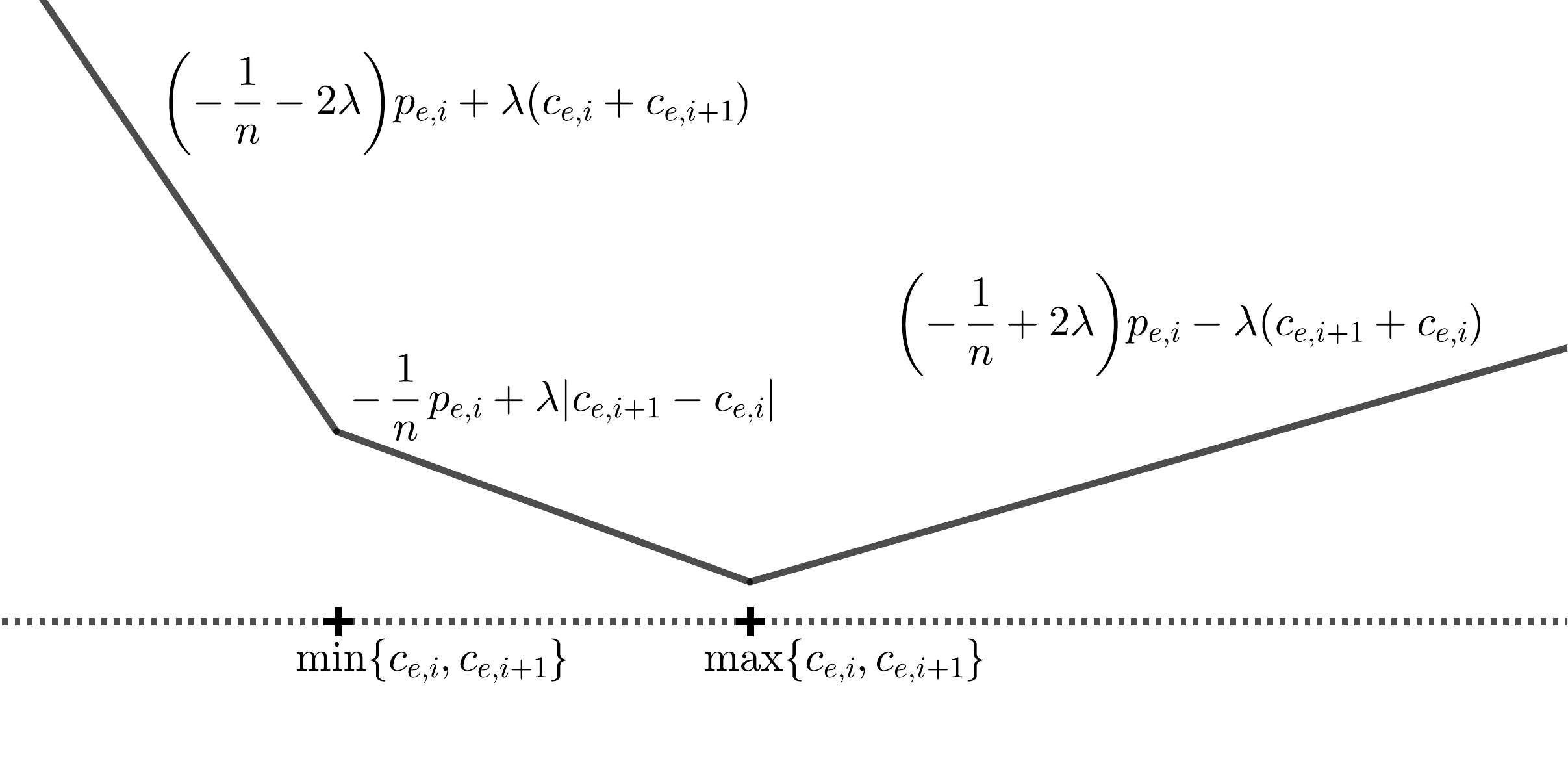}
\caption{The function $p_{e,i} \to -p_{e,i}/n + \lambda(\abs{p_{e,i} - c_{e,i}} + \abs{p_{e,i} - c_{e,i+1}})$ as the maximum of three affine functions. It attains a unique minimum at $p_{e,i} = \max\{c_{e,i}, c_{e,i+1}\}$ when $\lambda > \frac{1}{2n}$.}
\label{fig:p_picture}
\end{figure}
It is the maximum of three affine functions, from which we conclude that the minimum of $\tilde{F}$ is attained uniquely at $p_{e,i} = \max\{c_{e,i}, c_{e,i+1}\}$ when $\lambda > \frac{1}{2n}$. We have shown that $\lambda = \frac{1}{2n}$ is a critical point for the existence of FDEs, below which the total variation penalty is not strong enough to prevent degenerate solutions to \eqref{fde}. For $\lambda > \frac{1}{2n}$, the value of an FDE at observations is well-behaved and we can reduce $F(p, c, k)$ to

\begin{align*}
\min_{c,k} & \sum_{e \in E} \left\{ -\frac{1}{n} \sum_{i=1}^{n_{e}} \max\{c_{e,i}, c_{e,i+1}\} + \lambda \sum_{i=1}^{n_{e}} \abs{c_{e,i}-c_{e,i+1}} +\sum_{i=1}^{n_{e}+1} s_{e,i} e^{c_{e,i}} \right\} + \lambda \sum_{v \in V} \sum_{e \in \inc(v)} \abs{k_{v}-c_{e,v}}.
\end{align*}
Because 
$2 \cdot \max\{c_{e,i}, c_{e,i+1}\} = c_{e,i}+c_{e,i+1}+\abs{c_{e,i}-c_{e,i+1}},$
we have the further equivalence,
\begin{align*}
\min_{c,k} \sum_{e \in E} \left\{ -\frac{1}{2n} \sum_{i=1}^{n_{e}}(c_{e,i}+c_{e,i+1}) + (\lambda- \frac{1}{2n})\sum_{i=1}^{n_{e}} \abs{c_{e,i}-c_{e,i+1}} + \sum_{i=1}^{n_{e}+1} s_{e,i} e^{c_{e,i}}\right\} + \lambda \sum_{v \in V} \sum_{e \in \inc(v)} \abs{k_{v}-c_{e,v}}.
\end{align*}

By \cite[Theorem 23.8]{rockafellar}, a necessary and sufficient condition for $\hat{c},\hat{k}$ to solve this problem is
\begin{align*} 
0 \in & \sum_{e \in E} \left\{ -\frac{1}{2n}\sum_{i=1}^{n_{e}} \partial \left(c_{e,i}+c_{e,i+1}\right) + (\lambda-\frac{1}{2n})\sum_{i=1}^{n_{e}} \partial\left(\abs{c_{e,i}-c_{e,i+1}}\right) + \sum_{i=1}^{n_{e}+1} \partial \left( s_{e,i} e^{c_{e,i}} \right) \right\} \\
&+ \lambda \sum_{v \in V} \sum_{e \in \inc(v)} \partial \left(\abs{k_{v}-c_{e,v}}\right).
\end{align*}

Here we make an important point. The subdifferential of each $(c_{e,i}+c_{e,i+1})$ term is constant, and the subdifferential of $\abs{c_{e,i}-c_{e,i+1}}$ is piecewise constant, \emph{depending only on the ordering} of the terms $c_{e,i}$ and $c_{e,i+1}$. Similarly, the subdifferential of the $\abs{k_{v}-c_{e,v}}$ term is piecewise constant and again only depends on the ordering of its terms. Lastly, the subdifferential of $s_{e,i} e^{c_{e,i}}$ is given by its gradient: the $(e,i)$th coordinate of the subdifferential is $s_{e,i} e^{c_{e,i}}$. 

Consider the transformation $z = e^{c}$, $h = e^{k}$. This transformation preserves ordering of elements of $\hat{c}$ and $\hat{k}$, so the subdifferential of each absolute value term is invariant under this transformation. Pursuing this line of reasoning gives the following theorem. In order to facilitate its statement, we briefly establish some notation. 

The total variation of a FDE $f$ on $G$, which has been parametrized into vectors $z$ and $h$, can be expressed as a sum of pairwise distances between values in $z$ and $h$. That is, there are sets $J_1$ and $J_2$ of index pairs such that
$$\TV(f) = \sum_{(i,j) \in J_1} \abs{z_{i} - z_{j}} + \sum_{(i,j) \in J_2} \abs{z_{i} - h_{j}}.$$
This formulation depends on the underlying graph structure and the locations of the observations. The right-hand side of this expression can be written as the $\ell_1$ norm of a vector $C_1 z + C_2 h$, where $C_1$ and $C_2$ are matrices with elements in $\{-1,0,1\}$, each having $\abs{J_1}+\abs{J_2}$ rows. We will use the matrices $C_1$ and $C_2$, which satisfy $\TV(f) = \norm{C_1 z + C_2 h}_{1}$ and $C_2$ is zero in its first $\abs{J_1}$ rows. Let $n_{i} = \abs{J_{i}}$ for $i \in \{1,2\}$. Let 
$$B = \left(\begin{array}{cc} (\lambda-1/2n) I_{n_1\times n_1} & 0_{n_1 \times n_2} \\ 0_{n_2 \times n_1} & \lambda I_{n_{2} \times n_{2}} \end{array} \right).$$
and let $D_1$ and $D_2$ be the matrices $BC_1$ and $BC_2$, respectively. We denote by $x_{1},...,x_{n}$ the locations of observation on $G$, and we further partition these into ordered observations along each edge, so that $x_{e,i}$ denotes the $i$th observation along edge $e$. Recall the definition of $s_{e,i} = x_{e,i}-x_{e,i+1}$, and let $S$ be a diagonal matrix with $s$ on its diagonal. Lastly, define the vector $w$ such that
$$w_{e,i} = \begin{cases} -\frac{1}{2n} & i = 1  \text{ or } i=n_{e}+1\\ -\frac{1}{n} & \text{ otherwise.}\end{cases}.$$

\begin{theorem} \label{RestateThm}
Let $\lambda > \frac{1}{2n}$. Then the fused density estimator exists almost surely. It can be computed as follows. Let $z$ be a vector with indices enumerating the constant portions of the fused density estimator $\hat{f}$, such that $z_{e,i}$ denotes the value of the fused density estimator on the open interval between $x_{e,i}$ and $x_{e,i-1}$, or between an observation and the end of the edge if $i=1$ or $n_{e}+1$.  Let $h$ be a vector with indices enumerating the nodes in $G$, such that $h_{v}$ denotes the value of $\hat{f}$ at node $v$. Then the fused density estimator $\hat{f}$ for this sample is the minimizer of
\begin{equation} \label{PrimalProb}
\min_{z,h} \; \frac{1}{2} z^\top S z + w^\top z + \norm{ D_1 z  + D_2 h}_{1}.
\end{equation}

\end{theorem}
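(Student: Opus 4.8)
The plan is to build directly on the reductions already carried out: by Lemma~\ref{UnCon}, Lemma~\ref{FinDim}, and the elimination of the $p_{e,i}$ variables, for $\lambda>\frac{1}{2n}$ the FDE is $\hat f=e^{\hat g}$ where $\hat g$ is recovered from a pair of finite-dimensional vectors $(\hat c,\hat k)$ minimizing the convex program
$$\min_{c,k}\sum_{e\in E}\Big\{-\tfrac{1}{2n}\sum_{i=1}^{n_e}(c_{e,i}+c_{e,i+1})+(\lambda-\tfrac{1}{2n})\sum_{i=1}^{n_e}\abs{c_{e,i}-c_{e,i+1}}+\sum_{i=1}^{n_e+1}s_{e,i}e^{c_{e,i}}\Big\}+\lambda\sum_{v\in V}\sum_{e\in\inc(v)}\abs{k_v-c_{e,v}}.$$
By \cite[Theorem 23.8]{rockafellar} the pair $(\hat c,\hat k)$ is optimal if and only if $0$ lies in the Minkowski sum of the coordinatewise subdifferentials written out in the text above. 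The heart of the proof is to transport this optimality system through the change of variables $z=e^c$, $h=e^k$ and recognize the image as the optimality system of \eqref{PrimalProb}; convexity of \eqref{PrimalProb} then upgrades this to a genuine equivalence of minimizers.

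Concretely I would record two features of the subdifferential. First, every absolute-value summand $\abs{c_{e,i}-c_{e,i+1}}$ and $\abs{k_v-c_{e,v}}$ has a subdifferential determined solely by the sign of the difference of its arguments; since $t\mapsto e^t$ is strictly increasing we have $\operatorname{sgn}(c_{e,i}-c_{e,i+1})=\operatorname{sgn}(z_{e,i}-z_{e,i+1})$, and similarly for the node terms, and the set-valued case is also preserved because $c_{e,i}=c_{e,i+1}$ iff $z_{e,i}=z_{e,i+1}$. Hence the contribution of these terms to the optimality inclusion, packaged through the matrices $C_1,C_2$ (and thus through $D_1=BC_1$, $D_2=BC_2$ after scaling by $B$), is literally unchanged when $(c,k)$ is replaced by $(z,h)$. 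Second, the integration summand $s_{e,i}e^{c_{e,i}}$ is smooth with $\partial_{c_{e,i}}\big(s_{e,i}e^{c_{e,i}}\big)=s_{e,i}e^{c_{e,i}}=s_{e,i}z_{e,i}$, which is exactly $\partial_{z_{e,i}}\big(\tfrac12 z^\top S z\big)$ since $S=\diag(s)$. Finally, the linear term $-\tfrac{1}{2n}\sum_{e,i}(c_{e,i}+c_{e,i+1})$ contributes the constant vector with $(e,i)$ entry equal to $w_{e,i}$ by the same counting that defines $w$. Putting these together shows that $(\hat c,\hat k)$ solves the $(c,k)$-optimality system precisely when $(\hat z,\hat h)=(e^{\hat c},e^{\hat k})$ satisfies
$$0\in S\hat z+w+D_1^\top\partial\norm{\cdot}_1\!\big(D_1\hat z+D_2\hat h\big),\qquad 0\in D_2^\top\partial\norm{\cdot}_1\!\big(D_1\hat z+D_2\hat h\big),$$
which is the necessary and sufficient first-order condition for \eqref{PrimalProb}.

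It then remains to assemble the stated conclusion. Existence of the FDE almost surely for $\lambda>\frac1{2n}$ is inherited from Lemma~\ref{UnCon} and the $p$-elimination step, and its image $(\hat z,\hat h)$ is componentwise positive by construction and minimizes \eqref{PrimalProb} by the computation above. For uniqueness and the converse, I would use that $S$ is positive definite (each $s_{e,i}$ is a genuine segment length, hence positive), so \eqref{PrimalProb} is strictly convex in $z$; thus the $z$-component of any minimizer is unique and must equal $\hat z=e^{\hat c}>0$, after which $\hat c=\log\hat z$ and $\hat k=\log\hat h$ recover the log-density and $\hat f=e^{\hat g}$ is read off along the pieces and at the nodes exactly as $z$ (resp.\ $h$) prescribes. (On a connected network with at least one edge each $h_v$ appears in some term $\lambda\abs{h_v-z_{e,v}}$, so the $h$-block is coercive and a minimizer in $h$ exists as well.)

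I expect the main obstacle to be conceptual rather than computational: the map $z=e^c$ is \emph{not} an objective-preserving substitution---it converts the exponential integration term into a \emph{linear} function of $z$, not the quadratic $\tfrac12 z^\top S z$ in \eqref{PrimalProb}---so one cannot argue by rewriting the objective. The correct and somewhat delicate point is that two genuinely different convex programs share the same stationarity system in exponential coordinates, because differentiating the exponential term reproduces the gradient of the quadratic. Getting the subdifferential calculus right at the nonsmooth points (ties among the $c_{e,i}$, or $h_v$ tied with a neighboring $z_{e,v}$) and verifying that the sign/ordering structure encoded by $C_1,C_2$ is invariant under the monotone map $t\mapsto e^t$ are the steps that require care; the remainder is the bookkeeping already displayed above.
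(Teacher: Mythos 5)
Your proposal matches the paper's own argument essentially step for step: the paper also starts from the reduced $(c,k)$ program, invokes \cite[Theorem 23.8]{rockafellar}, observes that the absolute-value subdifferentials depend only on orderings preserved by $t \mapsto e^{t}$ while $\partial_{c_{e,i}}\bigl(s_{e,i}e^{c_{e,i}}\bigr) = s_{e,i}z_{e,i}$ coincides with the gradient of $\tfrac{1}{2}z^{\top}Sz$, and concludes that the two optimality systems agree under $z=e^{c}$, $h=e^{k}$, with existence handled (there via strong duality for \eqref{PrimalProb}, here via your coercivity/strict-convexity remark) at a comparable level of rigor. Your added observations on uniqueness of the $z$-block and the non-objective-preserving nature of the substitution are consistent refinements, not a different route.
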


\begin{proof}
The proof follows directly from the line of reasoning before the theorem's statement. Details can be found in Appendix \ref{AppA}.
\end{proof}

\emph{Remark.} The condition on $\lambda$ is an important one. As discussed previously, when $\lambda < \frac{1}{2n}$ the total variation penalty is not strong enough to balance the likelihood term and minimizers of \eqref{fde} are degenerate. The almost surely condition is simply a requirement that no two observations occur at the same location, and no observations occur at nodes of the geometric network. With a slight modification of the assumption on $\lambda$, Theorem \ref{RestateThm} can be extended to the setting where multiple observations are allowed at a single location. This extension also allows observations to occur at nodes of the geometric network. In practice, this extension may be useful when dealing with imperfect data, though we will not focus on it here because it is a measure zero event in the density estimation paradigm. For completeness, we include the extension as Theorem \ref{ExtendThm} of Appendix \ref{AppA}.

Methods for computing solutions to the problem in Theorem \ref{RestateThm}--a total-variation regularized quadratic program--are well established. As in \cite{Kim}, we rely on solving the dual quadratic program. For convenience, we write the dual problem as a minimization instead of its typical maximum formulation

\begin{proposition}\label{Dual}
The dual problem to \eqref{PrimalProb} is 
\begin{align}\label{DualProb}
\min_{y} & \quad \frac{1}{2} y^{\top} D_{1} S^{-1} D_{1}^{\top} y + w^{\top} S^{-1} D_{1}^{\top} y \nonumber\\
& \quad \norm{y}_{\infty} \leq 1 \\
& \quad D_{2}^{\top} y = 0. \nonumber
\end{align} 
The primal solution $\hat{z}$ can be recovered from the dual $\hat{y}$ through the expression
$$\hat{z} = -S^{-1}(D_{1}^\top \hat{y} +w).$$
\end{proposition}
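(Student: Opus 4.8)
The plan is to derive \eqref{DualProb} by Lagrangian (Fenchel--Rockafellar) duality applied to \eqref{PrimalProb}. First I would split off the nonsmooth term by introducing an auxiliary variable $u = D_1 z + D_2 h$, so that \eqref{PrimalProb} becomes the problem of minimizing $\frac12 z^\top S z + w^\top z + \norm{u}_1$ subject to the single affine constraint $u = D_1 z + D_2 h$. Attaching a multiplier $y$ to this constraint gives the Lagrangian $L(z,h,u,y) = \frac12 z^\top S z + w^\top z + \norm{u}_1 + y^\top(D_1 z + D_2 h - u)$, and the dual function is $g(y) = \inf_{z,h,u} L(z,h,u,y)$.

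Next I would evaluate $g(y)$ block by block. Minimizing $\norm{u}_1 - y^\top u$ over $u$ gives $0$ when $\norm{y}_\infty \le 1$ and $-\infty$ otherwise --- this is exactly the conjugate of the $\ell_1$ norm, and it produces the constraint $\norm{y}_\infty\le 1$. Minimizing the $h$-linear term $y^\top D_2 h = (D_2^\top y)^\top h$ gives $0$ when $D_2^\top y = 0$ and $-\infty$ otherwise, producing the constraint $D_2^\top y = 0$. For the $z$-block, $\frac12 z^\top S z + (w + D_1^\top y)^\top z$ is a strictly convex quadratic: $S$ is the diagonal matrix of inter-observation segment lengths, and on the almost-sure event that no two observations coincide and none lands on a node, every such length is strictly positive, so $S$ is positive definite and invertible. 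Setting the gradient to zero yields $\hat z = -S^{-1}(w + D_1^\top y)$ with minimum value $-\frac12 (w + D_1^\top y)^\top S^{-1}(w + D_1^\top y)$. Expanding this, discarding the additive constant $-\frac12 w^\top S^{-1} w$ (which does not affect the argmin), and negating to turn the $\sup$ over $y$ into a $\min$, gives precisely the objective and constraints of \eqref{DualProb}.

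Then I would close the argument with strong duality and the recovery formula. The primal objective of \eqref{PrimalProb} is finite-valued and convex on all of $\R^{d}\times\R^{\abs{V}}$, and the only constraint introduced is the affine equality $u = D_1 z + D_2 h$, for which every $(z,h)$ is feasible with the matching $u$; hence there is no duality gap, and since the dual feasible region $\{y : \norm{y}_\infty\le 1,\ D_2^\top y = 0\}$ is compact a dual optimum $\hat y$ is attained. Because the $z$-block of $L(\cdot,\cdot,\cdot,\hat y)$ is strictly convex with unique minimizer $-S^{-1}(w + D_1^\top\hat y)$, any primal--dual optimal pair must have $\hat z = -S^{-1}(D_1^\top\hat y + w)$, which is the stated reconstruction; together with Theorem \ref{RestateThm} this identifies it as the FDE. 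As an alternative to Lagrangian duality one can invoke Sion's minimax theorem directly on $\sup_{\norm{y}_\infty\le 1}\inf_{z,h}$ of the saddle function $\frac12 z^\top S z + w^\top z + y^\top(D_1 z + D_2 h)$, the exchange being licensed by compactness of the $\ell_\infty$-ball in $y$.

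The main obstacle is bookkeeping rather than conceptual: confirming that $S$ is genuinely positive definite (hence invertible) under the ``almost surely'' hypothesis, and tracking the $h$ variable so that the constraint $D_2^\top y = 0$ emerges correctly while the $h$-direction --- along which the primal objective is not coercive --- does not drive the inner infimum to $-\infty$ on the dual-feasible set. A secondary check is simply verifying that the dropped constant $\frac12 w^\top S^{-1} w$ is genuinely immaterial, so that the displayed dual objective matches \eqref{DualProb} on the nose.
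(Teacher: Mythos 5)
Your proposal is correct and follows essentially the same route as the paper: introduce an auxiliary variable for $D_1 z + D_2 h$, form the Lagrangian, minimize blockwise (the $\ell_1$ conjugate giving $\norm{y}_\infty \le 1$, the $h$-block giving $D_2^\top y = 0$, and the $z$-block giving $\hat z = -S^{-1}(D_1^\top y + w)$ and the quadratic dual objective up to the immaterial constant $\tfrac12 w^\top S^{-1} w$). The only cosmetic difference is that the paper defers strong duality and attainment to a separate remark invoking extended linear-quadratic programming duality (Theorem 11.42 of Rockafellar--Wets), whereas you argue it inline.
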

A more general statement of Proposition \ref{Dual} which suits the more general statement of Theorem \ref{RestateThm}, can be found in Appendix \ref{AppA}.
It is worth noting that strong duality between the primal and dual problems in \eqref{PrimalProb} and \eqref{DualProb} follows immediately. Indeed, both are extended linear-quadratic programs in the sense of \cite{VaAn}. By Theorem 11.42 in \cite{VaAn}, strong duality holds, and in addition both the primal and dual problem attain their minimum values, respectively, if and only if \eqref{PrimalProb} is bounded. This is guaranteed by the assumption on $\lambda$ in Theorem \ref{RestateThm}. Furthermore, the fact that the minimum of \eqref{PrimalProb} is attained gives the existence of FDEs as asserted in Theorem \ref{RestateThm}.

\subsection{Additional Properties of FDEs}
In this section, we state a result on the local structure of an FDE and provide additional comments on its implementation details. The result is intuitive: along an edge, the value of piecewise constant segments is inversely related to the length of the segment, relative to adjacent segments. Since smaller segments suggest higher probability in the corresponding region, this property demonstrates local structure of the estimator which aligns with essential global behavior.

\begin{proposition}[Ordering Property] \label{OrderingProp}
Let $s_{e,i}$ and $s_{e,i+1}$ be the lengths of two segments interior to an edge $e$, in the sense that $2 \leq i \leq n_{e}-1$. Assume further that no two observations occur at the same location. Then $s_{e,i} \leq s_{e,i+1}$ implies that $\hat{z}_{e,i} \geq \hat{z}_{e,i+1}$. Similarly, $s_{e,i} \geq s_{e,i+1}$ implies $\hat{z}_{e,i} \leq \hat{z}_{e,i+1}$.
\end{proposition}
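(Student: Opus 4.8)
The plan is to read the statement off the first–order optimality conditions for the finite-dimensional program \eqref{PrimalProb}. Since the objective $\Phi(z,h)=\tfrac12 z^\top Sz+w^\top z+\norm{D_1z+D_2h}_1$ is convex with $S$ positive definite, its minimizer $(\hat z,\hat h)$ is characterized by $0\in S\hat z+w+D_1^\top\,\partial\norm{\cdot}_1(D_1\hat z+D_2\hat h)$; equivalently, there is a vector $g$ with entries in $[-1,1]$, equal to the sign of the corresponding coordinate of $D_1\hat z+D_2\hat h$ wherever that coordinate is nonzero, such that $S\hat z+w+D_1^\top g=0$. Writing $\mu:=\lambda-\tfrac1{2n}>0$, the point to notice is that for an interior segment (one with $2\le i\le n_e$) the only rows of $D_1$ meeting the coordinate $\hat z_{e,i}$ are the two within-edge difference rows for the pairs $(e,i-1)$--$(e,i)$ and $(e,i)$--$(e,i+1)$, each weighted by $\mu$ (no $\lambda$-weighted node row appears), and the corresponding entry of $w$ equals exactly $-\tfrac1n$.

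So first I would write down the two stationarity equations for the coordinates $\hat z_{e,i}$ and $\hat z_{e,i+1}$ — legitimate precisely because $2\le i\le n_e-1$ forces both of these segments, and their neighbours, to be interior. They take the form $s_{e,i}\hat z_{e,i}=\tfrac1n-\mu(\alpha+\sigma)$ and $s_{e,i+1}\hat z_{e,i+1}=\tfrac1n-\mu(-\sigma+\beta)$, where $\alpha,\beta\in[-1,1]$ are the (as yet unconstrained) subgradients coming from the pairs $(e,i-1)$--$(e,i)$ and $(e,i+1)$--$(e,i+2)$, while $\sigma\in[-1,1]$ is the \emph{single} subgradient of $|\hat z_{e,i}-\hat z_{e,i+1}|$, which must enter the two equations with opposite signs because it is one coordinate of $g$. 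Now suppose, aiming at the contrapositive of the first assertion, that $\hat z_{e,i}<\hat z_{e,i+1}$. Then $\sigma=-1$ is pinned, and the two equations immediately give $s_{e,i}\hat z_{e,i}=\tfrac1n+\mu(1-\alpha)\ge\tfrac1n$ and $s_{e,i+1}\hat z_{e,i+1}=\tfrac1n-\mu(1+\beta)\le\tfrac1n$, using only $\alpha,\beta\in[-1,1]$. Since segment lengths are positive under the no-ties hypothesis, the first inequality forces $\hat z_{e,i}>0$, hence $0<\hat z_{e,i}<\hat z_{e,i+1}$, and then the chain $s_{e,i+1}\hat z_{e,i+1}\le\tfrac1n\le s_{e,i}\hat z_{e,i}<s_{e,i}\hat z_{e,i+1}$ together with $\hat z_{e,i+1}>0$ yields $s_{e,i+1}<s_{e,i}$. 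That is exactly the contrapositive of ``$s_{e,i}\le s_{e,i+1}\Rightarrow\hat z_{e,i}\ge\hat z_{e,i+1}$'' (the equality case $\hat z_{e,i}=\hat z_{e,i+1}$ being consistent with the non-strict conclusion), and repeating the computation with the roles of $i$ and $i+1$ interchanged gives the second assertion.

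I expect the only real friction to be sign bookkeeping: checking that in $D_1^\top g$ the shared difference row contributes $-\mu\sigma$ to the equation for $\hat z_{e,i}$ and $+\mu\sigma$ to the one for $\hat z_{e,i+1}$ (independently of the orientation chosen for that row in $C_1$), and double-checking that an interior segment really carries weight $-\tfrac1n$ in $w$ rather than the boundary value $-\tfrac1{2n}$ and never a $\lambda$-weighted node term. Once the two equations are in hand the remainder is a one-line inequality chase, needing nothing beyond $|\sigma|,|\alpha|,|\beta|\le1$, positivity of the segment lengths, and the fact — which falls out of the argument itself — that the piecewise-constant values of the estimator are strictly positive.
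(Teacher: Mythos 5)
Your argument is correct and is essentially the paper's own proof: both write the subgradient optimality conditions of \eqref{PrimalProb} at the coordinates $(e,i)$ and $(e,i+1)$, pin the shared subgradient of $\abs{z_{e,i}-z_{e,i+1}}$ to $\pm 1$ under the assumption $\hat z_{e,i}<\hat z_{e,i+1}$, bound the remaining subgradients by $1$ in absolute value, and compare $s_{e,i}\hat z_{e,i}$ with $s_{e,i+1}\hat z_{e,i+1}$ (you state it as a contrapositive, the paper as a contradiction). One tiny slip --- the neighbouring segments $i-1$ and $i+2$ need not be interior when $i=2$ or $i=n_e-1$ --- is immaterial, since your equations only use that the rows touching columns $(e,i)$ and $(e,i+1)$ are $\mu$-weighted within-edge difference rows and that $w_{e,i}=w_{e,i+1}=-1/n$, which the hypothesis $2\le i\le n_e-1$ guarantees.
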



Up to this point in our analysis, we have discussed the computation of the FDE without consideration for preprocessing the data or postprocessing our resulting FDE. Since the computation and rates of convergence of the FDE represent the bulk of our contribution, we will maintain this perspective in the remainder of the paper. It is worth mentioning, however, that FDE is amenable to pre and postprocessing. Handling multiple observations at a single location in Theorem \ref{RestateThm} makes initial binning or minor discretizations of data (such as projecting observations \emph{onto} a geometric network) straightforward. Moreover, the FDE can be viewed exclusively as a method for generating adaptive bin widths, where the resulting bins can then be fit to the data as in a regular histogram. This approach performs model selection (via FDE) and model fit (via a post-selection MLE) of the histogram separately, and is common practice in model selection using lasso and related methods \cite{PostSelectLasso, GeneralPostSelect}. When FDE is used exclusively to find bins, it becomes a change point localization method, instead of a nonparametric density estimator as in its original formulation. Though FDE is amenable to these examples of pre and postprocessing, we will examine the FDE as a density estimator in the remaining sections.

We also make some suggestions into the selection of $\lambda$. The choice of $\lambda$ leads to a fixed number of piecewise constant portions of the fused density estimator. In this sense, the choice of the $\lambda$ parameter is analogous to choosing the number of bins in histogram estimation. One can tune this selection with information-criteria (IC) such as AIC or BIC by selecting the FDE over a grid of $\lambda$ values that minimizes the IC.
Each of these ICs requires the specification of the degrees of freedom, which can be set to the number of selected piecewise constant regions in the graph, as is done in the Gaussian case \cite{Sharpnack, tibshirani2012degrees}. Alternatively, one could use cross-validation as a selection criterion. Implementing cross-validation is often practical for large problems because the sparse QP in \eqref{DualProb} can be solved very quickly, as we will see in the next section.

\section{Experiments}
We have established a tractable formulation of the fused density estimator in \eqref{DualProb}. Quadratic programming is a mature technology, so computing FDEs via quadratic programming dramatically improves its computation. Quadratic program solvers designed to leverage sparsity in the $D_{1}$ and $D_{2}$ matrices allow the optimization portion of fused density estimation to scale to large networks and many observations.
 
In this section we compute FDEs on a number of synthetic and real-world examples.\footnote{These examples can be found at \href{https://github.com/rbassett3/FDE-Tools}{github.com/rbassett3/FDE-Tools}, which also includes a Python package for fused density estimation on geometric networks.} We evaluate the performance of different optimization methods and provide recommendations for solvers which implement those methods. To facilitate accessibility and customization of these tools, each of the solvers we consider is open source and compare favorably with commercial alternatives.

\subsection{Univariate Examples}

We first evaluate fused density estimators in the context of univariate density estimation--where the geometric network $G$ is simply a single edge connecting two nodes. The operator $D_1 + D_2$ is especially simple in this setting, corresponding to an oriented edge-incidence matrix of a chain graph. 
Figure \ref{fig:UniImage} contains fused density estimators of the standard normal, exponential, and uniform densities, each derived from $100$ sample points. The $\lambda$ parameter in these experiments was selected by $20$ fold cross-validation.

\begin{figure}[ht!]
    \centering
    \begin{subfigure}[b]{0.32\textwidth}
        \includegraphics[width=\textwidth]{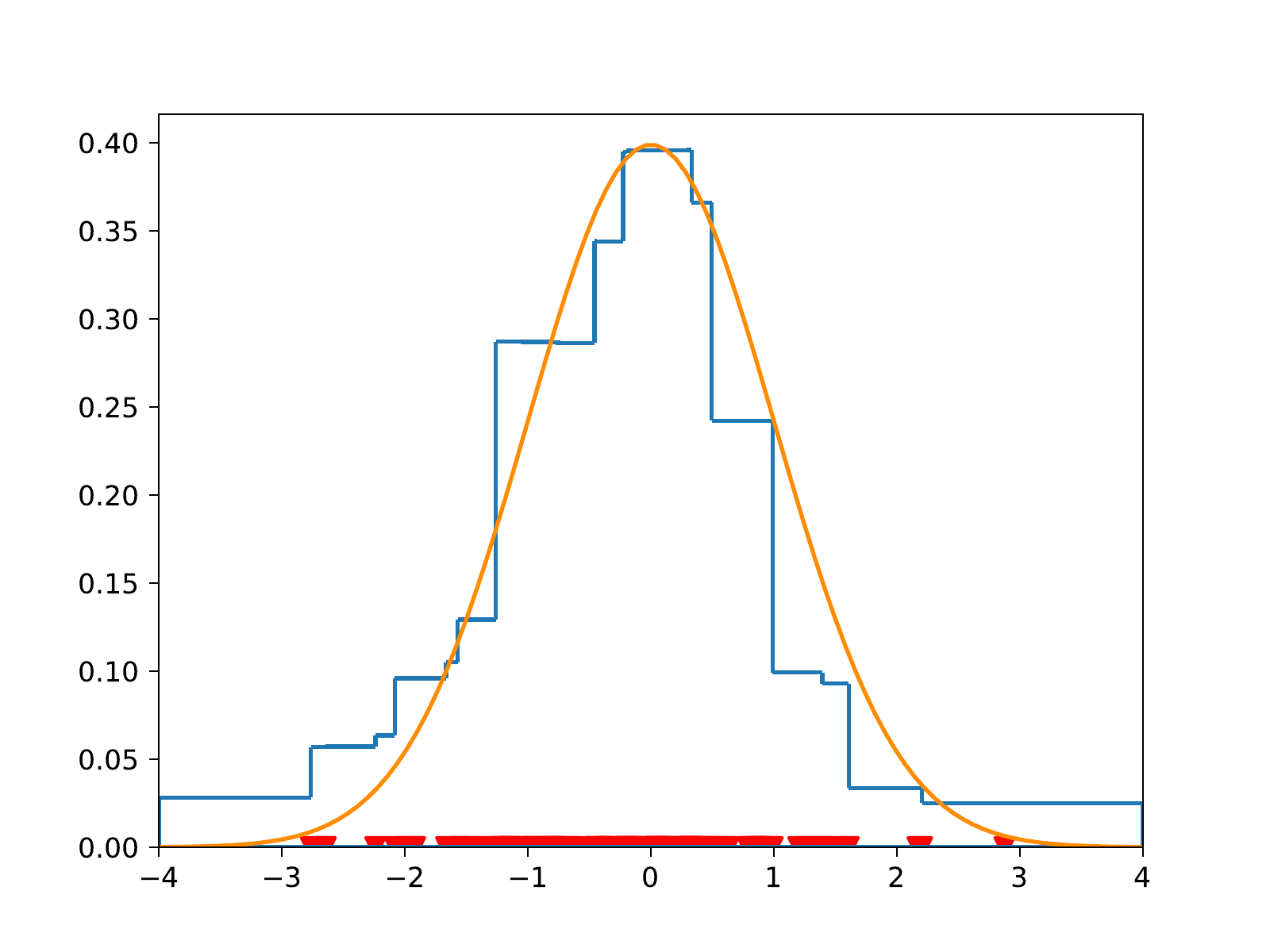}
    \end{subfigure}
    \begin{subfigure}[b]{0.32\textwidth}
        \includegraphics[width=\textwidth]{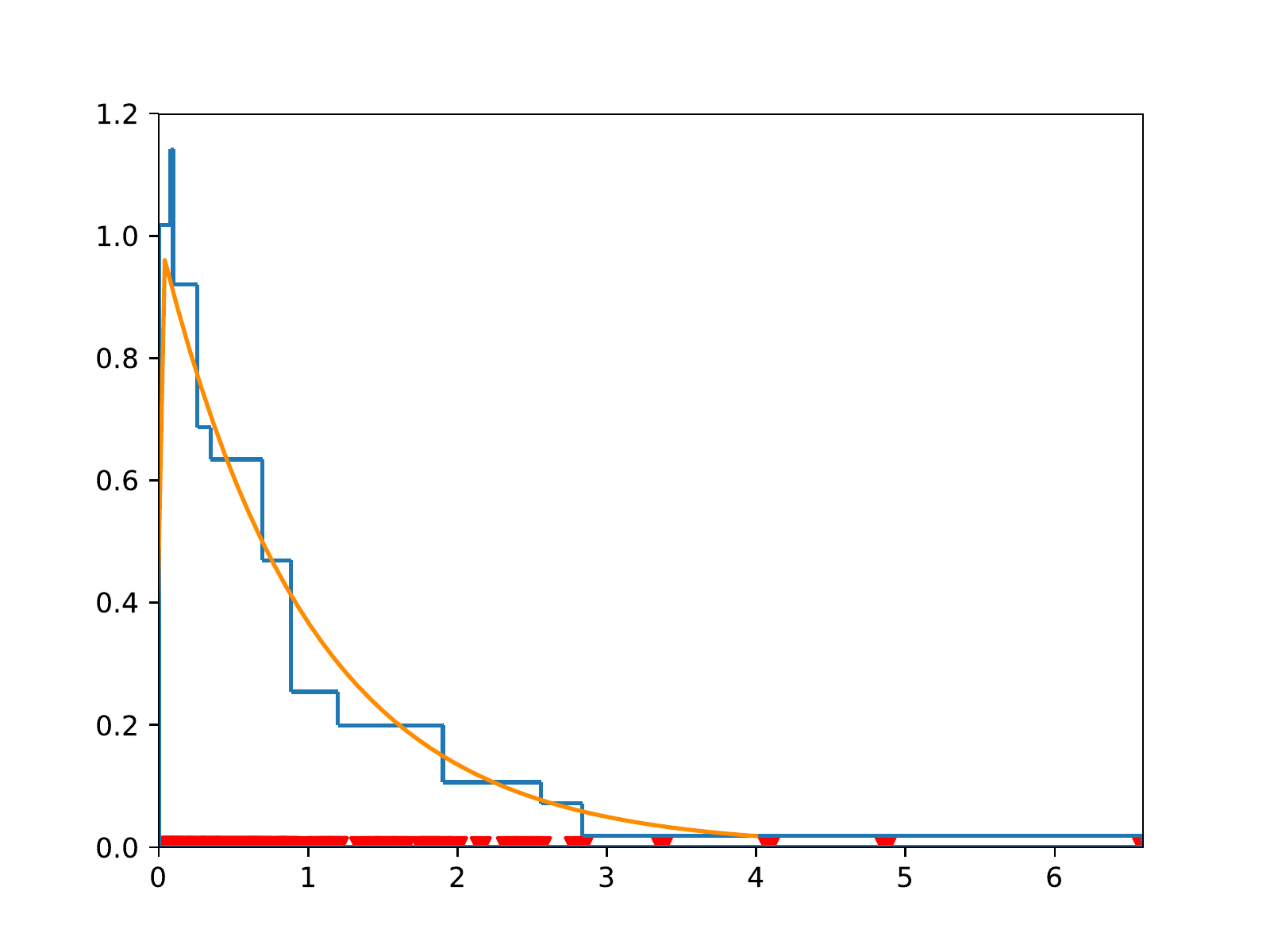}
    \end{subfigure}
    \begin{subfigure}[b]{0.32\textwidth}
        \includegraphics[width=\textwidth]{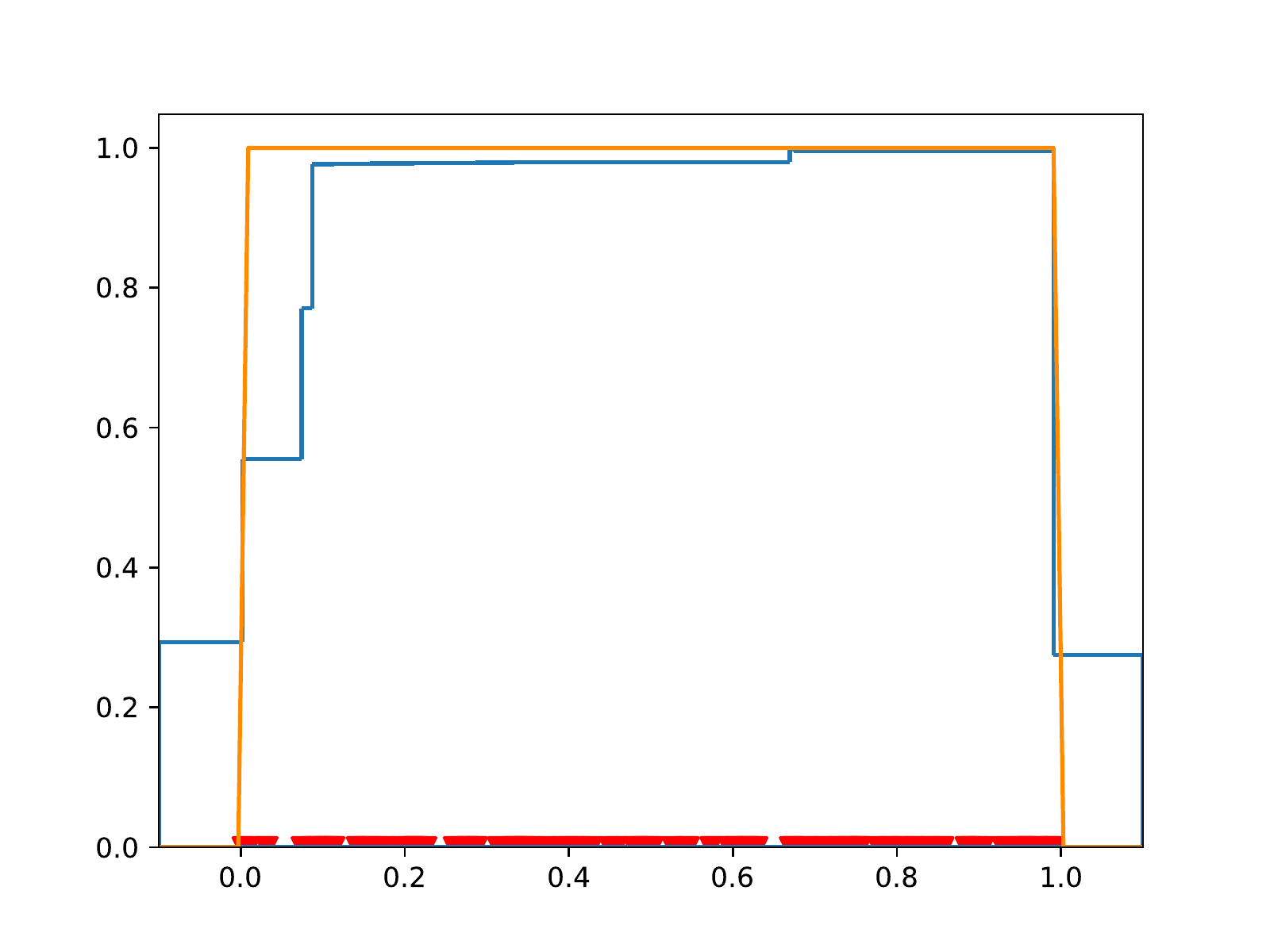}
    \end{subfigure}
    \caption{Univariate densities and fused density estimators}\label{fig:UniImage}
\end{figure}

\subsection{Geometric Network Examples}

We next evaluate FDEs on geometric networks. For each of these examples, the underlying geometric network is extracted from OpenStreetMap (OSM) database \cite{OSM}.
Figure \ref{fig:Baghdad} is a fused density estimator with domain taken to be the road network in a region of the city of Baghdad. Observations are the locations of terrorist incidents which occurred in this region from 2013 to 2016, according to the Global Terrorism Database \cite{GTD}. The density we attempt to infer is the distribution for the location of terrorist attacks in this region of the city.

\begin{figure}[ht!]
\centering
\includegraphics[width = .9\textwidth]{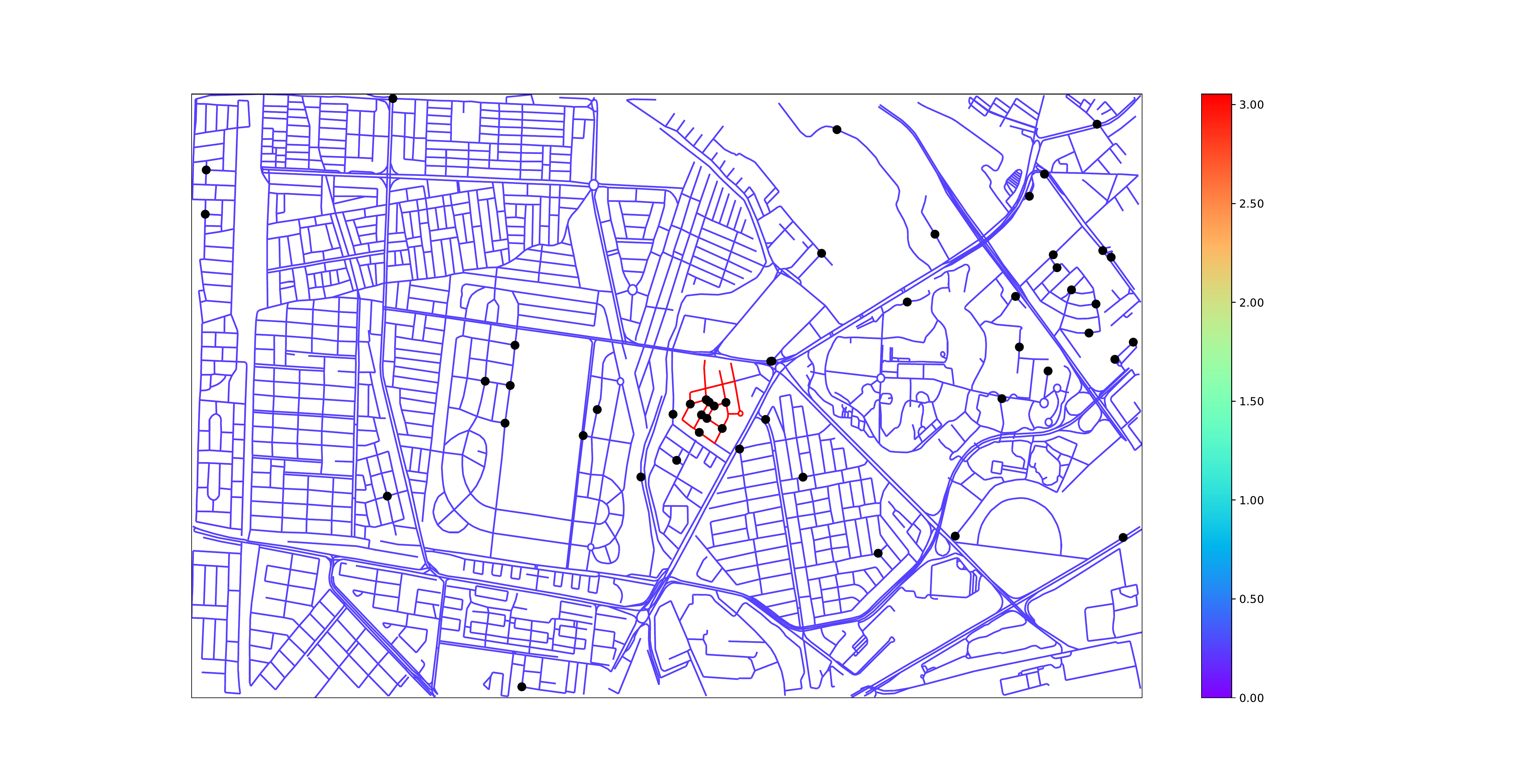}
\vspace{-.8cm}
\caption{An FDE for the location of terrorist attacks in a neighborhood of Baghdad. The detected hotspot contains the streets and alleys near a hospital.}
\label{fig:Baghdad}
\end{figure}

\begin{figure}[ht!]
\centering
\includegraphics[width = .9\textwidth]{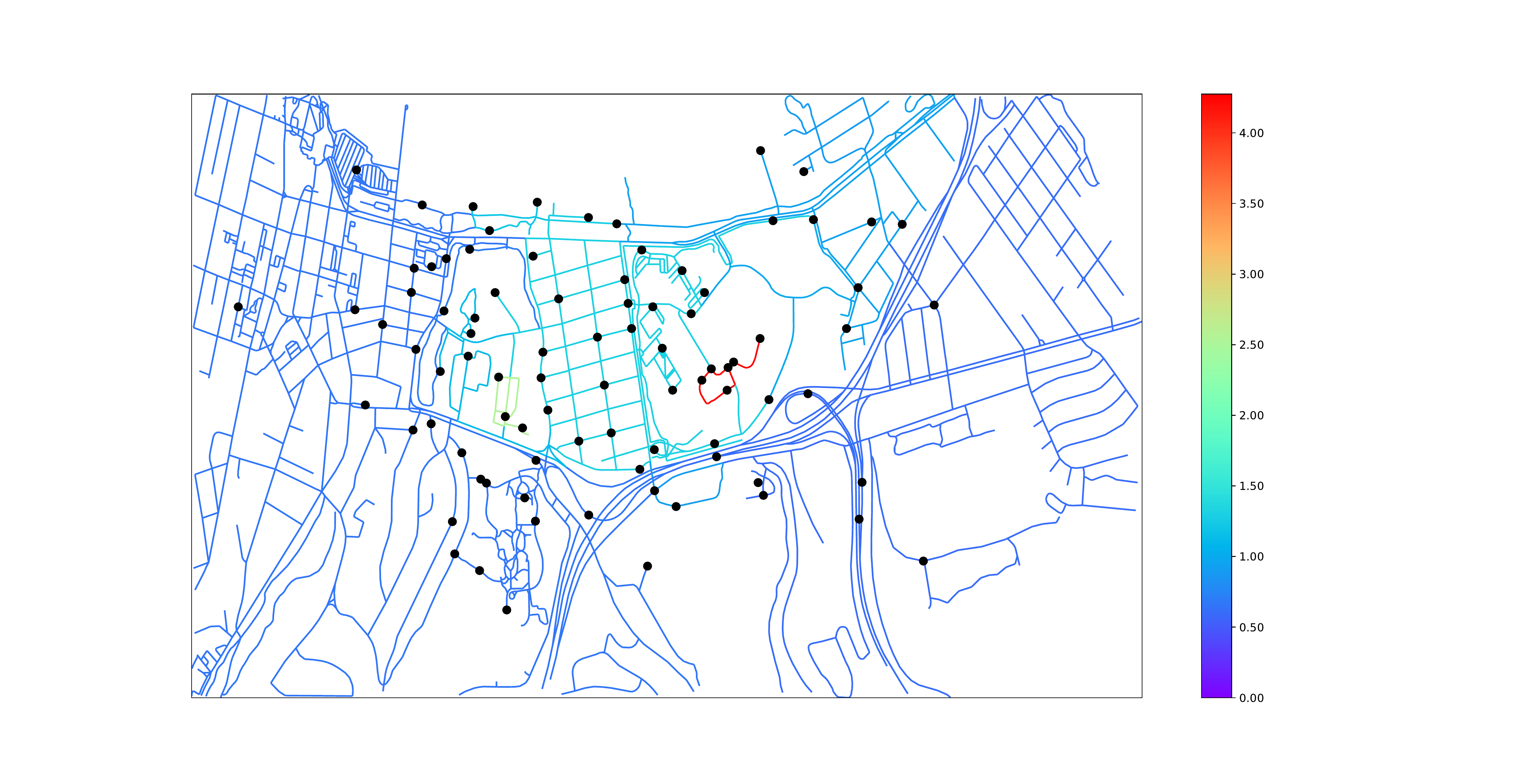}
\vspace{-.8cm}
\caption{A fused density estimator for artificial observations on Monterey's road network}
\label{fig:Monterey}
\end{figure}



Figure \ref{fig:Monterey} is an FDE on the road network in Monterey, California. The observations were generated according to a multivariate normal distribution, and projected onto the nearest waypoint in the OpenStreetMap dataset.
These examples of FDEs on geometric networks illustrate some important properties of the estimator. The FDEs clearly respect the network topology. This is most obviously demonstrated in the Monterey example, where the red and light green regions, which correspond to elevated portions of the density, are chosen to be sparsely connected regions of the network. This is intuitive because the sparsely connected regions impact the fusion penalty less severely than a highly connected region, but it is one way that FDEs reflect the underlying network structure. The Baghdad and Davis examples demonstrate that FDEs can also be used for hot spot localization, and especially in low-data circumstances. Lastly, we note that FDEs partition the geometric network into level sets, thereby forming various regions of the network into clusters. This clustering is an interesting aspect of FDEs, and suggests they could be used to classify regions into areas of high and low priority.

\subsection{Algorithmic Concerns}
The two most prevalent methods for solving sparse quadratic programs are interior point algorithms and the alternating direction method of multipliers. Interior point methods to solve problems of the form \eqref{PrimalProb} were introduced by \cite{Kim}. Interior point approaches have the benefit of requiring few iterations for convergence. The cost per iteration, however, depends crucially on the structure of $D_1$ and $D_2$ when performing a Newton step on the relaxed KKT system. In the case of univariate fused density estimators, the Newton step requires inversion of a banded matrix, one which has its nonzero elements concentrated along the diagonal. Leveraging the banded structure allows inversion to be performed in linear time, which is crucial to the performance of the algorithm. For further details of interior point methods, we refer the reader to \cite{WrightIP, ConvOpt, Nesterov}.

The alternating direction method of multipliers proceeds by forming an augmented lagrangian function and updating the primal and dual variables sequentially. More details can be found in \cite{ADMM, Bertsekas}. Compared to interior point methods, convergence of ADMM usually requires more iterations of a less-expensive update, whereas interior point methods converge in fewer iterations but require a more expensive update.
In this section, we compare the performance of these algorithms on fused density estimation problems. A comparison between the methods on the related problem of trend-filtering can be found in \cite{Sharpnack}, where the algorithmic preferences pertained only to the $2$x$2$ grid graph setting. Their results favor the ADMM approach, though the regularity of this graph structure makes generalizing to general graphs difficult.

For software, we use the Operator Splitting Quadratic Program (OSQP) solver and CVXOPT. These are mature sparse QP solvers that use ADMM and interior point algorithms, respectively. They are both open source, and compare favorably to commercial solvers \cite{OSQP, QPExper}. Our choice to use these solvers instead of custom implementations reflects that (i) these tools are representative of what is available in practice (ii) outsourcing this portion to other solvers reduces the ability for subtle differences in implementation to favor one method over the other (iii) these projects are production-quality, so their implementations are likely to be of higher quality than custom implementations.
We first compare ADMM and interior point methods on univariate fused density estimator problems. We perform 200 simulations, sampling 100 data points from each distribution. We let $\lambda$ range from $0.006$ to $0.1$. These choices correspond to the lower bound on the $\lambda$ parameter in Theorem \ref{RestateThm} and an upper bound which selects a constant or near-constant density. We report in-solver time, in seconds, and do not include the time required to convert to the sparse formats required for each solver. 

\begin{table}[ht]
\caption{Mean and standard deviation of run time (s) for univariate OSQP experiments}
\centering
\begin{tabular}{p{3cm}  ccc}
\toprule
& \multicolumn{3}{c}{$\lambda$}\\
\cmidrule{2-4} 
\textbf{Density} & $0.006$ & $0.05$ & $0.1$ \\
\cmidrule{1-4} Exponential & $0.0361 \pm 0.1310$ & $0.0051 \pm 0.0043$ & $0.0045 \pm 0.0045$ \\
Normal & $0.0209 \pm 0.0912$ & $0.0112 \pm 0.0569$ & $0.0052 \pm 0.0046$ \\
Uniform & $0.0269 \pm 0.1077$ & $0.0769 \pm 0.0565$ & $0.0074 \pm 0.0412$ \\ \bottomrule
\end{tabular}
\end{table}

\begin{table}[ht]
\caption{Mean and standard deviation of run time (s) for univariate CVXOPT experiments}
\centering
\begin{tabular}{p{3cm} ccc}
\toprule
& \multicolumn{3}{c}{$\lambda$}\\
\cmidrule{2-4} 
\textbf{Density} & $0.006$ & $0.05$ & $0.1$ \\
\cmidrule{1-4} 
Exponential & $0.0087 \pm 0.0012$ & $0.0069 \pm 0.0008$ & $0.0078 \pm 0.0011$ \\
Normal & $0.0086 \pm 0.0012$ & $0.0071 \pm 0.0009$ & $0.0076 \pm 0.0010$ \\
Uniform & $0.0087 \pm 0.0010$ & $0.0065 \pm 0.0008$ & $0.0061 \pm 0.0008$ \\
\bottomrule
\end{tabular}
\end{table}

In these experiments, interior point terminated in around $10$ iterations. The number of iterations in ADMM were less consistent, ranging from a few hundred to a few thousand.

For the geometric network case, we performed experiments using four examples: the San Diego and Baghdad datasets from figures \ref{fig:SanDiego} and \ref{fig:Baghdad}, in addition to similar datasets in Davis, California. One of these is a fused density estimator with domain as the road network in downtown Davis, and the other is on the \emph{entire city} of Davis--our largest example in this paper--which has 19000 variables and 25000 constraints in the dual formulation \eqref{DualProb}. We choose $\lambda$ in a range that progresses from overfitting to underfitting the data. By overfit, we mean that we choose $\lambda$ as small as possible to make the fused density estimator problem still feasible. By underfit, we mean that the fused density estimator is a constant function. We record `-' when a solver does not run to successful completion. All experiments were run on a computer with 8 GB of memory, an intel processor with four cores at 2.50 GHz, and a 64-bit linux operating system.

\begin{table}[ht]
\caption{OSQP run times (s) for geometric network examples}
\centering
\begin{tabular}{p{4cm} ccc}
\toprule
& \multicolumn{3}{c}{$\lambda$ parameter}\\
\cmidrule{2-4} 
\textbf{Example} & Overfit & Middle & Underfit \\
\cmidrule{1-4} Baghdad & $0.1086$ & $0.0686$ & $0.0639$ \\ 
San Diego & $0.0920$ & $0.0961$ & $0.0628$ \\ 
Downtown Davis & $0.0269$ & $0.0769$ & $0.0074$ \\ 
Davis & $12.0698$ & $0.8539$ & $0.6052$\\ 
\bottomrule
\end{tabular}
\end{table}

\begin{table}[ht]
\caption{CVXOPT run times (s) for geometric network examples}
\centering
\begin{tabular}{p{4cm} ccc}
\toprule
& \multicolumn{3}{c}{$\lambda$ parameter}\\
\cmidrule{2-4} 
\textbf{Example} & Overfit & Middle & Underfit \\
\cmidrule{1-4} Baghdad & $1.5493$ & $1.2813$ & $1.1268$ \\
San Diego & $0.5507$ & $0.4956$ & $0.3256$ \\
Downtown Davis & $0.0812$ & $0.5615$ & $0.4864$ \\
Davis & - & $13.4456$ & $13.3911$\\ 
\bottomrule
\end{tabular}
\end{table}

From these experiments we see that the augmented lagrangian method outperforms interior point on the geometric network examples. The lack of regularity in the matrices $D_1$ and $D_2$, and the large-scale matrix factorizations associated with Newton limits this method in comparison to ADMM. On smaller, well-structured problems, like in the univariate examples, interior point methods are often faster. On these well-structured problems, however, the gain in performance is negligible (on the order of a tenth of a second). On the other hand, the speed and versatility of OSQP, especially in the context of large, irregular network structure, leads us to recommend ADMM as the method to solve the fused density estimator problem in \eqref{DualProb}. This supports the suggestion of using ADMM for trend-filtering in \cite{Sharpnack}, and extends their recommendation beyond the $2 \times 2$ grid graph.

\section{Statistical Rates} \label{StatRates}

In this section we prove a squared Hellinger rate of convergence for fused density estimation when the true log-density is of bounded variation.
Hellinger distance is defined as
$$
h^2(f, f_0) = \frac{1}{2} \int_G (\sqrt f - \sqrt{f_0} )^2 \, dx,
$$
where $dx$ is the base measure over the edges in the geometric network $G$; in the univariate setting, this is just the Lebesgue measure. The factor of $\frac{1}{2}$ is a convention that ensures that the Hellinger distance is bounded above by $1$. The Hellinger distance is a natural choice for quantifying rates of convergence for density estimators because it is tractable for product measures and provides bounds for rates in other metrics \cite{LeCam73, LeCamBook, Gibbs}. The squared Hellinger risk of an estimator $\tilde{f}$ for $f_{0}$ is
$\mathbb{E}[h^2(\tilde{f}, f_{0})]$. The minimax squared Hellinger risk over a set of densities $\mathcal{H}$, for a sample size $n$ is
$$\min_{\tilde{f}} \max_{f \in \mathcal{H}} \mathbb{E}_{f}[h^2(\tilde{f}, f)].$$
The minimum is over all estimators $\tilde{f}$ which are measurable maps from the sample space of $x_1,...,x_n$ to $\mathcal{H}$.

We find fused density estimation achieves a rate of convergence in squared Hellinger risk which matches the minimax rate over all univariate densities in $\mathcal{F}$--densities of log-bounded variation where the underlying geometric network is simply a compact interval. In this sense, univariate FDE has the best possible squared Hellinger rate of convergence over this function class. The rate we attain is $n^{-2/3}$, and the equivalence of rates is asymptotic. On an arbitrary connected geometric network, minimax rates for density estimation can depend on the network, but our results demonstrate that FDE on a geometric network has squared Hellinger rate at most the univariate minimax rate. 

We begin by establishing the minimax rate over the class $\mathcal{F}$, which gives a lower bound on the squared Hellinger rate for fused density estimation. To establish the lower bound, it is sufficient to examine the minimax rate of convergence over a set of densities contained in $\mathcal{F}$. Fixing a constant $C$ and compact interval $I$, we consider the set of functions $g: I \to \R$
$$\BV(C) := \{g: \TV(g) \leq C, \norm{g}_{\infty} < C\}.$$ 
Recall that, for a given radius $\epsilon$, the packing entropy of a set $S$ with respect to a metric $d: S \times S \to \R_{+}$ is the logarithm of its packing number, the size of the largest collection of points in $S$ which are at least $\epsilon$-separated with respect to the metric $d$. Because $\BV(C)$ is bounded below, the packing entropy of $\BV(C)$ and $\widetilde{\BV}(C) := \{\exp(g) : g \in \BV(C)\}$ are of the same order. From Example 6.4 in \cite{YangBarron}, we have that $\BV(C)$ has $L_2$ packing entropy of order $\frac{1}{\epsilon}$.
Applying Theorem 5 from \cite{YangBarron} gives the minimax squared Hellinger rate over densities $\{\frac{f}{\int f} : f \in \widetilde{\BV}(C) \}$ as $n^{-2/3}$. In Theorem \ref{RateResult}, we show that the FDE attains the rate of $n^{-2/3}$ over the larger class $\cF$. Therefore, the minimax squared Hellinger rate over $\cF$ must also equal $n^{-2/3}$, so we have proven the following theorem. For sequences $a_{n}$ and $b_{n}$, we write $a_n \asymp b_n$ if $a_n = O(b_n)$ and $b_n = O(a_n)$.

\begin{theorem} \label{LowerBound} 
The minimax squared Hellinger rate over $\cF$, the set of densities $f$ with $\log f$ of bounded variation, is $n^{-2/3}$. That is,
$$\min_{\tilde{f}} \max_{f \in \mathcal{F}} \mathbb{E}_{f}[h^2(\tilde{f},f)] \asymp n^{-2/3}.$$
\end{theorem}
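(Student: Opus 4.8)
The plan is to sandwich the minimax squared Hellinger risk over $\cF$ between matching bounds of order $n^{-2/3}$, with the upper bound coming essentially for free and the real work being the lower bound. For the upper bound, note that the fused density estimator $\hat f$ is itself a legitimate estimator taking values in $\cF$, and Theorem \ref{RateResult} asserts $\sup_{f \in \cF} \E_f[h^2(\hat f, f)] = O(n^{-2/3})$. Since the minimax risk is an infimum over estimators, $\min_{\tilde f}\max_{f\in\cF}\E_f[h^2(\tilde f,f)] \le \max_{f\in\cF}\E_f[h^2(\hat f,f)] = O(n^{-2/3})$.

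For the lower bound, I would restrict attention to a convenient subfamily of $\cF$ and invoke a generic metric-entropy-to-minimax conversion. Fix a constant $C$ and a compact interval $I$, and consider $\widetilde{\BV}(C) = \{\exp(g) : g \in \BV(C)\}$ together with its normalization $\cH_C := \{\, f / \int_I f : f \in \widetilde{\BV}(C)\,\} \subseteq \cF$. Because $\BV(C)$ is uniformly bounded in both $\norm{\cdot}_\infty$ and total variation, the map $\exp$ is bi-Lipschitz on $\BV(C)$ and the normalizing constant $\int_I f$ is bounded above and away from zero uniformly over $\widetilde{\BV}(C)$; hence the $L_2$ packing entropy of $\cH_C$ is of the same order as that of $\BV(C)$, which by Example 6.4 of \cite{YangBarron} is of order $1/\epsilon$. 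Since $\cH_C$ is a uniformly bounded class of densities bounded away from zero, Hellinger distance on $\cH_C$ is comparable to $L_2$ distance, so its Hellinger metric entropy is also of order $1/\epsilon$.

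Applying Theorem 5 of \cite{YangBarron} with entropy of order $\epsilon^{-1}$, the critical radius $\epsilon_n$ solves the balance $n\epsilon_n^2 \asymp \epsilon_n^{-1}$, giving $\epsilon_n \asymp n^{-1/3}$ and therefore a minimax squared Hellinger risk of order $\epsilon_n^2 \asymp n^{-2/3}$ over $\cH_C$. As $\cH_C \subseteq \cF$, restricting any estimator to $\cF$ can only raise the worst-case risk, so $\min_{\tilde f}\max_{f\in\cF}\E_f[h^2(\tilde f,f)]$ is bounded below by a constant multiple of $n^{-2/3}$. Combining with the upper bound yields $\min_{\tilde f}\max_{f\in\cF}\E_f[h^2(\tilde f,f)] \asymp n^{-2/3}$.

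The main technical obstacle I anticipate is the bookkeeping involved in passing from the function class $\BV(C)$ to the \emph{density} class $\cH_C$: one must verify that both the exponential map and the subsequent normalization are bi-Lipschitz with constants depending only on $C$ (so that packing entropy orders are genuinely preserved), and that the two-sided comparison between Hellinger and $L_2$ distances — needed to quote Example 6.4 and to run Theorem 5 of \cite{YangBarron} — indeed holds on the normalized class $\cH_C$ and not merely on the unnormalized set $\widetilde{\BV}(C)$. The only other ingredient, the $O(n^{-2/3})$ upper bound for the FDE, is deferred entirely to Theorem \ref{RateResult}.
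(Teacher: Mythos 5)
Your proposal is correct and follows essentially the same route as the paper: the upper bound is deferred to Theorem \ref{RateResult}, and the lower bound restricts to the normalized exponentials of $\BV(C)$, quotes Example 6.4 of \cite{YangBarron} for $L_2$ packing entropy of order $1/\epsilon$, and applies Theorem 5 of \cite{YangBarron} to get the $n^{-2/3}$ rate over that subfamily of $\cF$. Your anticipated ``technical obstacle'' (checking that exponentiation and normalization preserve entropy order and that Hellinger and $L_2$ are comparable on the normalized class) is exactly what the paper dispatches with its one-line remark that $\BV(C)$ is bounded below, so no new idea is needed.
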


To prove the FDE rate of convergence for univariate density estimation, we extend techniques developed for the theory of M-estimators, \cite{vandeGeer}, and locally-adaptive regression splines in Gaussian models, \cite{MammenvandeGeer}.
A detailed proof of our main result can be found in Appendix \ref{AppA}.
This rate bound for FDE is based on novel empirical process bounds for log-densities of bounded variation, and these are used in conjunction with peeling arguments to provide a uniform bound on the Hellinger error.
The empirical process bounds in \ref{subsec2} rely on new Bernstein difference metric covering number bounds for functions of bounded variation, which can be found in Appendix \ref{AppB}.
We extend the FDE rates for the univariate setting to arbitrary geometric networks in section \ref{subsec3}; this requires embedding the geometric network onto the real line. This embedding is constructed from the depth-first search algorithm, a technique used in \cite{padilla} for regression over graphs, and is described in Appendix \ref{AppA}.

The subsections in this section follow this outline:
In subsection \ref{subsec1}, we provide a proof sketch of the squared Hellinger rate of convergence for the univariate FDE.
In subsection \ref{subsec2}, we detail the lemmas used to prove the main result.
In subsection \ref{subsec3}, we extend these rate results from the univariate setting to arbitrary geometric networks.

\subsection{Upper Bounds for Rate of Univariate FDE} \label{subsec1}

In this subsection we prove a squared Hellinger rate of $n^{-2/3}$ for univariate fused density estimation. Let the geometric network $G$ be a closed interval $[a,b]$ (a single edge connecting nodes $a$ and $b$). Recall the definition of $\mathcal{F}$ as the set of densities $f$ with $\log f$ of bounded variation. Let $f_{0} \in \mathcal{F}$ be a fixed density on $G$, so that the total variation   $\TV(\log f_0)$ is constant as $n$ increases.

\begin{theorem}\label{RateResult}
Let $\hat{f}_n$ be the fused density estimator of an iid sample of $n$ points drawn from a univariate density $f_0$.
There is an $f_0$-dependent sequence $\lambda_n$ such that $\lambda_n = O_P (n^{-2/3})$, the FDE is well defined, and
$$\mathbb{E}_{f_{0}}[h^{2}(\hat{f}_n,f_{0})] = O(n^{-2/3}).$$ 
Combined with the lower bound in Theorem \ref{LowerBound}, this gives that univariate fused density estimation attains the minimax rate over densities in $\mathcal{F}$.
\end{theorem}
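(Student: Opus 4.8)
The plan is to follow the empirical-process approach to penalized maximum-likelihood M-estimation of van de Geer \cite{vandeGeer}, adapted to the total-variation penalty in the spirit of the locally-adaptive regression spline analysis of \cite{MammenvandeGeer}. Write $\hat g = \log\hat f_n$ and $g_0 = \log f_0$, and use the unconstrained formulation \eqref{UnConProb} afforded by Lemma \ref{UnCon}: since the integral constraint is automatically satisfied at the minimizer, comparing the objective at $\hat g$ with its value at $g_0$ gives the basic inequality
$$ P_n\!\left(\log\frac{\hat f_n}{f_0}\right) \;\ge\; \lambda_n\bigl(\TV(\hat g) - \TV(g_0)\bigr). $$
I would first convexify by setting $\bar f = \tfrac12(\hat f_n + f_0)$. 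Then $\bar f \ge f_0/2$, so $\log(\bar f/f_0) \ge -\log 2$ is bounded below; $\TV(\log\bar f) \le \TV(\hat g) + \TV(g_0)$ by the log-sum-exp convexity estimate; and $h^2(\bar f, f_0) \ge \tfrac1{16}h^2(\hat f_n, f_0)$ by a pointwise comparison of $\sqrt{\bar f} - \sqrt{f_0}$ with $\sqrt{\hat f_n} - \sqrt{f_0}$. Concavity of $\log$ gives $P_n\log(\bar f/f_0) \ge \tfrac12 P_n\log(\hat f_n/f_0)$, and splitting $P_n = (P_n - P_0) + P_0$ together with $P_0\log(\bar f/f_0) = -K(f_0\|\bar f) \le -2h^2(\bar f, f_0)$ (Kullback--Leibler dominating squared Hellinger) converts the basic inequality into
$$ c\,h^2(\hat f_n, f_0) + \lambda_n \TV(\hat g) \;\le\; (P_n - P_0)\!\left(\log\frac{\bar f}{f_0}\right) + \lambda_n \TV(g_0) $$
for an absolute constant $c>0$.

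The core of the proof is a uniform bound on the empirical-process term. For radii $R, r>0$, the log-ratios $\{\log(\bar f/f_0) : \TV(\hat g)\le R,\ h(\hat f_n,f_0)\le r\}$ lie in a $\TV$-ball of radius $O(R+\TV(g_0))$ with sup-norm of the same order, whose covering numbers in the Bernstein difference metric I would control using Appendix \ref{AppB}: bounded-variation functions on an interval have $\epsilon$-entropy of order $1/\epsilon$, the borderline ``$\gamma = 1$'' regime in which the entropy integral $\int_0^{\delta}\sqrt{H(u)}\,du \asymp \sqrt{\delta}$ still converges. Feeding this into a maximal inequality of van de Geer type, while tracking the Bernstein variance proxy (which scales like $h^2$), produces a modulus of continuity roughly of the form $n^{-1/2}\sqrt{R}\,\bigl(h(\hat f_n,f_0) + \rho_n\bigr)^{1/2}$ on a high-probability event, where $\rho_n^2 \asymp n^{-2/3}$ solves the self-consistent rate equation $n^{1/2}\rho_n^2 \asymp \sqrt{\rho_n}$. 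Because $\TV(\hat g)$ is random and a priori unbounded, I would slice over dyadic ranges of $\TV(\hat g)$ and of $h(\hat f_n,f_0)$ (a peeling argument), on each slice matching the $\sqrt{R}$ growth of the modulus against the $\lambda_n\TV(\hat g)$ term on the left of the displayed inequality via a weighted AM--GM split.

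With $\lambda_n$ chosen as a fixed multiple of the realized supremum defining this modulus at unit $\TV$-scale --- so that the concentration event holds by construction and standard Bernstein/Talagrand concentration gives $\lambda_n = O_P(n^{-2/3})$ (in particular $\lambda_n > 1/2n$ eventually, so the FDE is well defined by Theorem \ref{RestateThm}) --- the $\TV(\hat g)$ terms cancel and the inequality collapses to $c\,h^2(\hat f_n,f_0) \le C\bigl(\rho_n^2 + \lambda_n\TV(g_0)\bigr) = O_P(n^{-2/3})$. Finally, since $h^2 \le 1$ always, $\mathbb{E}_{f_0}[h^2(\hat f_n,f_0)]$ splits over the good event and its complement; the complement contributes at most its probability, which is of smaller order than $n^{-2/3}$, yielding the claimed expectation bound. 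I expect the main obstacle to be the empirical-process step: establishing the Bernstein-metric covering numbers for BV classes and threading them through a maximal inequality and peeling scheme so that (i) the unknown $\TV(\hat g)$ is genuinely absorbed by the penalty rather than assumed bounded, and (ii) the interplay between the Bernstein variance term ($\sim h^2$) and the $\gamma = 1$ entropy yields exactly the exponent $2/3$; by comparison the $\bar f$-convexification and the passage from $O_P$ to expectation are routine.
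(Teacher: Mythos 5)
Your route is essentially the paper's own: the basic inequality built from the convexified log-ratio $p_f=\tfrac12\log\frac{f+f_0}{2f_0}$ (Lemma \ref{SketchBasic}), Bernstein-difference bracketing entropy of order $M/\delta$ for the bounded-variation class (Appendix \ref{AppB}), a van de Geer--type maximal inequality, double peeling in the total-variation radius and the Hellinger radius (this is exactly how Theorems \ref{InCor} and \ref{OutThm} are proved), and a data-dependent $\lambda_n$ chosen to dominate the small-Hellinger modulus so that the penalty absorbs $\TV(\log\hat f_n)$. The skeleton and the identification of the $2/3$ exponent from the $1/\epsilon$-entropy regime are correct.

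Two concrete soft spots remain. First, taking $\lambda_n$ to be ``a fixed multiple of the realized supremum'' yields only an upper bound $\lambda_n=O_P(n^{-2/3})$; it gives no lower bound, so your parenthetical claim that $\lambda_n>1/(2n)$ eventually (needed for the FDE to exist via Theorem \ref{RestateThm}) does not follow from the construction, and the large-Hellinger branch of the analysis also uses a bound of the form $\lambda_n n^{2/3}\ge 1$. The paper repairs this by defining $\lambda_n$ as the maximum of the realized supremum and $n^{-2/3}$. Second, and more substantively, the passage from the probabilistic bound to the risk bound is not routine in the form you describe. With a single good event whose complement is to have probability of smaller order than $n^{-2/3}$, the exponential tails force the deviation level defining the good event to grow like $\log n$, which inflates the good-event bound on $h^2(\hat f_n,f_0)$ to $O(n^{-2/3}\log n)$; conversely, a fixed-confidence good event leaves a constant-order contribution from its complement (since you only bound $h^2\le 1$ there). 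Either way the stated $O(n^{-2/3})$ risk is not reached by that splitting. The paper instead integrates the exponential tail bounds for $n^{2/3}h^2(\hat f_n,f_0)$ directly, writing the risk as $\int_0^\infty \bP\bigl(n^{2/3}h^2(\hat f_n,f_0)\mathbbm{1}_{\{\cdot\}}\ge u\bigr)\,du$ and choosing the thresholds ($L$, $M$, $K$ in the proof of Theorem \ref{RateResult}) as functions of $u$ so the integrand decays exponentially in $u$. Your concentration bounds supply exactly the needed tails, but the event-splitting argument as written must be replaced by this tail-integration step.
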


\emph{Proof Sketch} (Detailed proof in Appendix \ref{AppA}).

In order to control the Hellinger error for FDE, we rely on the fact that the FDE is the minimizer of \eqref{fde}.
We derive an inequality involving the squared Hellinger distance, an empirical process, and fusion-penalty terms.
This inequality (and in general inequalities serving this purpose; see \cite{vandeGeer}) is referred to as a basic inequality.
To reduce notation, we introduce the shorthand $\hat{h} = h(\hat{f}_n, f_0)$, $I(f) = \TV(\log f)$, $\hat{I} = I(\hat{f}_n)$, $I_{0} = I(f_{0})$, and $p_f = \frac{1}{2} \log \frac{f + f_{0}}{2f_{0}}$.

We arrive at the following basic inequality by manipulating the optimality condition, $-P_{n}(\log \hat{f}_{n}) + \lambda_{n} \hat{I} \leq - P_{n}(\log f_{0}) + \lambda_{n} I_{0}$.
In fact, from the definition of the FDE we have the stronger condition $-P_{n}(\log \hat{f}_{n}) + \lambda_{n} \hat{I} \leq -P_{n}(\log f) + \lambda_{n} I(f)$ for all $f \in \mathcal{F}$, but the weaker condition will suffice.
\begin{lemma}[Basic Inequality]\label{SketchBasic}
$$\hat{h}^{2} \leq 16 (P_{n} - P)(p_{\hat{f}_{n}}) + 4 \lambda_{n}(I_{0} - \hat{I}).$$
\end{lemma}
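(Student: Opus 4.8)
The plan is to convert optimality of the FDE into a pointwise likelihood-ratio inequality, regularize the (badly behaved) log-likelihood ratio by passing to the ``midpoint'' density $\bar f := \tfrac12(\hat f_n + f_0)$ --- for which $p_{\hat f_n} = \tfrac12\log(\bar f/f_0)$ --- and then split the resulting bound into an empirical process increment plus a population term that is bounded below by a constant multiple of $\hat h^2$.

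First I would use only the weak optimality inequality stated just before the lemma, $-P_n(\log\hat f_n) + \lambda_n\hat I \le -P_n(\log f_0) + \lambda_n I_0$, which rearranges to $P_n\!\left(\log\tfrac{f_0}{\hat f_n}\right) \le \lambda_n(I_0 - \hat I)$. The difficulty is that $\log(f_0/\hat f_n)$ is not controlled by a Hellinger-type quantity: where $\hat f_n$ greatly exceeds $f_0$ it is very negative, whereas $\hat h^2$ stays bounded. Following \cite{vandeGeer, MammenvandeGeer}, I would cure this via the arithmetic--geometric mean inequality $\bar f = \tfrac12(\hat f_n + f_0) \ge \sqrt{\hat f_n f_0}$, which gives the pointwise bound $\log(\hat f_n/f_0) \le 2\log(\bar f/f_0) = 4\,p_{\hat f_n}$, i.e.\ $\log(f_0/\hat f_n) \ge -4\,p_{\hat f_n}$. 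Applying the monotone functional $P_n$ and chaining with the optimality bound yields $-4\,P_n(p_{\hat f_n}) \le \lambda_n(I_0 - \hat I)$, hence
$$-P_n(p_{\hat f_n}) \le \tfrac{\lambda_n}{4}(I_0 - \hat I).$$

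Next I would center the empirical measure: since $(P_n - P)(p_{\hat f_n}) = P_n(p_{\hat f_n}) - P(p_{\hat f_n})$, the last display is equivalent to
$$-P(p_{\hat f_n}) = (P_n - P)(p_{\hat f_n}) - P_n(p_{\hat f_n}) \le (P_n - P)(p_{\hat f_n}) + \tfrac{\lambda_n}{4}(I_0 - \hat I).$$
It now suffices to show $-P(p_{\hat f_n}) \ge \tfrac{1}{16}\hat h^2$; multiplying the previous display by $16$ then gives exactly the claimed basic inequality. Here $-P(p_{\hat f_n}) = -\tfrac12\int_G f_0\log(\bar f/f_0)\,dx = \tfrac12\,\mathrm{KL}(f_0\,\|\,\bar f)$ (finite, since $\bar f \ge f_0/2$), and I would bound it below using two classical facts: (i) $\mathrm{KL}(f_0\,\|\,\bar f) \ge 2\,h^2(f_0,\bar f)$, which follows from Jensen's inequality for $\log$; and (ii) the midpoint comparison $h^2\!\big(f_0,\tfrac12(\hat f_n+f_0)\big) \ge \tfrac{1}{16}\,h^2(\hat f_n,f_0)$, which, writing $d\mu = f_0\,dx$ and $s = \sqrt{\hat f_n/f_0}$, reduces to the scalar inequality $\big(1 - \sqrt{(s^2+1)/2}\,\big)^2 \ge \tfrac{1}{16}(1-s)^2$ for $s \ge 0$ (verified by factoring $1-\tfrac{s^2+1}{2} = \tfrac{(1-s)(1+s)}{2}$ and treating $s \le 1$ and $s \ge 1$ separately). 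Together, (i) and (ii) give $-P(p_{\hat f_n}) \ge h^2(f_0,\bar f) \ge \tfrac{1}{16}\hat h^2$, which completes the argument.

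The main obstacle is this last step --- that the population term $-P(p_{\hat f_n})$ dominates $\hat h^2$. The subtlety is exactly the asymmetry noted above: a direct comparison of $\log(f_0/\hat f_n)$, or of $\mathrm{KL}(f_0\,\|\,\hat f_n)$, against $\hat h^2$ is false, and the comparison only becomes available after replacing $\hat f_n$ by the midpoint $\bar f$, which is pinned between $f_0/2$ and $(\hat f_n+f_0)/2$; the numerical price of this detour is what produces the constants $16$ and $4$ in the statement. The remaining items --- the algebraic rearrangements, the verification of the scalar inequality in (ii), and checking that $p_{\hat f_n}$ is bounded so that all the expectations are finite (which holds because densities of log-bounded variation are bounded above and away from zero) --- are routine.
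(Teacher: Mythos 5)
Your proof is correct and follows essentially the same route as the paper: the concavity/AM--GM step combined with the optimality of $\hat f_n$ to get $-4P_n(p_{\hat f_n}) \leq \lambda_n(I_0-\hat I)$, then the lower bound $-P(p_{\hat f_n}) \geq h^2\bigl(\tfrac{\hat f_n+f_0}{2},f_0\bigr) \geq \tfrac{1}{16}\hat h^2$, and finally centering by $P$. The only difference is that the paper cites Lemmas 4.1 and 4.2 of van de Geer for that lower bound, whereas you reprove them directly via $\mathrm{KL} \geq 2h^2$ and the scalar midpoint inequality, which is a harmless (and correct) self-contained variant.
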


Squared Hellinger rates now follow from controlling the right hand side. 
We do so by considering two cases. When $\hat{h}$ is small, we show that
$$(P_{n} - P)(p_{\hat{f}_{n}}) = O_{P}\left(n^{-2/3}(1+I_{0}+\hat{I})\right).$$
From the basic inequality, this gives
\begin{align} \label{SketchEq1}
\hat{h}^2 &= O_{P}\left(16 n^{-2/3}(1+I_{0}+\hat{I}) + 4 \lambda_{n}(I_{0}-\hat{I})\right) \nonumber \\
&= O_{P}\left(4(4 n^{-2/3}-\lambda_{n})\hat{I} + 4(4n^{-2/3} + \lambda_{n})I_{0} + 16 n^{-2/3}\right).
\end{align}
Excluding details, when $\lambda_n$ is chosen to dominate $4 n^{-2/3}$, the first term in \eqref{SketchEq1} is negative, so we conclude that $\hat{h}^2 = O_{P}\left(\max\{n^{-2/3},\lambda_{n}\}\right)$.

The condition ``when $\hat{h}$ is small'', and the corresponding control on $(P_{n} - P)(p_{\hat{f}_{n}})$ can be formalized in the following theorem.
\begin{theorem} \label{ThmSktch1}
$$\sup_{h(f,f_{0}) \leq n^{-1/3}(1+I(f)+I_{0})} \frac{n^{2/3} \abs{(P_{n}-P)(p_{f})}}{1+I(f)+I_{0}} = O_{P}(1),$$
where the supremum is taken over all $f \in \mathcal{F}$.
\end{theorem}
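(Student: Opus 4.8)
The plan is to bound the supremum of the empirical process $(P_n - P)(p_f)$ over the localized class $\{f \in \mathcal{F} : h(f,f_0) \leq n^{-1/3}(1 + I(f) + I_0)\}$ via a chaining/maximal-inequality argument, after first stratifying the class by the size of $I(f) = \TV(\log f)$. The normalization $1 + I(f) + I_0$ in the denominator suggests a peeling argument in this quantity: partition $\mathcal{F}$ into shells $\mathcal{F}_j = \{f : 2^{j-1} < 1 + I(f) + I_0 \leq 2^j\}$ for $j \geq 0$, and on each shell the problem reduces to bounding $\sup \abs{(P_n - P)(p_f)}$ over $f$ with $I(f)$ of order $2^j$ and $h(f,f_0) \leq n^{-1/3} 2^j =: \delta_j$. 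If I can show this supremum is $O_P(2^j \cdot n^{-2/3})$ with enough uniformity in $j$ (i.e. with tail probabilities summable over $j$), the claim follows.

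The key analytic input is a modulus-of-continuity bound for the empirical process indexed by the functions $p_f = \frac{1}{2}\log\frac{f+f_0}{2f_0}$. First I would record the elementary pointwise facts: since densities in $\mathcal{F}$ are bounded above and away from zero (as noted after the definition of $\mathcal{F}$), $p_f$ is uniformly bounded, and moreover $\abs{p_f} \lesssim $ a multiple of $\abs{\sqrt{f/f_0} - 1}$ pointwise, so that the $L_2(P)$-norm of $p_f$ is controlled by the Hellinger distance: $\norm{p_f}_{L_2(P)} \lesssim h(f,f_0)$, and likewise a Bernstein-type variance bound $P(p_f^2) \lesssim h^2(f,f_0)$ holds. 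Next, $\TV(\log f) = I(f)$ controls $\TV(p_f)$ up to constants (composition of a bounded-variation function with the smooth, Lipschitz-on-compacts map $u \mapsto \frac12\log\frac{e^u + f_0}{2f_0}$, together with $\TV(\log f_0) = I_0$), so $p_f$ ranges over a set of functions of bounded variation with $\TV$ of order $1 + I(f) + I_0$ and bounded sup-norm. This lets me invoke the Bernstein-difference covering-number bounds for bounded-variation classes promised in Appendix \ref{AppB}: the log-covering number of $\{g : \TV(g) \leq R, \norm{g}_\infty \leq R\}$ in the relevant Bernstein metric at radius $u$ is of order $R/u$.

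With the covering-number bound $\log N(u) \lesssim 2^j/u$ on shell $j$, I would apply a standard empirical-process maximal inequality (e.g. the version of Bernstein's inequality for empirical processes with an entropy integral, as in van de Geer \cite{vandeGeer}) to bound $\E \sup_{f \in \mathcal{F}_j,\, h(f,f_0)\leq \delta_j} \abs{(P_n - P)(p_f)}$. The entropy integral $\int_0^{\sigma} \sqrt{\log N(u)}\, du$ with $\sigma \asymp \delta_j = n^{-1/3}2^j$ and $\log N(u) \asymp 2^j/u$ evaluates to order $\sqrt{2^j} \cdot \sqrt{\delta_j} = \sqrt{2^j}\cdot n^{-1/6}\sqrt{2^j} = 2^j n^{-1/6}$; dividing by $\sqrt n$ (the $n^{-1/2}$ prefactor in the maximal inequality) yields the target order $2^j n^{-2/3}$ for the expectation, and the Bernstein variance term $\sigma\sqrt{\log N / n} + \norm{\cdot}_\infty \log N / n$ contributes at the same or smaller order once one checks the balance $n^{-2/3} 2^j$. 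Dividing through by the normalizer $2^{j-1}$ makes the bound $O(n^{-2/3})$ uniformly in $j$, and a union bound over the shells $j$ (with the exponential tails from Bernstein beating the at-most-polynomial-in-$n$ number of relevant shells, since $I(f)$ is bounded for any fixed sample and $\hat I$ is controlled a priori) upgrades the expectation bound to the stated $O_P(1)$ statement.

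The main obstacle I anticipate is making the peeling in $1 + I(f) + I_0$ fully rigorous: $I(f)$ is unbounded over $\mathcal{F}$, so one must argue that shells with very large $2^j$ either cannot be populated (because the constraint $h(f,f_0) \leq n^{-1/3}2^j$ combined with $\int f = 1$ and the lower bound on $f$ forces a ceiling on $\TV(\log f)$ that grows only polynomially in $n$), or contribute negligibly, and one must track that the covering-number constant, the sup-norm bound on $p_f$, and the Hellinger-to-$L_2$ comparison all stay uniform as $f_0$ is fixed but $I(f)\to\infty$ — in particular the constant in $\norm{p_f}_{L_2(P)} \lesssim h(f,f_0)$ must not degrade. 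A secondary technical point is verifying that the Bernstein-difference metric (rather than plain $L_2$) is the correct one for applying the maximal inequality here, which is exactly why the new covering bounds in Appendix \ref{AppB} are phrased in that metric; I would lean on those lemmas directly rather than reprove them.
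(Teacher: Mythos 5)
Your reduction of the problem---peeling in $J(f)=1+I(f)+I_{0}$, the pointwise facts that $p_{f}$ has sup-norm and total variation of order $J(f)$, the Bernstein/Hellinger comparison $\rho(p_{f})\lesssim h(f,f_{0})$, the bracketing entropy of order $M/\delta$ from Appendix \ref{AppB}, and the per-shell calculation giving $\E \sup \lesssim 2^{j}n^{-2/3}$---matches the paper's ingredients (Lemmas \ref{7.2}, \ref{CoverAndBound}, \ref{5.10}, Theorem \ref{5.11}, and the outer peeling in $J$ used to prove Theorem \ref{InCor}). The genuine gap is in passing from the per-shell bounds to the supremum over all of $\mathcal{F}$. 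With a single application of the maximal inequality per shell at the maximal Hellinger radius $\delta_{j}=n^{-1/3}2^{j}$, any Bernstein/Talagrand-type tail bound at the threshold $T\,2^{j-1}n^{-2/3}$ has exponent of order $Tn^{1/3}$, \emph{independent of $j$}: the sup-norm of $p_{f}$ grows like $2^{j}$ in exact proportion to the threshold, so you sit in the linear (Poisson) regime of Bernstein's inequality. Since there are infinitely many shells, the union bound diverges for every fixed $n$. Your proposed escapes do not hold: the constraint $h(f,f_{0})\le n^{-1/3}2^{j}$ becomes \emph{weaker} (eventually vacuous) as $I(f)$ grows, so the large-TV shells are certainly populated; $I(f)$ does not depend on the sample, so it is not "bounded for any fixed sample"; and the supremum is over all $f\in\mathcal{F}$, so a priori control of $\hat{I}$ is irrelevant (indeed the whole point of this uniform bound is that $\hat{I}$ is not controlled in advance). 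As written, the argument does not deliver $O_{P}(1)$.

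The paper closes exactly this hole with a \emph{second} peeling, in the Hellinger distance, inside each $J$-shell. Writing the normalizer as $\sqrt{J(f)}\sqrt{J(f)n^{-1/3}}$ and using $J(f)n^{-1/3}\ge h(f,f_{0})$ on the localized set, Lemma \ref{LemSktch3} is applied at each dyadic Hellinger level $2^{-r}$ down to the floor $2^{s+1}n^{-1/3}$; each piece then contributes a tail of the form $c\exp\left[-T2^{s+1}2^{r}/c^{2}\right]$, geometrically summable over both $r$ and $s$, so the infinite union bound converges and in fact yields the exponential-in-$T$ tail of Theorem \ref{InCor}, which is later needed to convert the probabilistic rate into a bound on the Hellinger risk. (An alternative repair in the spirit of your write-up is to normalize, working with $q_{f}=p_{f}/J(f)$, which has total variation and sup-norm at most $1/2$ and, crucially, variance $\lesssim n^{-2/3}$ uniformly over the localized class, so a single application of the maximal inequality to one fixed class suffices---but some such device is required; shell-by-shell expectation bounds plus a union bound are not enough.)
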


When $\hat{h}$ is large, on the other hand, we show that $(P_{n} - P)(p_{\hat{f}_{n}}) = O_{P}\left(n^{-1/2} \cdot \hat{h}^{1/2} \cdot (1 + \hat{I} + I_{0})^{1/2}\right)$. From the basic inequality, this gives
$$\sqrt{n} \hat{h}^2 = O_{P}\left(16 \hat{h}^{1/2}(1+I_{0}+\hat{I})^{1/2} + 4 \sqrt{n} \lambda_{n}(I_{0}-\hat{I})\right).$$
Whence we conclude that $\hat{h}^2 = O_{P}\left(\max\{n^{-2/3}, \lambda_{n}\}\right)$.
This follows from the analogue to \eqref{ThmSktch1} when $\hat h$ is large.
\begin{theorem} \label{ThmSktch2}
$$\sup_{h > n^{-1/3}(1+I(f)+I_{0})} \frac{n^{1/2} \abs{(P_{n}-P)(p_{f})}}{h^{1/2}(f,f_0)(1+I(f)+I_{0})^{1/2}} = O_{P}(1),$$
where the supremum is taken over all $f \in \mathcal{F}$.
\end{theorem}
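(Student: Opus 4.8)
\emph{Proof proposal.} This is the large‑deviation counterpart of Theorem~\ref{ThmSktch1}, and I would develop the two in parallel: both follow from a single maximal inequality for the empirical process $(P_{n}-P)(p_{f})$ on slices of a bounded‑variation class, and the only difference is which term of that inequality dominates. The plan is (i) to identify a tractable complexity class containing the functions $p_{f}$ together with the relevant norm bounds, (ii) to apply a Bernstein‑type chaining/maximal inequality on a single slice indexed by the scale parameters $\delta \sim h(f,f_{0})$ and $K(f) := 1 + I(f) + I_{0}$, and (iii) to peel over both parameters and upgrade the slicewise bounds to the claimed uniform statement, following the M‑estimation technology of \cite{vandeGeer} and its use for locally adaptive splines in \cite{MammenvandeGeer}.

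First I would record the analytic facts about $p_{f} = \tfrac12\log\frac{f+f_{0}}{2f_{0}}$ that drive the argument. Writing $f=e^{g}$, $f_{0}=e^{g_{0}}$, the map $\log(e^{a}+e^{b})$ has gradient in the probability simplex, so $\TV(p_{f}) \le \tfrac12\bigl(\TV(g)+2\TV(g_{0})\bigr)\lesssim K(f)$; since densities in $\cF$ are bounded above and away from $0$ with bounds governed by their log‑total‑variation, also $\norm{p_{f}}_{\infty}\lesssim K(f)$; and the standard midpoint inequality gives $\norm{p_{f}}_{L_{2}(P)}^{2}=\int p_{f}^{2}f_{0}\,dx\lesssim h^{2}(f,f_{0})$. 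Hence for fixed $\delta,K$ the class $\mathcal{P}_{\delta,K}:=\{p_{f}:f\in\cF,\ I(f)\le K,\ h(f,f_{0})\le\delta\}$ consists of functions of sup‑norm $O(K)$ and $L_{2}(P)$‑radius $O(\delta)$, and by the Bernstein‑difference covering number bounds for functions of bounded variation in Appendix~\ref{AppB} (cf.\ the $L_{2}$ packing entropy of order $1/\epsilon$ used in Theorem~\ref{LowerBound}) it has entropy $H(\epsilon,\mathcal{P}_{\delta,K})\lesssim K/\epsilon$.

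Next I would apply a Bernstein‑type chaining/maximal inequality to each slice, obtaining
\[
\E\,\sup_{\mathcal{P}_{\delta,K}}\abs{(P_{n}-P)(p_{f})}\ \lesssim\ \frac{1}{\sqrt n}\int_{0}^{\delta}\sqrt{H(\epsilon,\mathcal{P}_{\delta,K})}\,d\epsilon\ +\ \frac{K\,H(\delta,\mathcal{P}_{\delta,K})}{n}\ \lesssim\ \frac{\sqrt{K\delta}}{\sqrt n}+\frac{K^{2}}{n\delta},
\]
where the first term is the usual entropy integral and the second is the range/variance remainder. The hypothesis $h>n^{-1/3}(1+I(f)+I_{0})$ is precisely what forces $\delta\gtrsim n^{-1/3}K$, hence $K^{2}/(n\delta)\lesssim\sqrt{K\delta}/\sqrt n$, so on the slices relevant to this theorem the bound collapses to $O(n^{-1/2}\sqrt{K\delta})$ — exactly the normalization $n^{-1/2}h^{1/2}K^{1/2}$ appearing in the denominator once $\delta\asymp h$. (In Theorem~\ref{ThmSktch1} the opposite regime holds, the remainder dominates, and the normalization becomes $n^{2/3}/K$.) Finally I would peel: take dyadic grids $\delta\in\{2^{-j}\}$ down to $n^{-1/3}$ and $K\in\{2^{\ell}\}$ up to $n^{1/3}$, and on each of the $O(\log^{2}n)$ resulting blocks combine the expectation bound above with an exponential concentration inequality for the supremum (Talagrand/Bousquet, or the companion bound in \cite{vandeGeer}); a union bound over the polylogarithmically many blocks — each of which the constraint keeps in the subgaussian regime at scale $n^{-1/2}\sqrt{K\delta}$ — then yields the uniform $O_{P}(1)$ statement.

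The hard part will be the joint maximal inequality of the third paragraph: extracting the correct \emph{simultaneous} dependence on the $L_{2}(P)$‑radius $\delta$ (which must be genuinely tied to $h(f,f_{0})$, not to $\norm{p_{f}}_{L_{2}(P_{n})}$) and on the complexity $K$, and checking that the Bernstein remainder is controlled exactly by the hypothesis $h>n^{-1/3}(1+I(f)+I_{0})$ rather than by a cruder condition. A secondary difficulty is organizing the double peeling so that the union bound over both $\delta$‑ and $K$‑blocks still gives $O_{P}(1)$, together with measurability of the suprema over $\cF$; the norm bounds on $p_{f}$ in the second paragraph are routine but must be handled with care because those norms are \emph{not} uniformly bounded over $\cF$ — they grow with $I(f)$, which is why the statement is phrased as a weighted supremum.
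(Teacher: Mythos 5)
Your proposal is correct and follows essentially the same route as the paper: its Lemma \ref{LemSktch3} is exactly your slice-wise Bernstein-type maximal inequality (obtained there from van de Geer's Theorem 5.11 together with the $M/\delta$ bracketing entropy bound, the uniform bound $\norm{p_f}_{\infty}\le M$, and the Hellinger control of the Bernstein difference $\rho$), and the proof of Theorem \ref{OutThm} then carries out precisely the double peeling in $h(f,f_0)$ and $J(f)=1+I(f)+I_0$ that you outline. The difficulty you flag at the end is resolved in the paper by the fact that the per-shell tail bound $C\exp[-C_1 M\delta^{-1}/(4C^2)]$ decays geometrically across shells, so the union over shells (taken over all $M=2^s$, with no need for your $n^{1/3}$ truncation) sums to a single bound of the form $c\exp[-T/c^2]$ uniform in $n$, which is what yields the $O_P(1)$ claim and the stronger exponential tail used later in the risk bound.
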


To summarize our conclusions so far: the squared Hellinger rate is $\max\{n^{2/3}, \lambda_n\}$ when $\lambda_n$ balances the competing terms in \eqref{SketchEq1}. By choosing
$$\lambda_n = \max\left\{\sup_{h(f,f_{0}) \leq n^{-1/3}(1+I(f)+I_{0})} \frac{4 \abs{(P_{n}-P)(p_{f})}}{1+I(f)+I_{0}}, n^{-2/3}\right\},$$
we have a minimal $\lambda_n$ which dominates in \eqref{SketchEq1}. Furthermore, this choice of $\lambda_n$ satisfies $\lambda_n = O_{P}(n^{-2/3})$ by Theorem \ref{ThmSktch1}. We have established a squared Hellinger rate of $n^{-2/3}$ for both the cases of $\hat{h}$ considered. Furthermore, this choice of $\lambda_n$ satisfies the condition on $\lambda$ in Theorem \ref{RestateThm}, so the FDE is well-defined.

Theorems \ref{ThmSktch1} and \ref{ThmSktch2} are essential components of the proof outlined above. Both of these results are new and of independent interest. Their derivation requires the following lemma.
\begin{lemma} \label{LemSktch3}
Let $M \in \R$ and $\mathcal{P}_{M} = \{p_f: 1+I(f)+I_0 \leq M \}$. There is a constant $C$ and choice of $c_1$ such that for all $C_1 \geq c_1$ and $\delta \geq \frac{M}{2} \cdot n^{-1/3}$
$$\bP\left(\sup_{p_{f} \in \mathcal{P}_{M}, h(f,f_{0}) \leq \delta} \abs{\sqrt{n} (P_{n}-P)(p_{f})} \geq 2 C_{1} \sqrt{M} \delta^{1/2} \right) \leq C \exp\left[- \frac{C_{1} M \delta^{-1}}{4 C^2}\right]$$
\end{lemma}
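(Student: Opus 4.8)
The plan is to read the left-hand side as the supremum of the empirical process $p_f \mapsto \sqrt n\,(P_n - P)(p_f)$ over the localized index set $\{p_f : p_f \in \mathcal{P}_M,\ h(f,f_0) \le \delta\}$, and to apply an entropy-based deviation inequality for empirical processes indexed by classes of polynomial Bernstein-difference entropy, in the spirit of \cite{vandeGeer} and \cite{MammenvandeGeer}. Such an inequality needs three ingredients: a uniform envelope for the functions $p_f$, a bound on the Bernstein difference of each $p_f$ in the localized class, and a Bernstein-difference covering-number bound for $\mathcal{P}_M$. The third is exactly the covering-number estimate for bounded-variation functions established in Appendix~\ref{AppB}, which is polynomial of order $M/u$.

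First I would record the uniform bounds. If $1 + I(f) + I_0 \le M$ then $\TV(\log f) \le M$ on the compact edge $[a,b]$; since $\int e^{\log f}\,dx = 1$ there is a point where $\log f \le -\log(b-a)$, so $\log f$ lies in an interval of width $O(M)$, whence $f$ (and hence $\bar f := \tfrac{f+f_0}{2}$ and $p_f$) is bounded above and away from $0$ by constants depending only on $M$ and $b-a$, giving $\|p_f\|_\infty = O(M)$; the lower bound $p_f \ge -\tfrac12\log 2$ is universal since $\bar f/f_0 \ge \tfrac12$. This envelope enters only through the admissibility threshold of the deviation inequality.

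Next, the key estimate: the Bernstein difference of $p_f$ is $2\int f_0\,\psi(|p_f|)\,dx$ with $\psi(x) = e^x - 1 - x$, and I would bound it by a universal constant times $\delta^2$. Using $\psi(x) \le x(e^x - 1) \le (e^x-1)^2$ for $x \ge 0$, then $e^{p_f} = \sqrt{\bar f/f_0}$ together with $\bar f \ge f_0/2$, one gets $f_0\,\psi(|p_f|) \le 2(\sqrt{\bar f} - \sqrt{f_0})^2$ pointwise, so $2\int f_0\,\psi(|p_f|)\,dx \le 4\int(\sqrt{\bar f} - \sqrt{f_0})^2\,dx = 8\,h^2(f_0,\bar f) \le 4\,h^2(f_0,f) \le 4\delta^2$, where the last two inequalities use concavity of the square root ($h^2(f_0,\bar f) \le \tfrac12 h^2(f_0,f)$) and the localization. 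I expect this to be the main obstacle: it is precisely the step that converts the Hellinger localization into the ``variance'' scale that governs the rate, and it works only because the $\tfrac{f+f_0}{2}$ form of $p_f$ keeps the $O(M)$ envelope out of this bound.

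Finally I would feed the envelope $O(M)$, the variance bound $O(\delta^2)$, and the entropy $\asymp M/u$ into the deviation inequality. The entropy integral $\int_0^{c\delta}\sqrt{M/u}\,du \asymp \sqrt{M}\,\delta^{1/2}$ sets the centering scale $2C_1\sqrt M\,\delta^{1/2}$, which the inequality reaches once $C_1$ exceeds a fixed multiple $c_1$ of the entropy constant; the sub-exponential correction term in the admissibility condition is of order $M^2/(\delta\sqrt n)$ (the envelope $\asymp M$ times the log-covering number $\asymp M/\delta$ at the Bernstein-difference radius, divided by $\sqrt n$), which is dominated by $\sqrt M\,\delta^{1/2}$ exactly when $\delta \gtrsim M n^{-1/3}$ --- this is why the hypothesis $\delta \ge \tfrac M2 n^{-1/3}$ is imposed. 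With variance $\asymp\delta^2$ the Gaussian part of the tail then has exponent of order $C_1^2 M/\delta$; using $C_1^2 \ge C_1$ for $C_1 \ge 1$ and collecting constants gives the stated bound $C\exp\!\big(-C_1 M\delta^{-1}/(4C^2)\big)$, with $c_1$ taken to be the largest of the thresholds demanded by the centering and admissibility steps.
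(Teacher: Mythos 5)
Your proposal is correct and follows essentially the same route as the paper: Hellinger localization converted into a Bernstein-difference (variance) bound of order $\delta$ (you reprove van de Geer's Lemma 7.2 inline rather than citing it), the $O(M)$ sup-norm envelope and the $M/\delta$ bracketing entropy from Appendix \ref{AppB}, all fed into the van de Geer Theorem 5.11-type deviation inequality, whose admissibility condition is exactly what forces $\delta \gtrsim M n^{-1/3}$ and whose Gaussian exponent yields $C_1 M/\delta$ after using $C_1^2 \geq C_1$. The only cosmetic difference is that the paper passes from $\rho_1$ to $\rho_{4M}$ via its monotonicity lemma and converts ordinary bracketing entropy to generalized bracketing entropy via Lemma 5.10, a bookkeeping step your sketch implicitly absorbs.
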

Lemma \ref{LemSktch3} can be used to prove Theorems \ref{ThmSktch1} and \ref{ThmSktch2} by applying the peeling device twice, once each for the parameters $M$ and $\delta$.

The proof of Lemma \ref{LemSktch3} requires three basic ingredients: control of the bracketing entropy of $\mathcal{P}_M$, a uniform bound on $\mathcal{P}_M$, and a relationship dictating how $M$ scales with control of the Hellinger distance. These ingredients have the same motivation as in \cite{MammenvandeGeer}, where the authors use a total variation penalty to construct adaptive estimators in the context of regression. In that work, the authors assume subgaussian errors and prove bounds on metric entropy for functions of bounded variation. The subgaussian assumption provides local error bounds, and the metric entropy condition bounds the number of sets on which we must control that error. Though similarly motivated, our context is more complicated. In order to control the $M$ in $\mathcal{P}_{M}$ with the Hellinger distance, we consider coverings in the Bernstein difference metric instead of the $L_{2}(P)$ metric. Using the Bernstein difference allows us to achieve the results in Lemma \ref{LemSktch3}, but its use requires control of generalized bracketing entropy--bracketing with the Bernstein difference--instead of the usual bracketing entropy with the $L_{2}(P)$ metric. In addition, the uniform bound we require is now on the Bernstein difference over $\mathcal{P}_{M}$.

In Appendix \ref{AppB}, we show that the bracketing entropy of $\mathcal{P}_M$, with bracketing radius $\delta$, is of order $\frac{M}{\delta}$. This bracketing entropy results implies generalized bracketing entropy bounds, and can be proved similarly to results in monotonic shape-constrained estimation \cite{vandervaart}. In order to achieve the finite sample bounds necessary to achieve these rates, Bernstein's inequality is used to provide concentration inequalities that are critical to bounding the basic inequality.
With this combination of local error bounds and bracketing rates, we can apply results in the spirit of generic chaining \cite{talagrand} to obtain Lemma \ref{LemSktch3}.

Lastly, we translate the probabilistic results into bounds on Hellinger risk. In general, one cannot prove expected risk rates from convergence in probability because the tails may not decay quickly enough to give a finite expectation. But out of the proofs of Theorems \ref{ThmSktch1} and \ref{ThmSktch2}, we can derive exponential tail bounds for $h^2(\hat{f}_{n}, f_0)$. This allows us to translate our probabilistic rates into rates on the Hellinger risk; doing so requires some care to simultaneously apply the rates in Theorems \ref{ThmSktch1} and \ref{ThmSktch2}. These details are provided in the expanded proof in Appendix \ref{AppA}.

\subsection{Guarantees for Connected Geometric Networks}\label{subsec3}

In the previous sections we proved an $n^{-2/3}$ rate of convergence for univariate fused density estimators. The following theorem extends this result to arbitrary geometric networks.

\begin{theorem}\label{NetRate}
Let $\hat f_n$ be the FDE of an iid sample over a connected geometric network with true density $f_0$.
Then there exists a choice of $\lambda_n$, dependent on $f_0$, such that $\lambda_n = O_P (n^{-2/3})$ and 
$$\mathbb{E}_{f_0}\left[ h^2(\hat f_n, f_0) \right] = O(n^{-2/3}).$$
\end{theorem}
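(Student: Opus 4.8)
The plan is to reduce the network case to the univariate case already established in Theorem \ref{RateResult} by constructing an appropriate embedding of the geometric network $G$ onto a compact interval of the real line. Concretely, I would run a depth-first search over $G$ (viewed as a graph on $V$ with edge set $E$), and use the DFS traversal to produce a continuous surjection $\phi: [0,L] \to G$ where $L$ is (at most) twice the total edge length of $G$: the DFS walk traverses each edge forward when it is first explored and backward when the search backtracks, so each point of $G$ has one or two preimages under $\phi$. This is the same device used in \cite{padilla} for graph regression. The key structural point is that this embedding is \emph{total-variation non-increasing} in the right sense: for any function $g$ on $G$, the pulled-back function $g \circ \phi$ on $[0,L]$ satisfies $\TV(g\circ\phi) \le 2\,\TV(g)$ (each edge is covered at most twice, and along each covering the variation is exactly $\TV(g_e)$), and conversely a function of bounded variation on the interval that is constant on the ``backtracking'' reparametrization descends to a function of bounded variation on $G$.

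The second step is to transport the density and the estimator through $\phi$. Given $f_0$ on $G$, I would define its image $\tilde f_0$ on $[0,L]$ by spreading mass according to the pushforward: a point with $k\in\{1,2\}$ preimages gets its density split so that $\tilde f_0$ integrates to $1$ over $[0,L]$ and $\log \tilde f_0$ has total variation controlled by $\TV(\log f_0)$ plus a bounded correction coming from the finitely many branch points (there are at most $|V|$ of them, each contributing a bounded jump). Thus $\tilde f_0$ lies in $\mathcal{F}$ over the interval, with $\TV(\log \tilde f_0)$ a constant (in $n$) multiple of $\TV(\log f_0)$ plus an $O(|V|)$ term — crucially still $O(1)$. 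The iid sample $x_1,\dots,x_n \sim f_0$ on $G$ pulls back (via a measurable right-inverse of $\phi$, choosing among the one or two preimages according to the mass-splitting) to an iid sample from $\tilde f_0$ on $[0,L]$. Running the univariate FDE on this pulled-back sample yields, by Theorem \ref{RateResult}, an estimator $\tilde f_n$ with $\mathbb{E}[h^2(\tilde f_n, \tilde f_0)] = O(n^{-2/3})$, and a choice of $\lambda_n = O_P(n^{-2/3})$.

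The third step is to push $\tilde f_n$ back down to $G$ and identify it with (or compare it to) the FDE $\hat f_n$ on $G$. Here I would argue that the variational problem \eqref{fde} on $G$ is equivalent to the univariate variational problem on $[0,L]$ restricted to functions that are ``symmetric'' under the backtracking identification — because the objective $-P_n(g) + \lambda\TV(g)$ and the constraint $\int e^g = 1$ all transport correctly (the likelihood term is invariant since the sample transports bijectively, the TV term picks up the factor handled above, and the integral constraint is preserved by the mass-splitting). One then checks that the unconstrained-to-constrained reduction (Lemma \ref{UnCon}) and the representer theorem (Lemma \ref{FinDim}) on $G$ are exactly mirrored on the interval, so the minimizer on $G$ corresponds to the minimizer on $[0,L]$ over the symmetric subclass. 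Since Hellinger distance is preserved under $\phi$ (change of variables: $h^2(\tilde f_n, \tilde f_0)$ computed on $[0,L]$ equals $h^2(\hat f_n, f_0)$ on $G$ because the mass-splitting is consistent on both sides), the rate transfers: $\mathbb{E}_{f_0}[h^2(\hat f_n, f_0)] = O(n^{-2/3})$.

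The main obstacle I anticipate is the last step — showing the network FDE genuinely corresponds to the univariate FDE on the embedded interval, rather than merely being \emph{bounded by} a univariate-estimator rate. The subtlety is that minimizing over all functions on $[0,L]$ gives a (weakly) smaller objective than minimizing over the symmetric subclass, so one must verify either that the unconstrained univariate minimizer automatically respects the backtracking symmetry, or that the basic-inequality argument of Theorem \ref{RateResult} can be rerun directly for $\hat f_n$ on $G$ using the empirical-process bounds of Theorems \ref{ThmSktch1}–\ref{ThmSktch2} pulled back through $\phi$ (the covering-number bounds for bounded-variation functions on $G$ follow from those on $[0,L]$ via the factor-$2$ TV inequality, since a net on the interval restricts to a net on the symmetric subclass). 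The cleanest route is probably the latter: re-derive the basic inequality $\hat h^2 \le 16(P_n - P)(p_{\hat f_n}) + 4\lambda_n(I_0 - \hat I)$ verbatim on $G$ (nothing in its derivation used univariate structure — only that $\hat f_n$ minimizes \eqref{fde}), and then bound the empirical process term by transporting the chaining/peeling arguments of Lemma \ref{LemSktch3} through $\phi$, at the cost of replacing $I(f)$ by $2I(f) + O(|V|)$ throughout, which leaves the $n^{-2/3}$ rate unchanged since those constants do not grow with $n$. The branch-point corrections and the measure-zero handling of observations landing on nodes are routine but need to be stated carefully.
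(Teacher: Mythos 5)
Your finally-endorsed route is essentially the paper's proof: the paper keeps $\hat f_n$ defined on $G$, notes that the basic inequality of Lemma \ref{SketchBasic} holds verbatim there (it uses only the minimizing property of the FDE), and then transfers the univariate empirical-process bounds (Theorems \ref{InCor} and \ref{OutThm}) through a depth-first-search embedding, paying only a factor of $2$ in total variation ($\hat I \le 2\hat I_G$, $1+I_0+I \le 2(1+I_{0,G}+I_G)$), which is absorbed by replacing the constant $4$ by $8$ in the definition of $\lambda_n$; the two-case peeling argument and the translation to Hellinger risk then go through unchanged. The one substantive difference is the embedding itself. You propose an Euler-tour-style walk that covers each edge up to twice, which forces you to split mass, randomize preimages of observations, and track $O(|V|)$ branch-point corrections to the log-TV; all of that can be made to work but is extra bookkeeping. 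The paper's Lemma \ref{TVBounds} instead builds a measure-preserving embedding in which each edge is traversed exactly once (via an expanded graph with ``limit'' and ``value'' nodes), so densities, the empirical measure, and Hellinger distance transfer exactly with no mass-splitting and no additive correction; the factor $2$ arises only from bounding the jump between consecutively visited DFS nodes by the graph TV along the backtracking path, each edge appearing in at most two such paths. Finally, your diagnosis of the obstacle in your first route is exactly right: the univariate FDE of the transported sample need not respect the backtracking symmetry, so it cannot be identified with the network FDE, and the paper never attempts that identification --- it only ever uses the embedding to bound empirical-process and TV quantities, never to re-solve the optimization on the line.
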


We prove this theorem using an embedding lemma, Lemma \ref{TVBounds}, which states that for any fixed geometric network $G$ there is a measure-preserving embedding $\gamma$ of $G$ into $\R$ that preserves densities and Hellinger distances. Furthermore, for any function $g$ on $G$, the (univariate) total-variation of the embedded function $g \circ \gamma^{-1}$ never exceeds twice that of the graph-valued total variation. With this lemma in hand, Theorem \ref{NetRate} is proven by strategically bounding terms in our analysis by their univariate counterparts. A detailed proof can be found in the supplementary material.

Theorem \ref{NetRate} provides an upper-bound on the minimax Hellinger rate for densities of log-bounded variation on geometric networks. Unlike the unvariate case, however, we do not have a lower bound as in Theorem \ref{LowerBound}, so we cannot conclude that FDE attains the minimax rate for arbitrary geometric networks. This mirrors similar results found in the Gaussian regression setting; see for example \cite{padilla}. The minimax squared-Hellinger rate for density estimation on geometric networks is at least as small as the univariate rate, and it is reasonable to suspect that the minimax rate on some graphs may be strictly better than the univariate rate. Though the univariate case may seem simple, and hence one might expect it to be easier, the sparse connectivity of the underlying graph in univariate estimation negatively affects its minimax rate of convergence. In some sense this is intuitive. 
For example, adding cycles to a graph increases the total variation compared the same graph with the cycles removed.
The increased total variation can be seen as applying more shrinkage in the context of estimation, which makes total variation balls smaller and the problem easier. 
Similarly, tree graphs (graphs without cycles) have more connectivity than the univariate chain graph and larger total variation for a function defined on it. 
This intuition is consistent with the formal results from \cite{OptRatesforDenoising} in the regression setting. 
While there may be networks for which the FDE and minimax squared Hellinger rates may decrease more quickly than the $n^{-2/3}$, we leave that study to future work.

\subsection*{Acknowledgements}
RB was supported in part by the U.S. Office of Naval Research grant N00014-17-2372. JS was supported in part by NSF grant DMS-1712996. We would like to thank Ryan Tibshirani, Roger Wets, and Matthias K{\"o}ppe for helpful conversations.

{\small
\printbibliography
}

\newpage


\appendix
\section{Appendix: Proofs} \label{AppA}

\subsection{Proofs from Section \ref{Computation}}

\emph{Proof of Lemma \ref{UnCon}.}



Let $\hat{g}$ be any function of bounded variation. Set $\bar{g} = \hat{g} - \log\left(\int_{G} e^{\hat{g}} \, dx \right)$, so that $\int e^{\bar{g}} \, dx = 1$. The value of 
\begin{equation}\label{UnConExp1}
-P_{n}(g) + \lambda \TV(g) + \int_{G} e^g \, dx
\end{equation}
evaluated at $\hat{g}$ is
\begin{equation}\label{UnConExp2}
-P_{n}(\hat{g}) + \lambda \TV(\hat{g}) + \int_{G} e^{\hat{g}} \, dx.
\end{equation}
Whereas \eqref{UnConExp1} evaluated at $\bar{g}$ is
\begin{equation}\label{UnConExp3}
-P_{n}(\hat{g}) + \log\left(\int_{G} e^{\hat{g}} \,dx \right) + \TV(\hat{g}) + 1.
\end{equation}
Here we have used that $\TV(\hat{g}) = \TV(\bar{g})$, which follows from shift-invariance of total variation. That is, $\TV(g) = \TV(g+a)$ for any function $g$ and constant $a$.
Subtracting \eqref{UnConExp3} from \eqref{UnConExp2} gives
$$\int_{G} e^{\hat{g}} \, dx - \log(\int_{G} e^{\hat{g}} \, dx) - 1.$$
Recall that $x - \log(x) \geq 1$ for all $x>0$, with equality attained if and only if $x = 1$. This proves our result, since we have shown that $\hat{g}$ cannot be a minimizer of \eqref{UnConExp1} unless $\hat{g} = \bar{g}$, in which case $\int_{G} e^{\hat{g}} \, dx = 1$.
\qed

In is worth noting that, in the proof of Lemma \ref{UnCon}, we only require that total variation is shift invariant. A similar result holds with any shift invariant penalty substituted for $\TV$.

\emph{Proof of Representer Theorem, Lemma \ref{FinDim}.}

Let $e \in E$, and assume that $e$ is identified with some interval $[0,l]$. Consider any subinterval $(a,b)$ of $e$ which does not intersect $\{x_{1},...,x_{n}\} \cup V$. Assume, towards a contradiction, that $\hat{g} = \log \hat{f}$ is a function which is not constant on $(a,b)$. We will show that $\hat{g}$ cannot minimize \eqref{UnConProb}.

Define 
$$\bar{g}_{e}(x) = \begin{cases} \min\{\hat{g}(a), \hat{g}(b)\} & \text{for } x \in (a,b) \\ \hat{g}(x) & \text{otherwise}. \end{cases}$$
Let $\bar{g}$ be a function on $G$ which is $\bar{g}_{e}$ on edge $e$ and $\hat{g}_{e}$ otherwise.
We next consider the effect of this change on the objective in \eqref{UnConProb}. Since no $x_{i} \in (a,b)$, the $P_{n}$ term is unaltered by changing $\hat{g}$ to $\bar{g}$. The interval $(a,b)$ is contained in $e$, so we have
$$\TV(g_{e}) = \TV(g_e|_{[0,a]}) + \TV(g_{e}|_{[a,b]}) + \TV(g_e|_{[b,L]})$$
for every real-valued function $g$ on $G$. From the definitions of $\hat{g}$ and $\bar{g}$,
$$\TV(\bar{g}_{e}|_{[a,b]}) = \abs{\hat{g}(a) - \hat{g}(b)} \leq \TV(\hat{g}_{e}|_{[a,b]}).$$
This equality is attained if and only if $\hat{g}$ is monotonic on $[a,b]$. If follows that $\TV(\bar{g}) \leq \TV(\hat{g})$.

The integral term in \eqref{UnConProb} is less at $\bar{g}$ than its evaluation at $\hat{g}$, since $\bar{g} \leq \hat{g}$. Equality holds when $\{x: \hat{g}_{e}(x) \neq \bar{g}_{e}(x)\}$ has measure zero on $(a,b)$. Hence,
$$- P_{n}(\hat{g}) + \lambda \TV(\hat{g}) + \int_{G} e^{\hat{g}} \, dx \leq - P_{n}(\bar{g}) + \lambda \TV(\bar{g}) + \int_{G} e^{\bar{g}} \, dx.$$
We cannot have that $\hat{g}_{e}|_{[a,b]}$ is monotonic and satisfies $\{x \in (a,b): \hat{g}_{e}(x) \neq \bar{g}_{e}(x)\}$ has measure zero, unless $\hat{g}_{e}$ is constant on $(a,b)$. By assumption it is not, so we conclude that $\hat{g}$ cannot minimize \eqref{UnConProb} since its evaluation at the objective is strictly greater than at $\bar{g}$. Therefore, any $\hat{g}$ that satisfies \eqref{UnConProb} must be constant on $(a,b)$.
\qed

\emph{Proof of Theorem \ref{RestateThm}.}


We pick up from the discussion preceding the theorem's statement. Recall that the subdifferentials of the fusion penalty terms are preserved by the exponential transformation. This allows to conclude that there is a $\hat{c}, \hat{k}$ satisfying
\begin{align*} 
0 \in & \partial \left( \sum_{e \in E} \left\{ -\frac{1}{2n}\sum_{i=1}^{n_{e}} (c_{e,i}+c_{e,i+1}) + \left(\lambda-\frac{1}{2n}\right)\sum_{i=1}^{n_{e}} \abs{c_{e,i}-c_{e,i+1}} + \sum_{i=1}^{n_{e}+1} s_{e,i} e^{c_{e,i}} \right\} \right. \\
&+\left. \lambda \sum_{v \in V} \sum_{e \in \inc(v)} \abs{k_{v}-c_{e,v}}\right)_{(\hat{c}, \hat{k})},
\end{align*}
if and only if $\hat{z} = e^{\hat{c}}$ and $\hat{h} = e^{\hat{k}}$ satisfies
\begin{align} \label{NewOptCond}
0 \in & \partial \left( \sum_{e \in E} \left\{ -\frac{1}{2n}\sum_{i=1}^{n_{e}} (z_{e,i}+z_{e,i+1}) + \left(\lambda-\frac{1}{2n}\right)\sum_{i=1}^{n_{e}} \abs{z_{e,i}-z_{e,i+1}} + \sum_{i=1}^{n_{e}+1} \frac{s_{e,i}}{2} \cdot z_{e,i}^2  \right\} \right. \\
&+ \left. \lambda \sum_{v \in V} \sum_{e \in \inc(v)} \abs{h_{v}-z_{e,v}}\right)_{(\hat{z}, \hat{h})}. \nonumber
\end{align}
The above subdifferentials are taken with respect to $(c,k)$ and $(z,h)$, respectfully. The problem that generates the optimality condition \eqref{NewOptCond} is
$$\min_{z,h} \sum_{e \in E} \left\{ -\frac{1}{2n}\sum_{i=1}^{n_{e}} (z_{e,i}+z_{e,i+1}) + \left(\lambda-\frac{1}{2n}\right)\sum_{i=1}^{n_{e}} \abs{z_{e,i}-z_{e,i+1}} + \sum_{i=1}^{n_{e}+1} \frac{s_{e,i}}{2} \cdot z_{e,i}^2  \right\} + \lambda \sum_{v \in V} \sum_{e \in \inc(v)} \abs{h_{v}-z_{e,v}}.$$
By solving this new problem, and then applying a log-transformation, we solve the original FDE problem. Recall that the original formulation of the problem \eqref{fde} was formulated in terms of the log-density $g$. Hence a solution to \eqref{NewOptCond} gives the values of density instead of the log-density.

By construction $w$ satisfies
$$w_{e,i} = \begin{cases} -\frac{1}{2n} & i = 1  \text{ or } i=n_{e}\\ -\frac{1}{n} & \text{ otherwise.}\end{cases}$$
By construction, we also have that
$$
\norm{D_1 z + D_2 h}_{1} = \left( \lambda - \frac{1}{2n} \right)\sum_{e \in E} \sum_{i=1}^{n_{e}} \abs{z_{e,i}-z_{e,i+1}} + \lambda \sum_{v \in V} \sum_{e \in \inc(v)} \abs{h_{v}-z_{e,v}}.
$$

Letting $S = \diag(s)$, we conclude that we can solve the fused density estimator problem by solving
$$\min_{z,h} \frac{1}{2} z^{\top} S z + w^{\top}z + \norm{ D_1 z + D_2 h}_{1}$$
because we have translated the optimality conditions in \eqref{NewOptCond} to the problem above. The FDE $\hat{f}$ is found by taking the piecewise constant portion of $\hat{f}_{e,i}$ to be $\hat{z}_{e,i}$ and the value of $\hat{f}$ at node $v$ to be $\hat{h}_{v}$.
\qed

\begin{theorem}[Extension of Theorem \ref{RestateThm}] \label{ExtendThm}
Let $x_1,...,x_n$ be the distinct locations of observations on a geometric network $G$. Partition these locations into the edges they occur on and the order in which they occur, so that $x_{e,i}$ denotes the $i$th observation along edge $e$.
\begin{itemize}
\item Let $q_{e,i}$ denote the number of observations which occur at location $x_{e,i}$.
\item Let $\deg(x_{e,i})$ denote the number of edge segments incident to the observation $x_{e,i}$. That is, $\deg(x_{e,i}) = 2$ is $x_{e,i}$ is in the interior of an edge and $\deg(x_{e,i})$ is the degree of the node in the graph when $x_{e,i}$ occurs at a node.
\item Let $z$ be a vector with indices enumerating the constant portions of the fused density estimator $\hat{f}$, such that $z_{e,i}$ denotes the value of the fused density estimator on the open interval between $x_{e,i}$ and $x_{e,i-1}$, or between an observation and the end of the edge if $i=1$ or $n_{e}+1$. 
\item Let $s_{e,i}$ be the length of the segment that determines $z_{e,i}$ and $S = \diag(s)$.
\item Let $h$ be a vector with indices enumerating the nodes in $G$, such that $h_{v}$ denotes the value of the fused density estimator at node $v$. 
\item Using the convention that $q_{e,0} = 0$ and $q_{e,n_{e+1}} = 0$. Define $\bar{q}$ such that $\bar{q}_{e,i} = \frac{q_{e,i}+q_{e,i-1}}{2}$, for each $e \in E$ and $i \in \{1,...,n_{e}+1\}$.
\item Let $r$ be a vector whose indices enumerate the vertices of $G$, such that $r_v$ denotes the number of observations that occur at node $v$. 
\item Let $C_1$ and $C_2$ be as in Theorem \ref{RestateThm}. That is, $C_1$ and $C_2$ are matrices with $n_{1}+n_{2}$ rows and elements in $\{-1,0,1\}$. We have that $\TV(f) = \norm{C_1 z + C_2 h}_{1}$, and $C_2$ is identically zero on its first $n_1$ rows while having a nonzero element in each of the remaining rows. Let 
$$B = \left(\begin{array}{cc} \diag(\lambda-q/2n) & 0_{n_1 \times n_2} \\ 0_{n_2 \times n_1} & \lambda I_{n_{2} \times n_{2}} \end{array} \right).$$
\item Let $D_1$ and $D_2$ denote the matrices $B C_1$ and $B C_2$, respectfully.
\item Lastly, let $u = -r/n$ and $w = -\bar{q}/n$.
\end{itemize}
Assume the penalty parameter $\lambda$ satisfies $\lambda > \max_{e,i} \left\{\frac{q_{e,i}}{n \cdot \deg(x_{e,i})} \right\}$. Then one can compute the fused density estimator $\hat{f}$ for this sample by solving
\begin{equation} \label{PrimalProb2}
\min_{z,h} \; \frac{1}{2} z^\top S z + w^\top z  + u^\top h + \norm{D_1 z + D_2 h}_{1}.
\end{equation}
\end{theorem}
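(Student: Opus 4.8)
The plan is to mirror the proof of Theorem~\ref{RestateThm} step by step, tracking the two structural changes caused by coincident observations: the likelihood term now carries multiplicity weights, and some observations may sit at nodes. First I would note that Lemmas~\ref{UnCon} and~\ref{FinDim} still apply verbatim---the argument removing the integration constraint uses only shift-invariance of $\TV$, and the representer theorem only asserts that $\hat f$ is piecewise constant with discontinuities in $\{x_1,\dots,x_n\}\cup V$, neither of which is affected by repeated sample locations or by samples landing on nodes. So I would write out the finite-dimensional form of \eqref{UnConProb} in the variables $p$ (values of $\log\hat f$ at observation locations interior to edges), $c$ (values on the piecewise-constant segments), and $k$ (values at nodes), with the log-likelihood term now equal to $-\tfrac1n\sum_{e,i} q_{e,i}\, p_{e,i} - \tfrac1n\sum_v r_v k_v$.

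Next I would eliminate the $p$ variables exactly as in the body of the paper. For each interior observation location the summand $p_{e,i}\mapsto -\tfrac{q_{e,i}}{n}p_{e,i}+\lambda\big(\abs{p_{e,i}-c_{e,i}}+\abs{p_{e,i}-c_{e,i+1}}\big)$ is a maximum of three affine functions (cf.\ Figure~\ref{fig:p_picture}), bounded below with a unique minimizer at $p_{e,i}=\max\{c_{e,i},c_{e,i+1}\}$ precisely when $\lambda>\tfrac{q_{e,i}}{2n}$; since an interior point has $\deg(x_{e,i})=2$, this is the hypothesis $\lambda>\tfrac{q_{e,i}}{n\,\deg(x_{e,i})}$. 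Substituting and using $2\max\{a,b\}=a+b+\abs{a-b}$ produces the fusion term with the now location-dependent coefficient $\lambda-\tfrac{q_{e,i}}{2n}$ on $\abs{c_{e,i}-c_{e,i+1}}$---this is what forces $B$ to be the diagonal matrix $\diag(\lambda-q/2n)$ rather than a scalar multiple of the identity---and a linear term whose coefficient on $c_{e,j}$ is obtained by collecting the two incidences of $c_{e,j}$, giving $-\tfrac{q_{e,j}+q_{e,j-1}}{2n}=w_{e,j}$ under the endpoint conventions $q_{e,0}=q_{e,n_e+1}=0$; taking all $q_{e,i}=1$ recovers the $w$ of Theorem~\ref{RestateThm}. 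The node variables $k_v$ are not eliminated: they retain the fusion terms $\lambda\sum_{e\in\inc(v)}\abs{k_v-c_{e,v}}$ together with the linear term $-\tfrac{r_v}{n}k_v$.

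The reduced problem in $(c,k)$ is then a sum of linear terms, absolute-value fusion terms, and exponential terms $s_{e,i}e^{c_{e,i}}$ coming from the integral, and I would close with the exponential change of variables $z=e^c$, $h=e^k$ used for Theorem~\ref{RestateThm}: by \cite[Theorem~23.8]{rockafellar}, $(\hat c,\hat k)$ solves the reduced problem iff $0$ lies in its subdifferential there; the subdifferentials of the linear terms are the corresponding constants (so they pass through the substitution unchanged, turning $-\tfrac{r_v}{n}k_v$ into $-\tfrac{r_v}{n}h_v$, i.e.\ $u^\top h$ with $u=-r/n$), the subdifferentials of the fusion terms depend only on the sign pattern of the differences, which $\exp$ preserves, and the gradient of $s_{e,i}e^{c_{e,i}}$ equals $s_{e,i}z_{e,i}$, the gradient of $\tfrac{s_{e,i}}{2}z_{e,i}^2$. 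Hence the image point $(e^{\hat c},e^{\hat k})$ satisfies the optimality condition of the convex quadratic program \eqref{PrimalProb2} with $D_1=BC_1$, $D_2=BC_2$, $S=\diag(s)$; convexity (all $s_{e,i}>0$) makes that condition sufficient, so minimizers correspond and the FDE is read off as $\hat z_{e,i}$ on the segments and $\hat h_v$ at the nodes. Existence follows as in the remark after Theorem~\ref{RestateThm}: the hypothesis $\lambda>\max_{e,i}\tfrac{q_{e,i}}{n\,\deg(x_{e,i})}$, applied at interior points (degree $2$) and at node-coincident observations (degree $\deg(v)$, where the node penalty grows like $\lambda\,\deg(v)\abs{k_v}$ and must dominate $\tfrac{r_v}{n}\abs{k_v}$), makes \eqref{PrimalProb2} bounded below, so the extended linear--quadratic duality of \cite[Theorem~11.42]{VaAn} gives attainment.

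I expect the main obstacle to be bookkeeping rather than any new idea: one must index so that an observation at a node is counted exactly once---through $r_v$, $u$, and the node fusion terms---and is not also double-counted among the edge-interior quantities $q$, $\bar q$, $w$, while still letting $\deg(x_{e,i})$ range over node degrees in the single hypothesis on $\lambda$; checking that the endpoint conventions reproduce $w$ and that the blocks of $B$ line up with $J_1,J_2$ requires care. The only genuinely analytic subtlety, inherited from Theorem~\ref{RestateThm}, is that $z=e^{c}$ is nonlinear and is justified only at the level of optimality conditions, so one should verify that the unconstrained minimizer of \eqref{PrimalProb2} is automatically positive---equivalently, that the correspondence of stationary points is not broken by the implicit constraint $z,h>0$.
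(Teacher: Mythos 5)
Your proposal is correct and takes exactly the route the paper intends: the paper's own proof of Theorem~\ref{ExtendThm} is simply the statement that it ``follows exactly as in the proof of Theorem~\ref{RestateThm}, with slightly more cumbersome notation,'' and your write-up carries out precisely that bookkeeping (multiplicity-weighted likelihood, location-dependent threshold $\lambda>q_{e,i}/(2n)$ at interior points and $\lambda>r_v/(n\deg(v))$ at nodes, the diagonal block $\diag(\lambda-q/2n)$ of $B$, the terms $w=-\bar q/n$, $u=-r/n$, and the exponential change of variables at the level of subdifferentials). Your closing remark about verifying positivity of the unconstrained minimizer of \eqref{PrimalProb2} is a subtlety already present in Theorem~\ref{RestateThm} and not addressed further by the paper, so your account is, if anything, more explicit than the paper's.
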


\begin{proof} The proof of this theorem follows exactly as in the proof of Theorem \ref{RestateThm}, with slightly more cumbersome notation.
\end{proof}

The following is a more general statement of Proposition \ref{Dual}, and provides the dual of the more general primal problem, \eqref{PrimalProb2}. 

\begin{proposition}[Extension of Proposition \ref{Dual}]\label{Dual2}
The dual problem to \eqref{PrimalProb2} is 
\begin{align}\label{DualProb2}
\min_{y} & \quad \frac{1}{2} y^\top D_{1} S^{-1} D_{1}^\top y + w^\top S^{-1} D_{1}^\top y \nonumber\\
& \quad \norm{y}_{\infty} \leq 1 \\
& \quad D_{2}^\top y = -u. \nonumber
\end{align} 
The primal solution $\hat{z}$ can be recovered from the dual $\hat{y}$ through the expression
$$\hat{z} = -S^{-1}(D_{1}^\top \hat{y}+w).$$
\end{proposition}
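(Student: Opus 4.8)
The plan is to obtain \eqref{DualProb2} by Lagrangian duality applied to a split reformulation of \eqref{PrimalProb2}, exactly in the spirit of Proposition \ref{Dual}. First I introduce an auxiliary variable $v = D_1 z + D_2 h$ and rewrite \eqref{PrimalProb2} as
\begin{equation*}
\min_{z,h,v} \ \tfrac12 z^\top S z + w^\top z + u^\top h + \norm{v}_1 \quad \st \quad D_1 z + D_2 h - v = 0,
\end{equation*}
which is equivalent since the constraint merely defines $v$. Attaching a multiplier $y$ to the equality constraint gives the Lagrangian
\begin{equation*}
L(z,h,v,y) = \tfrac12 z^\top S z + (w + D_1^\top y)^\top z + (u + D_2^\top y)^\top h + \norm{v}_1 - y^\top v,
\end{equation*}
and the dual function is $g(y) = \inf_{z,h,v} L(z,h,v,y)$, which separates into three independent minimizations.

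Next I carry out each inner minimization. Because $S = \diag(s)$ is positive definite (its diagonal entries are the lengths of the nondegenerate constant segments, which are positive since the observation locations are distinct), the quadratic in $z$ is minimized uniquely at $\hat z = -S^{-1}(w + D_1^\top y)$, with value $-\tfrac12 (w + D_1^\top y)^\top S^{-1} (w + D_1^\top y)$. The term linear in $h$ has infimum $-\infty$ unless $u + D_2^\top y = 0$, producing the affine constraint $D_2^\top y = -u$; when it holds, the contribution is $0$. Finally $\inf_v \{\norm{v}_1 - y^\top v\}$ equals $0$ if $\norm{y}_\infty \le 1$ and $-\infty$ otherwise, since the convex conjugate of $\norm{\cdot}_1$ is the indicator of the unit $\ell_\infty$-ball; this produces the norm constraint. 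Hence $g(y) = -\tfrac12 (w + D_1^\top y)^\top S^{-1}(w + D_1^\top y)$ on the feasible set $\{y : \norm{y}_\infty \le 1,\ D_2^\top y = -u\}$. Expanding the square, discarding the additive constant $-\tfrac12 w^\top S^{-1} w$, and replacing the dual maximization of $g$ by the minimization of $-g$ yields exactly \eqref{DualProb2}. The recovery formula is then immediate: it is the $z$-stationarity condition $S\hat z + w + D_1^\top \hat y = 0$ of the inner minimization, evaluated at a dual optimum $\hat y$, so $\hat z = -S^{-1}(D_1^\top \hat y + w)$.

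What remains is to justify that this formal derivation is valid, i.e. that strong duality holds and the recovered $\hat z$ is genuinely primal optimal. As observed after Proposition \ref{Dual}, both \eqref{PrimalProb2} and \eqref{DualProb2} are extended linear--quadratic programs in the sense of \cite{VaAn}, so Theorem 11.42 there gives strong duality and attainment of both optimal values as soon as the primal \eqref{PrimalProb2} is bounded below, which follows from the hypothesis $\lambda > \max_{e,i}\{q_{e,i}/(n\deg(x_{e,i}))\}$ just as in Theorem \ref{ExtendThm}. Given a dual optimum $\hat y$, the saddle-point (KKT) conditions force any primal optimum to solve the $z$-stationarity equation above, which is uniquely solvable because $S$ is invertible; this establishes the recovery formula (and shows the optimal $z$ is unique). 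The only genuinely delicate step is this strong-duality argument: the primal objective is not finite-valued in a way that would let a naive Slater-type constraint qualification apply directly, so one really does need the polyhedral / extended linear--quadratic duality machinery (equivalently, a Fenchel--Rockafellar argument exploiting the polyhedrality of $\norm{\cdot}_1$) rather than a generic convex-duality theorem. Everything else is routine bookkeeping of the three inner minimizations and of the constant terms.
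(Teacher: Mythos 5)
Your derivation is correct and follows essentially the same route as the paper's proof: introduce the splitting variable $l = D_1 z + D_2 h$, form the Lagrangian, minimize separately in $z$, $h$, and the auxiliary variable to obtain the quadratic dual objective, the constraint $D_2^\top y = -u$, the $\norm{y}_\infty \le 1$ constraint, and the recovery formula from the $z$-stationarity condition $S\hat z + w + D_1^\top \hat y = 0$. Your added justification of strong duality and attainment via the extended linear--quadratic duality of \cite{VaAn} matches the remarks the paper makes after Proposition \ref{Dual}, so nothing is missing.
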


\begin{proof}
Write \eqref{PrimalProb} as
\begin{align*}
\min_{z,h,l} & \quad \frac{1}{2} z^\top S z + w^\top z + u^\top h + \norm{l}_{1} \\
& \st \quad l = D_{1}z + D_{2} h
\end{align*}
Introducing the dual variable $y$, this problem has Lagrangian
\begin{equation}\label{lagrangian}
\frac{1}{2}z^\top S z + w^\top z + u^\top h + \norm{l}_{1} + y^\top(D_{1} z + D_{2} h - l).
\end{equation} 
To find the dual problem, we minimize in the primal variables. This gives
\begin{equation} \label{label1}
\min_{l} -y^\top l + \norm{l}_{1} = \begin{cases} 0 & \text{ if } \norm{y}_{\infty} \leq 1 \\ -\infty & \text{ otherwise}. \end{cases}.
\end{equation}
In addition, we have the terms
\begin{equation}\label{label2}
\min_{z} \; \frac{1}{2} z^\top S z + w^\top z + y^\top D_{1} z
\end{equation}
and
\begin{equation} \label{label3}
\min_{h} \; u^\top h + y^\top D_{2} h.
\end{equation}
For \eqref{label2}, we have the optimality condition
\begin{equation}\label{OptCond1}
S z + w + D_{1}^\top y = 0.
\end{equation}
For \eqref{label3}, we require $D_{2}^\top y = -u$. Substituting \eqref{label1}-\eqref{label3} into \eqref{lagrangian}, we arrive at the dual problem
\begin{align*}
\max_{y} & \quad -\frac{1}{2} y^\top D_{1} S^{-1} D_{1}^\top y - w^\top S^{-1} D_{1}^\top y \\
& \quad \norm{y}_{\infty} \leq 1 \\
& \quad D_{2}^\top y = -u
\end{align*}
Translating this maximum into a minimum, and using the optimality condition in \eqref{OptCond1}, we have the result.
\end{proof}

\emph{Proof of Proposition \ref{OrderingProp}.}

We will prove that $s_{e,i} \leq s_{e,i+1}$ implies $z_{e,i} \geq z_{e,i+1}$. The second claim follows symmetrically. Assume, for contradiction, that $s_{e,i} \leq s_{e,i+1}$ and $\hat{z}_{e,i} < \hat{z}_{e,i+1}$. 

The condition for optimality in \eqref{PrimalProb} is
$$\left. 0 \in \partial \left( \frac{1}{2} z^\top S z + w^\top z + \norm{ D_1 z + D_2 h}_{1} \right)\right|_{\hat{z}, \hat{h}}.$$
The value of the subdifferential in the index corresponding to $z_{e,i}$ is
$$\partial \left(\frac{1}{2} s_{e,i} z_{e,i}^{2} +\frac{1}{n} z_{e,i} + \left(\lambda -\frac{1}{2n}\right)\left(\abs{z_{e,i-1}-z_{e,i}}+\abs{z_{e,i}-z_{e,i+1}}\right) \right).$$
Under the assumption that $\hat{z}_{e,i} < \hat{z}_{e,i+1}$, its evaluation at $\hat{z}$ is
$$\hat{z}_{e,i} s_{e,i} + \frac{1}{n} - \left(\lambda-\frac{1}{2n}\right) + (\lambda- \frac{1}{2n}) \partial(\abs{z_{e,i-1}-z_{e,i}})|_{\hat{z}}.$$
Similarly, the $e,i+1$ index evaluates to
$$\hat{z}_{e,i+1} s_{e,i+1} + \frac{1}{n} + \left(\lambda-\frac{1}{2n}\right) + (\lambda- \frac{1}{2n}) \partial(\abs{z_{e,i+2}-z_{e,i+1}})|_{\hat{z}}.$$
Since $-1 \leq \partial \abs{\cdot} \leq 1$, we have that
\begin{align*}
\hat{z}_{e,i} s_{e,i} + \frac{1}{n} - 2\left(\lambda-\frac{1}{2n}\right) &\leq  \hat{z}_{e,i} s_{e,i} + \frac{1}{n} - \left(\lambda-\frac{1}{2n}\right) + (\lambda- \frac{1}{2n}) \partial(\abs{z_{e,i-1}-z_{e,i}})|_{\hat{z}} \\ 
&\leq \hat{z}_{e,i} s_{e,i} + \frac{1}{n}
\end{align*}
and 
\begin{align*}
\hat{z}_{e,i+1} s_{e,i+1} + \frac{1}{n} &\leq  \hat{z}_{e,i+1} s_{e,i+1} + \frac{1}{n} + \left(\lambda-\frac{1}{2n}\right) + (\lambda- \frac{1}{2n}) \partial(\abs{z_{e,i}-z_{e,i+1}})|_{\hat{z}} \\ 
&\leq \hat{z}_{e,i+1} s_{e,i+1} + \frac{1}{n} + 2\left(\lambda - \frac{1}{2n}\right).
\end{align*}
Under the assumption that $\hat{z}$ solves this problem, we have that $0$ is in the $(e,i)$ index of the subdifferential. This implies
$$\hat{z}_{e,i} s_{e,i} + \frac{1}{n} - 2\left(\lambda-\frac{1}{2n}\right) \leq 0 \leq \hat{z}_{e,i} s_{e,i} + \frac{1}{n}.$$
But this inequality gives that $0$ is not in the $(e,i+1)$ index of the subdifferential, since
$$\hat{z}_{e,i} s_{e,i} + \frac{1}{n} < \hat{z}_{e,i+1} s_{e,i+1} + \frac{1}{n}.$$
This contradicts $\hat{z}$ as solving \eqref{PrimalProb}, so the result is proven.
\qed

\subsection{Proofs from Section \ref{StatRates}}
\emph{Proof of Theorem \ref{RateResult}.} 

We first show that $\hat{h}^2 = O_{P}(n^{-2/3})$. Fixing $\epsilon > 0$, we want to show there are $M \in \R$ and $N \in \mathbb{N}$ such that $n \geq N$ gives $\bP(n^{2/3} \hat{h}^2 > M) < \epsilon$.

We will show momentarily that $\hat{h}^2 = O_{P}(n^{-2/3})$ in both the cases when $\hat{h} \leq n^{-1/3}(1+\hat{I} + I_{0})$ and $\hat{h} > n^{-1/3}(1+\hat{I}+I_{0})$. Once we have established that both cases are $O_{P}(n^{-2/3})$, there exists $M_{1}, M_{2} \in \R$, $N_{1}, N_{2} \in \N$ such that $n \geq N_{1}$ gives
$$\bP\left(\left\{n^{2/3} \hat{h}^2  \geq M_{1} \right\} \bigcap \left\{\hat{h} \leq n^{-1/3}(1+\hat{I}+I_{0})\right\}\right) < \epsilon/2$$
and $n \geq N_{2}$ gives
$$\bP\left(\left\{ n^{2/3} \hat{h}^2 \geq M_{2} \right\} \bigcap \left\{\hat{h} > n^{-1/3}(1+\hat{I}+I_{0})\right\}\right) < \epsilon/2.$$
Therefore, for $N = \max\{N_{1}, N_{2}\}$ and $M = \max\{M_{1}, M_{2}\}$,
\begin{align*}
& \bP\left(n^{2/3} \hat{h}^2 > M \right) \\
=&\bP\left(\left\{n^{2/3} \hat{h}^2 \geq M\right\} \bigcap \left\{\hat{h} \leq n^{-1/3}(1+\hat{I}+I_{0})\right\}\right) \\
+& \bP\left(\left\{n^{2/3} \hat{h}^2 \geq M \right\} \bigcap \left\{\hat{h} > n^{-1/3}(1+\hat{I}+I_{0})\right\}\right) \\
<& \epsilon/2 + \epsilon/2 = \epsilon.
\end{align*}
This gives that $\hat{h}^2 = O_{P}(n^{-2/3})$.

We turn next to showing that $\hat{h}^2 = O_{P}(n^{-2/3})$ in both of the cases indicated. From the basic inequality, Lemma \ref{SketchBasic}, we have
$$\hat{h}^{2} \leq 16 (P_{n} - P)(p_{\hat{f}_{n}}) + 4 \lambda_{n}(I_{0} - \hat{I}).$$

Take 
$$\lambda_{n} = \max\left\{\sup_{h(f, f_{0}) \leq n^{-1/3}(1+I(f)+I_{0})} \frac{4\abs{(P_{n} - P)(p_{f})}}{1+I(f)+I_{0}}, n^{-2/3}\right\}.$$
The maximum guarantees that $\lambda_n$ satisfies the assumption on $\lambda$ in Theorem \ref{RestateThm} for $n$ large enough, so that $\hat{f}_{n}$ is well-defined. 

We prove in Theorems \ref{InCor} and \ref{OutThm} that
\begin{equation} \label{equate1}
\sup_{h(f, f_{0}) > n^{-1/3}(1+I(f)+I_{0})} \frac{n^{1/2} \abs{(P_{n} - P)(p_{f})}}{h^{1/2}(f, f_{0})(1+I(f)+I_{0})^{1/2}} = O_{P}(1) 
\end{equation}
and
\begin{equation} \label{equate2}
\sup_{h(f, f_{0}) \leq n^{-1/3}(1+I(f)+I_{0})} \frac{n^{2/3} \abs{(P_{n} - P)(p_{f})}}{1+I(f)+I_{0}} = O_{P}(1).
\end{equation}
Equation \eqref{equate2} gives that $\lambda_{n} = O_{P}(n^{-2/3})$.

First, assume that $h(\hat{f}_{n}, f_{0}) \leq n^{-1/3}(1+I+I_{0})$,
\begin{align} 
\hat{h}^2 &\leq 16 (P_{n} - P)(p_{\hat{f}_{n}}) + 4 \lambda_{n}(I_{0} - \hat{I}) \label{6}\\
&= (P_{n} - P)(p_{\hat{f}_{n}}) + 4 \lambda_{n}(1 + 2 I_{0}) - 4 \lambda_{n}(1+I_{0}+\hat{I})\\
&= 4 (1 + I_{0} + \hat{I})\left( \frac{4(P_{n} -P)(p_{\hat{f}_{n}})}{1 + I_{0} + \hat{I}} - \lambda_{n}\right) + 4 \lambda_{n}(1+ 2 I_{0})\\
&\leq 4(1+I_{0}+\hat{I}) \left(\sup_{h(f,f_0) \leq n^{-1/3}(1+I(f)+I_{0})} \frac{4(P_{n} - P)(p_{f})}{1+I_{0}+I(f)} - \lambda_{n} \right) + 4 \lambda_{n}(1+2 I_{0}) \label{3}
\end{align}

Our choice of $\lambda_{n}$ gives that the left term in this expression is less than or equal to zero. 
We conclude that
\begin{equation}\label{RandomEquation}
\hat{h}^2 \leq 4 \lambda_n (1+2 I_{0}).
\end{equation}
And finally
$$\frac{\hat{h}}{\sqrt{\lambda_n}} \leq 2 \sqrt{1+2 I_{0}}.$$
By our choice of $\lambda_n$, this bound gives that $\hat{h}^2 = O_{P}(n^{-2/3})$.

Assume next that $\hat{h} > n^{-1/3} (1 + \hat{I} + I_{0})$. Define subsets of the probability space
\begin{equation}\label{B_L}
B_L = \left\{\sqrt{n} \abs{(P_{n} - P)(p_{\hat f})} > L \cdot \hat{h}^{1/2} \cdot (1+\hat{I}+I_{0})^{1/2} \right\}
\end{equation}
and
\begin{equation}\label{C_M}
C_{M} = \{\hat{I} > M \geq I_{0}\}.
\end{equation}
By \eqref{equate1}, for each $\epsilon$ there is a corresponding $L$ such that $\bP(B_{L}) < \epsilon$.

On 
$B_{L}^{c} \cap C_{M}$, $(I_{0} -\hat{I}) < 0$, so from \eqref{SketchBasic}
$$\sqrt{n} \hat{h}^2 \leq 16 \cdot L \cdot \hat{h}^{1/2} (1+\hat{I} + I_{0})^{1/2} + 4 \lambda_{n} \sqrt{n} (I_{0} - \hat{I}) \leq 16 \cdot L \cdot \hat{h}^{1/2} (1+ \hat{I} + I_{0})^{1/2}.$$
Therefore,
$$\hat{h}^{1/2} \leq n^{-1/6} \left(16 \cdot L \cdot (1 + \hat{I} + I_{0})^{1/2} \right)^{1/3}.$$
Again using \eqref{SketchBasic}, we have
\begin{align*}
\sqrt{n} \hat{h}^{3/2} \leq 16 \cdot L \cdot (1+\hat{I} + I_{0})^{1/2} + \frac{4 \lambda_{n} \sqrt{n}(I_{0} - \hat{I})}{\hat{h}^{1/2}} \\
\leq 16 \cdot L \cdot (1+\hat{I} + I_{0})^{1/2} + \frac{4 \lambda_{n} \sqrt{n} (I_{0} - \hat{I})}{n^{-1/6} \left( 16 \cdot L \cdot (1 + \hat{I} + I_{0})^{1/2} \right)^{1/3}} \\
= 16 \cdot L \cdot (1+\hat{I} + I_{0})^{1/2} + \frac{4 \lambda_{n} n^{2/3}(I_{0} - \hat{I})}{\left( 16 \cdot L \cdot (1 + \hat{I} + I_{0})^{1/2} \right)^{1/3}}.
\end{align*}
The second inequality follows because $I_{0} - \hat{I} < 0$ on $C_M$. The definition of $\lambda_n$ gives that $\lambda_{n} n^{2/3} \geq 1$. Hence,
$$\leq 16 \cdot L \cdot (1+\hat{I} + I_{0})^{1/2} + \frac{4 (I_{0} - \hat{I})}{\left( 16 \cdot L \cdot (1 + \hat{I} + I_{0})^{1/2} \right)^{1/3}}.$$

The order of the left term is $\sqrt{\hat{I}}$, whereas the order of the right is $\hat{I}^{5/6}$. This gives that for $M$ large enough $\sqrt{n} \hat{h}^{3/2} < 0$. Of course this is not possible, so we conclude that for any fixed $L$, there is $M$ large enough so that $B_{L}^c \cap C_{M} = \emptyset$.

Choose $L$ such that $\bP(B_{L}) < \epsilon/2$. That fact that we can do so is guaranteed by \eqref{equate1}. Choose $M$ such that $B_{L}^{c} \cap C_M = \emptyset$ on this set.

We then have, on $B_{L}^c = B_{L}^{c} \cap C_{M}^{c}$,
\begin{align} \label{ChainOfInequalities}
\sqrt{n} \hat{h}^2 & \leq 16 \cdot L \cdot \hat{h}^{1/2} (1 + I_{0} + \hat{I})^{1/2} + 4 \lambda_n \sqrt{n} (I_{0} - \hat I) \nonumber \\
& \leq 16 \cdot L \cdot \hat{h}^{1/2}(1 + I_{0} + M)^{1/2} + 4 \lambda_n \sqrt{n} I_0 \\ 
& \leq 2 \max \left\{ 16 \cdot L \cdot \hat{h}^{1/2} (1+ I_{0} + M)^{1/2}, 4 \lambda_n \sqrt{n} I_0 \right\}. \nonumber
\end{align}
From this, we conclude
$$\hat{h} \leq \max\{n^{-1/3} \cdot (32 L)^{2/3} \cdot (1+ I_{0} + M)^{1/3}, \sqrt{\lambda_n \cdot 8 \cdot I_0} \}$$
on $B_{L}^c$. Choose $K$ so that $\bP(\lambda n^{2/3} > K) < \epsilon/2$, which is permitted because $\lambda = O_{P}(n^{-2/3})$. We have
\begin{equation} \label{ConclusionCaseTwo}
\hat{h} \cdot n^{1/3} \leq \max\{(32 L)^{2/3} (1 + I_{0} + M)^{1/3}, \sqrt{8 \cdot K \cdot I_{0}}\}.
\end{equation}

The right hand side is constant, depending on the choice of $\epsilon$. The set on which this bound does not hold has probability less than $\epsilon$, by the choice of $B_{L}$ and $K$.

Having examined probabilistic rates for both cases $\hat{h} \leq n^{-1/3}(1+\hat I+I_{0})$ and $\hat{h} > n^{-1/3}(1+\hat I+I_{0})$, we turn next to proving the same rate for squared Hellinger risk. This requires a more refined application of Theorems \ref{InCor} and \ref{OutThm}. 


We will show that there exist $n_{0} \in \mathbb{N}$ and $c \geq 0$ such that $n \geq n_{0}$ implies $\mathbb{E}_{f_{0}}[\hat{h}^2 n^{2/3}] \leq c$. We have 
\begin{equation}\label{BeginningEquation}
\mathbb{E}_{f_{0}}[\hat{h}^2 n^{2/3}] = \mathbb{E}_{f_0}[\hat{h}^2 n^{2/3} (\mathbbm{1}_{\hat{h} \leq n^{-1/3}(1+\hat{I}+I_{0})} + \mathbbm{1}_{\hat{h} > n^{-1/3}(1+\hat{I}+I_{0})})].
\end{equation}
We will consider both terms in this summand individually. First, we have
$$\mathbb{E}_{f_{0}}[\hat{h}^2 n^{2/3} \mathbbm{1}_{\hat{h} \leq n^{-1/3}(1+\hat{I}+I_{0})}] = \int_{0}^{\infty} \bP(\hat{h}^2 n^{2/3} \mathbbm{1}_{\hat{h} \leq n^{-1/3}(1+\hat{I}+I_{0})} \geq u) \, du.$$
From \eqref{RandomEquation},
\begin{equation} \label{Equation1}
\int_{0}^{\infty} \bP(\hat{h}^2 n^{2/3} \mathbbm{1}_{\hat{h} \leq n^{-1/3}(1+\hat{I}+I_{0})} \geq u) \, du \leq \int_{0}^{\infty} \bP(4 \lambda_{n} (1+2I_{0}) n^{2/3} \geq u) \, du.
\end{equation}
From the definition of $\lambda_n$ and Theorem \ref{InCor}, 
$$\bP(4 \lambda_n(1+2I_{0}) n^{2/3} \geq u) \leq c_{0} \exp\left[-\frac{u}{4(1+2I_{0}) c_{0}^2}\right]$$
for $n$ and $u$ large. This allows us to integrate the right-hand side of \eqref{Equation1}, which gives that $\mathbb{E}_{f_0}\left[\hat{h}^2 n^{2/3} \mathbbm{1}_{\hat{h} \leq n^{-1/3}(1+\hat{I}+I_{0})} \right]$ is finite.

On the other hand, consider the second expectation $\mathbbm{E}\left[\hat{h}^2 n^{2/3} \mathbbm{1}_{\hat{h} > n^{-1/3}(1+\hat{I}+I_{0})}\right]$. Again we have
\begin{equation} \label{Equation2}
\mathbb{E}_{f_0} \left[\hat{h}^2 n^{2/3} \mathbbm{1}_{\hat{h} > n^{-1/3}(1+\hat{I}+I_{0})}\right] = \int_{0}^{\infty} \bP(\hat{h}^2 n^{2/3} \mathbbm{1}_{\hat{h} > n^{-1/3}(1+\hat{I}+I_{0})} \geq u) \, du.
\end{equation}
Denote by $A_{u}$ the event that $\{\hat{h}^2 n^{2/3} \mathbbm{1}_{\hat{h} > n^{-1/3}(1+\hat{I}+I_{0})} \geq u\}$. Let $B_L$ and $C_M$ be as in \eqref{B_L}-\eqref{C_M}, and denote by $\Lambda_K$ the event $\{\lambda_n n^{2/3} > K\}$. Choosing $L = \left(\frac{u^3}{3\cdot 32^2}\right)^{1/7}$, $M = L^5$, and $K = \frac{u^2}{8I_0}$ gives \eqref{ChainOfInequalities} for large enough $u$. Furthermore, both of the arguments in the maximum of \eqref{ConclusionCaseTwo} are less than $u$. Recalling that $B_{L}^{c} \cap C_M = B_{L}^{c}$, this gives
\begin{align*}
\bP(A_{u}) & \leq \bP(A_{u} \cap B_{L}) + \bP(A_{u} \cap B_{L}^{c} \cap \Lambda_{K}^{c}) + \bP(A_{u}\cap B_{L}^{c} \cap \Lambda_{K}) \\
& \leq \bP(B_{L}) + 0 + \bP(\Lambda_K) \\
& \leq c \exp \left[-\frac{L}{c^2}\right]+c_{0} \exp\left[-\frac{K}{c_{0}^2}\right].
\end{align*}
This last inequality follows from Theorems \ref{InCor} and \ref{OutThm}. The fact that $\bP(A_{u} \cap B_{L}^{c} \cap \Lambda_K)$ equals zero follows from \eqref{ConclusionCaseTwo} and our choice of $L$, $M$, and $K$. Therefore the expectation in \eqref{Equation2} is finite. Since we have shown that both of the expectations in \eqref{BeginningEquation} are bounded by constants for $n_0$ large enough, the result is proven.
\qed

\emph{Proof of the Basic Inequality, Lemma \ref{SketchBasic}}.

We have
\begin{align*}
4 P_{n}(p_{\hat{f}_{n}}) - \lambda_n \hat{I} &= 2 \int \log\left( \frac{\hat{f}_{n} + f_{0}}{2 f_{0}} \right) \, d P_{n} - \lambda_n \hat{I} \\
&\geq \int \log\left( \frac{\hat{f}_{n}}{f_{0}} \right) dP_{n} - \lambda_n \hat{I} \\
&\geq - \lambda I_{0}
\end{align*}
The first inequality comes from the concavity of $\log$. The second is from the definition of $\hat{f}_{n}$ as the minimizer of $-\int \log f \, d P_n + \lambda_{n} I(f)$, which implies $-\int \log \hat{f}_{n} \, d P_n + \lambda_{n} \hat{I} \leq -\int \log f_{0} \, d P_n + \lambda_{n} I_{0}$. 

We also have that 
\begin{align*}
-16 \int p_{\hat{f}_{n}} d P & \geq 16 h^2\left( \frac{\hat{f}_{n} + f_{0}}{2}, f_{0} \right) \\
& \geq h^2(\hat{f}_{n}, f_{0}),
\end{align*}
by Lemmas 4.1 and 4.2 in \cite{vandeGeer}.

Therefore,
\begin{align*}
16 \int p_{\hat{f}_{n}} \, d(P_{n}-P) - 4 \lambda_{n} \hat{I} &\geq -16 \int p_{\hat{f}_{n}} \, dP - 4 \lambda_{n} I_{0} \\
&\geq 16 h^2\left(\frac{\hat{f}_{n} +f_{0}}{2}, f_{0}\right) - 4 \lambda_{n} I_{0} \\
& \geq h^{2}(\hat{f}_{n}, f_{0}) - 4 \lambda_{n} I_{0}
\end{align*}
This proves the result.
\qed

\emph{Proof of Theorem \ref{NetRate}}.

Recall that the total-variation on a geometric network $G$ is the sum of the total variation over the edges. In this proof only, we denote the graph-induced total variation by $\TV_{G}$ and univariate total variation $\TV$, because both will be used in similar contexts. Let $I_{G}(f) = \TV_{G}(\log f)$, $\hat{I}_{G} = I_{G}(\hat{f}_{n})$, and $I_{0,G} = \TV_{G}(\log f_{0})$.

From the univariate proof, we have that
\begin{equation} \label{seven}
\sup_{h(f, f_{0}) > n^{-1/3}(1+I(f)+I{0})} \frac{n^{1/2} \abs{(P_{n} -P)(p_f)}}{h^{1/2}(f, f_{0})(1+I(f)+I_{0})^{1/2}} = O_{P}(1) 
\end{equation}
and
\begin{equation}
\sup_{h(f, f_{0}) \leq n^{-1/3}(1+I(f)+I_{0})} \frac{n^{2/3} \abs{(P_{n} -P)(p_{f})}}{1+I(f)+I_{0}} = O_{P}(1).
\end{equation}
We proceed analogously to Theorem \ref{RateResult}. Take
$$\lambda_{n} = \max\left\{\sup_{h(f,f_{0}) \leq n^{-1/3}(1+I(f)+I_{0})} \frac{8 (P_{n}-P)(p_{f})}{1+I(f)+I_{0}}, n^{-2/3} \right\}.$$

From the Basic Inequality, Lemma \ref{SketchBasic}, we have
\begin{align}
\hat{h}^2 \leq \max \left\{\mathbbm{1}_{\hat{h} > n^{-1/3}(1+\hat{I}+I_{0})} \left( 16 (P_{n} - P)(p_{\hat{f}_{n}}) + 4 \lambda_{n}(I_{0,G} - \hat{I}_{G}) \right) \right. \label{1}\\
\left. \mathbbm{1}_{\hat{h} \leq n^{-1/3}(1+\hat{I}+I_{0})} \left( 16 (P_{n} - P)(p_{\hat{f}_{n}}) + 4 \lambda_{n}(I_{0,G} - \hat{I}_{G}) \right) \right\} \label{2}
\end{align}

First, consider the case $\hat{h} > n^{-1/3}(1+\hat{I}+I_{0})$. Define subsets of the probability space
$$B_{L} = \left\{\sqrt{n} \abs{ (P_{n}-P)(p_{\hat f})} > L \cdot \hat{h}^{1/2} \cdot(1+\hat I+I_{0})^{1/2} \right\}$$
and
$$C_{M} = \{\hat{I} > M \geq 2 I_{0,G}\}.$$

Because $\hat{I} \leq 2 \hat{I}_{G}$ (Lemma \ref{TVBounds}), on $C_{M}$ we have $I_{0,G} - \hat I_{G} < 0$ on $C_M$. Proceeding as in the proof of Theorem \ref{RateResult}, the fact that $\lambda_{n} n^{-2/3}$ is bounded below by $1$ gives that on $B_{L}^{c}$
$$\sqrt{n}h^{3/2} \leq 16 \cdot L \cdot (1+\hat{I}+I_{0})^{1/2} + \frac{4 (I_{0, G}-\hat{I}_{G})}{\left(16 \cdot L \cdot (1+\hat{I}+I_{0})^{1/2}\right)^{1/3}}.$$

As $M$ gets large, this inequality and the fact that $2 \hat{I}_{G}$ dominates $\hat{I}$ gives that $B_{L}^{c} \cap C_{M} = \emptyset$. So on $B_L^{c}$, for large enough $M$,
\begin{align*}
\sqrt{n} \hat{h}^2 &\leq 16 \cdot L \cdot \hat{h}^{1/2}(1+I_{0} + \hat{I})^{1/2} + 4 \lambda_{n} \sqrt{n}(I_{0,G}-\hat I_{G})\\
& \leq 2 \max \left\{ 16 \cdot L \cdot \hat{h}^{1/2}(1+I_{0}+M)^{1/2}, 2 \lambda_{n} \sqrt{n} I_{0,G} \right\}.
\end{align*}
This holds with probability $1-\epsilon$ if we choose $L$ so that $B_L$ holds with probability less than $\epsilon$--the fact that we can do so is guaranteed by \eqref{seven}. We conclude that when $\hat{h} > n^{-1/3}(1+\hat{I}+I_{0})$, $\hat{h}^2 = O_{P}\left(\max\{\lambda_n, n^{-2/3}\}\right)$.

Next consider the case $\hat{h} \leq n^{-1/3}(1+\hat{I}+I_{0})$. Mirroring equations \eqref{6}-\eqref{3}, we have
\begin{align*}
\hat{h}^2 &\leq 4(1+I_{0,G}+\hat{I}_{G})\left(\sup_{h(f,f_{0}) \leq n^{-1/3}(1+I+I_{0})} \frac{4 (P_{n}-P)(p_{f})}{1+I_{0,G}+I_{G}}-\lambda_{n}\right)+4\lambda_{n}(1+2I_{0,G}) \\
& \leq 4(1+I_{0,G}+\hat{I}_{G})\left(\sup_{h(f,f_{0}) \leq n^{-1/3}(1+I+I_{0})} \frac{8 (P_{n}-P)(p_{f})}{1+I_{0}+I}-\lambda_{n}\right)+4\lambda_{n}(1+2I_{0,G})
\end{align*}
The last inequality again comes from $1+I_{0}+I \leq 2(1+I_{0,G}+I_{G})$. By our choice of $\lambda_n$ we have that
$$\hat{h}^2 \leq 4 \lambda_{n}(1+2I_{0,G}).$$
Because $\lambda_n = O_{P}(n^{-2/3})$, we have that when $\hat{h} \leq n^{-1/3}(1+\hat{I}+I_{0})$, $\hat{h}^2 = O_{P}\left(n^{-2/3}\right)$. Having established the probabilistic rate for both cases of $\hat{h}$, we must now translate these into rates for the squared Hellinger risk. In the unvariate case, we use the probabilistic bounds just derived--for the cases $\hat{h} \leq n^{-1/3}(1 + \hat{I} + I_0)$ and $\hat{h} > n^{-1/3}(1+\hat{I}+I_0)$--to prove an equivalent rate in Hellinger risk. This part of the proof follows exactly as in the analogous result for the univariate case, and as such is omitted.
\qed

\subsection{Empirical Process Results} \label{subsec2}
The goal of this section is to prove the following statements, Theorems \ref{ThmSktch1} and \ref{ThmSktch2}, which were used in the proof of Theorem \ref{RateResult}.
\begin{equation}
\sup_{h(f, f_{0}) \leq n^{-1/3}(1+I(f)+I_{0})} \frac{n^{2/3} \abs{(P_{n}-P)(p_{f})}}{1+I(f)+I_{0}} = O_{P}(1)
\end{equation}
\begin{equation}
\sup_{h(f, f_{0}) > n^{-1/3}(1+I(f)+I_{0})} \frac{n^{1/2} \abs{(P_{n}-P)(p_{f})}}{h^{1/2}(f, f_{0})(1+I(f)+I_{0})^{1/2}} = O_{P}(1) 
\end{equation}
These are the simplifications of the results of Theorems \ref{InCor} and \ref{OutThm}, respectively. We begin by introducing notation and relevant definitions.

The \emph{Bernstein Difference} for a parameter $K \in \N$, is given by $\rho_{K}$, where
$$\rho_{K}^{2}(g) = 2 K^2 \int \left( e^{\abs{g}/K} - 1 - \abs{g}/K \right) \, d P$$

\emph{Generalized entropy with bracketing}, denoted $\mathcal{H}_{B,K}$ is entropy with bracketing,      where the $L_{2}(P)$ metric is replaced by the Bernstein difference $\rho_{K}$. $H_{B}$ denotes the usual entropy with bracketing.

The following theorem is an important tool at our disposal.
\begin{theorem}[\cite{vandeGeer}, 5.11] \label{5.11}
Let $\mathcal{G}$ be a function class which satisfies
$$\sup_{g \in \mathcal{G}} \rho_{K}(g) \leq R.$$
Then there is a universal constant $C$ such that for any $a$, $C_{0}$, $C_{1}$ which satisfy
\begin{align}
a &\leq C_{1} \sqrt{n} R^2 /K, \\
a &\geq C_0 \left(\max\left\{\int_{0}^{R} \mathcal{H}_{B,K}^{1/2}(u, \mathcal{G}, P) \, du, \; R \right\} \right), \label{BrackInt} \\
C_{0}^2 &\geq C^2(C_{1} + 1),
\end{align}
we have
$$\bP\left(\sup_{g \in \mathcal{G}} \abs{\sqrt{n} (P_{n} - P)(g)} \geq a\right) \leq C          \exp\left[-\frac{a^2}{C^2(C_{1} + 1) R^2} \right].$$
\end{theorem}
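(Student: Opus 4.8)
Since the statement is Theorem~5.11 of \cite{vandeGeer}, the proof I would give is a reconstruction of the standard chaining-against-brackets argument, which I now sketch. The plan is to approximate each $g \in \mathcal{G}$ by a sequence of bracketing approximations at geometrically decreasing scales, telescope, and control each ``link'' of the chain by Bernstein's inequality, for which the Bernstein difference $\rho_K$ is precisely the right object. Concretely, fix dyadic scales $\delta_j = 2^{-j} R$, $j = 0, 1, 2, \dots$, and for each $j$ take a minimal $\rho_K$-bracketing set of $\mathcal{G}$ of radius $\delta_j$ and cardinality $N_j = \exp \mathcal{H}_{B,K}(\delta_j, \mathcal{G}, P)$. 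For each $g$, select at every scale a bracket $[g_j^L, g_j^U] \ni g$ with $\rho_K(g_j^U - g_j^L) \le \delta_j$, set $g_j = g_j^L$ so that $\rho_K(g - g_j) \le \delta_j$, and telescope:
\[
(P_n - P)(g) = (P_n - P)(g_0) + \sum_{j=1}^{T}(P_n - P)(g_j - g_{j-1}) + (P_n - P)(g - g_T),
\]
where $T$ is a stopping level chosen so that $\delta_T$ is negligible relative to the target precision, e.g.\ with $\sqrt{n}\,\delta_T^2/K \lesssim a$.

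Next I would bound the links. Each increment $g_j - g_{j-1}$ takes at most $N_j N_{j-1} \le N_j^2$ values, each with $\rho_K(g_j - g_{j-1}) \le \rho_K(g_j - g) + \rho_K(g - g_{j-1}) \le 3\delta_j$; applying Bernstein's inequality in the form $\bP(\abs{\sqrt{n}(P_n - P)(h)} \ge t) \le 2\exp(-t^2 / (2(\rho_K^2(h) + tK/\sqrt{n})))$ together with a union bound over the $N_j^2$ increments, I assign each level a budget $a_j$ with $\sum_j a_j \le a/2$ and $a_j$ a fixed multiple of $\delta_j \mathcal{H}_{B,K}^{1/2}(\delta_j, \mathcal{G}, P)$. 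The hypothesis $a \le C_1 \sqrt{n} R^2/K$ is exactly what forces $a_j K/\sqrt{n} \lesssim \delta_j^2$ at every scale, so the sub-Gaussian term of Bernstein dominates and the entropy cost $\exp(2\mathcal{H}_{B,K}(\delta_j))$ is absorbed into $a_j$. Comparing the dyadic sum $\sum_j \delta_j \mathcal{H}_{B,K}^{1/2}(\delta_j)$ with the entropy integral $\int_0^R \mathcal{H}_{B,K}^{1/2}(u, \mathcal{G}, P)\, du$, and using the outer $\max\{\cdot, R\}$ in \eqref{BrackInt} to dominate the crude top-scale term, yields the requirement on $C_0$. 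The remainder $(P_n - P)(g - g_T)$ is controlled using $g_T \le g \le g_T^U$, so $\abs{(P_n - P)(g - g_T)} \le (P_n + P)(g_T^U - g_T^L)$; since $e^x - 1 - x \ge x^2/2$ gives $\rho_K^2(\cdot) \ge \norm{\cdot}_{L_2(P)}^2$, the $P$-part is at most $\delta_T$, while the $(P_n - P)$-part is a further Bernstein bound over the $N_T$ bracket widths, made negligible by the choice of $T$.

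Finally I would collect the pieces: off the union of all failure events, $\abs{\sqrt{n}(P_n - P)(g)} \le a$ holds uniformly over $\mathcal{G}$, and the total failure probability $\sum_j C \exp(-a_j^2/(c(\delta_j^2 + a_j K/\sqrt{n})))$ is a geometric-type sum dominated by its top-scale ($j=0$, $\delta_0 = R$) term, which under the budget allocation is of order $C\exp(-a^2/(C^2(C_1+1)R^2))$. I expect the main obstacle to be the bookkeeping of constants: one must choose the budgets $a_j$ and the stopping level $T$ so that simultaneously (i) Bernstein stays in its sub-Gaussian regime at every scale (the role of $a \le C_1 \sqrt{n} R^2/K$), (ii) the summed failure probability is dominated by its largest term with exactly the stated dependence on $C_1 + 1$, and (iii) the entropy integral appears with a constant $C_0$ for which $C_0^2 \ge C^2(C_1 + 1)$ suffices. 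Everything else is routine.
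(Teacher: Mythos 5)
This statement is not proved in the paper at all: it is imported verbatim (in simplified form) from the cited reference, van de Geer's \emph{Empirical Processes in M-Estimation}, Theorem~5.11, and the authors only remark on how they specialized its hypotheses. So there is no in-paper proof to compare against; the relevant question is whether your sketch is a faithful reconstruction of the source's argument, and in outline it is --- dyadic generalized-bracketing approximations, telescoping, Bernstein's inequality for each link with the Bernstein difference $\rho_{K}$ supplying the variance proxy, budgets summing to $a$, the condition $a \leq C_{1}\sqrt{n}R^{2}/K$ keeping every level in the sub-Gaussian regime, and the entropy integral absorbing the union-bound cost. One technical caveat: $\rho_{K}$ is not a metric (it fails the triangle inequality), so your step $\rho_{K}(g_{j}-g_{j-1}) \leq \rho_{K}(g_{j}-g) + \rho_{K}(g-g_{j-1})$ is not legitimate as written; the actual proof in the reference avoids this by chaining on the bracket upper and lower functions themselves and using an adaptive truncation device, together with facts like the comparison $\rho_{K}^{2} \geq \norm{\cdot}_{L_{2}(P)}^{2}$ that you correctly invoke for the remainder term. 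That repair is standard but is more than constant bookkeeping, so if you intend to present this as a self-contained proof rather than a citation, that link-increment control is the step to rewrite.
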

Our statement of Theorem \ref{5.11} is a simplification of the full statement in the listed reference. Because we only work with bracketing entropy integrals which are convergent, we simplify according to the author's comments following the theorem, omitting the second condition in the full statement and taking the lower bound in the bracketing entropy integral in \eqref{BrackInt} to be zero.

The following lemmas will also be required.
 
\begin{lemma}[\cite{vandeGeer}, 5.8]
Suppose that
$$\norm{g}_{\infty} \leq K$$
and
$$\norm{g}_{2} \leq R.$$
Then
$$\rho_{2K}(g) \leq \sqrt{2} R.$$
\end{lemma}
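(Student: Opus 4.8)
The plan is to unwind the definition of the Bernstein difference and reduce everything to an elementary scalar inequality. Replacing $K$ by $2K$ in $\rho_K^2(g) = 2K^2 \int (e^{\abs{g}/K} - 1 - \abs{g}/K)\, dP$ gives $\rho_{2K}^2(g) = 8K^2 \int (e^{\abs{g}/(2K)} - 1 - \abs{g}/(2K))\, dP$. The role of the hypothesis $\norm{g}_\infty \le K$ is precisely to force the argument $t := \abs{g(x)}/(2K)$ into $[0, 1/2]$ for every $x$, so I only need to control the function $\phi(t) := e^t - 1 - t$ on that interval.

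First I would record the pointwise bound $\phi(t) \le \frac{1}{2} t^2 e^t$, valid for all $t \ge 0$: both sides vanish at $t = 0$, and after differentiating the claim becomes $1 - e^{-t} \le t + \frac{1}{2}t^2$, which is immediate from $1 - e^{-t} \le t$. Restricting to $t \le 1/2$ and bounding $e^t \le \sqrt e$ then gives $\phi(t) \le \frac{\sqrt e}{2}\, t^2$ on $[0,1/2]$. Substituting $t = \abs{g(x)}/(2K)$ and integrating against $P$ yields $\int (e^{\abs{g}/(2K)} - 1 - \abs{g}/(2K))\, dP \le \frac{\sqrt e}{2}\int \frac{g^2}{4K^2}\, dP = \frac{\sqrt e}{8K^2}\norm{g}_2^2$. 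Plugging this back in, $\rho_{2K}^2(g) \le 8K^2 \cdot \frac{\sqrt e}{8K^2}\norm{g}_2^2 = \sqrt e\, \norm{g}_2^2 \le \sqrt e\, R^2 \le 2R^2$, and taking square roots gives $\rho_{2K}(g) \le \sqrt 2\, R$.

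I do not expect a genuine obstacle here; the only thing to watch is the constant bookkeeping, namely that passing from $\rho_K$ to $\rho_{2K}$ multiplies the prefactor $2K^2$ by four while simultaneously halving the argument of the exponential, so the crude estimate $e^t \le \sqrt e$ on $[0,1/2]$ still leaves the final constant $\sqrt e$ comfortably below the target $2$. If a sharper constant were desired one could instead exploit that $\phi(t)/t^2$ is increasing on $(0,1/2]$, replacing $\frac{\sqrt e}{2}$ by $4\phi(1/2)$ and obtaining a bound of roughly $1.19\, R^2$, but $\sqrt 2\, R$ is exactly what the downstream applications of this lemma require.
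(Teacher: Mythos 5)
Your proof is correct: the reduction to the scalar inequality $e^t-1-t\le\tfrac12 t^2e^t$, the restriction to $t\in[0,1/2]$ via $\norm{g}_\infty\le K$, and the constant bookkeeping ($\rho_{2K}^2(g)\le\sqrt{e}\,\norm{g}_2^2\le 2R^2$) all check out. Note that the paper does not prove this statement at all—it imports it verbatim as Lemma 5.8 of van de Geer—and the standard proof there runs through the same elementary estimate (expanding $e^{|g|/2K}-1-|g|/2K$ and bounding $|g|^m\le K^{m-2}g^2$ term by term, which yields the slightly sharper constant $8(\sqrt{e}-\tfrac32)\approx1.19$ you mention), so your argument is essentially the canonical one, just packaged as a pointwise bound rather than a series comparison.
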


\begin{lemma}[\cite{vandeGeer}, 5.10] \label{5.10}
Suppose $\mathcal{G}$ is a set of functions such that
$$\sup_{g \in \mathcal{G}} \norm{g}_{\infty} \leq K.$$
Then
$$\mathcal{H}_{B,4K}(\sqrt{2}\delta, \mathcal{G}, P) \leq H_{B}(\delta, \mathcal{G}, P) \; \text{ for all } \delta > 0.$$
\end{lemma}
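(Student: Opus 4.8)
The plan is to start from a minimal $\delta$-bracketing of $\mathcal{G}$ in the $L_2(P)$ metric and show that, after a harmless truncation, the \emph{same} brackets form a $\sqrt{2}\delta$-bracketing of $\mathcal{G}$ for the Bernstein difference $\rho_{4K}$. Comparing the resulting bracketing numbers then gives the entropy inequality immediately. Concretely, fix $\delta > 0$, let $N$ be the $\delta$-bracketing number of $\mathcal{G}$ with respect to $\norm{\cdot}_2 = \norm{\cdot}_{L_2(P)}$, so that $H_B(\delta,\mathcal{G},P) = \log N$, and choose brackets $[g_j^L,g_j^U]$, $j=1,\dots,N$, with $\norm{g_j^U - g_j^L}_2 \le \delta$ and such that every $g \in \mathcal{G}$ lies in some bracket.

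First I would truncate each bracket to the band $[-K,K]$: replace $g_j^L$ by $\max\{g_j^L,-K\}$ and $g_j^U$ by $\min\{g_j^U,K\}$. Because every $g \in \mathcal{G}$ satisfies $\norm{g}_\infty \le K$, a function caught by the original bracket is still caught by the truncated one, so the truncated brackets still cover $\mathcal{G}$; moreover $0 \le \min\{g_j^U,K\} - \max\{g_j^L,-K\} \le g_j^U - g_j^L$ pointwise, so the $L_2(P)$-width is not increased, and the truncated difference additionally satisfies $\norm{\min\{g_j^U,K\}-\max\{g_j^L,-K\}}_\infty \le 2K$. Thus after truncation we have $N$ brackets covering $\mathcal{G}$, each of $L_2(P)$-width at most $\delta$ and sup-norm width at most $2K$.

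Next I would apply Lemma 5.8 of \cite{vandeGeer} to each difference $g_j^U - g_j^L$, with the roles ``$K \mapsto 2K$'' and ``$R \mapsto \delta$'': from $\norm{g_j^U-g_j^L}_\infty \le 2K$ and $\norm{g_j^U-g_j^L}_2 \le \delta$ we conclude $\rho_{2\cdot 2K}(g_j^U-g_j^L) = \rho_{4K}(g_j^U-g_j^L) \le \sqrt{2}\,\delta$. Hence the (truncated) collection is a $\sqrt{2}\delta$-bracketing of $\mathcal{G}$ in $\rho_{4K}$ of cardinality $N$, so the $\sqrt{2}\delta$-bracketing number of $\mathcal{G}$ in $\rho_{4K}$ is at most $N$, and taking logarithms gives $\mathcal{H}_{B,4K}(\sqrt{2}\delta,\mathcal{G},P) \le \log N = H_B(\delta,\mathcal{G},P)$.

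The only step requiring care -- and the ``main obstacle'', though it is a mild one -- is the truncation: one must verify that clipping the bracket endpoints to $[-K,K]$ neither drops any function of $\mathcal{G}$ from its bracket nor enlarges the $L_2$-width, and both of these rest precisely on the uniform bound $\sup_{g\in\mathcal{G}}\norm{g}_\infty \le K$. Everything else is automatic once Lemma 5.8 is in hand (which in turn uses only the monotonicity of $x \mapsto e^x - 1 - x$ on $[0,\infty)$), so no genuinely new estimate is needed.
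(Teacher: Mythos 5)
Your argument is correct: the truncation of the $L_2(P)$-brackets to $[-K,K]$ is justified exactly as you say, and applying Lemma \ref{5.10}'s companion result (\cite{vandeGeer}, 5.8) with sup-norm bound $2K$ to the truncated bracket widths yields $\rho_{4K}(g_j^U-g_j^L)\le\sqrt{2}\,\delta$, giving $\mathcal{H}_{B,4K}(\sqrt{2}\delta,\mathcal{G},P)\le H_B(\delta,\mathcal{G},P)$. The paper itself does not prove this lemma but cites \cite{vandeGeer}, and your proof is essentially the standard argument behind that cited result, so there is nothing to add.
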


\begin{lemma}[\cite{vandeGeer}, 7.2 \& 4.2] \label{7.2}
Let $p_{f}$ be of the form $p_{f} = \frac{1}{2} \log \frac{f+f_{0}}{2 f_{0}}$, as occurred in Lemma \ref{SketchBasic}. Then
$$\rho_{1}(p_{f}) \leq 4 h\left(\frac{f+f_{0}}{2}, f_{0}\right) \leq \frac{ 4 h(f, f_{0})}{\sqrt{2}}$$
\end{lemma}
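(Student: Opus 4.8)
The plan is to prove the two inequalities in turn, writing $\bar f := \tfrac{f+f_0}{2}$. The structural fact that makes everything go through is that, since $f \geq 0$, we have $\bar f \geq \tfrac12 f_0$ pointwise, so $e^{-p_f} = \sqrt{f_0/\bar f} \leq \sqrt 2$; equivalently, $p_f \geq -\tfrac12\log 2$ is bounded below, and this is the only property of $p_f$ we will use beyond its definition.

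For $\rho_1(p_f) \leq 4\,h(\bar f, f_0)$, the elementary ingredient is the inequality
$$ e^{|v|} - 1 - |v| \;\leq\; e^{v} + e^{-v} - 2 \;=\; e^{-v}(e^{v}-1)^2, \qquad v \in \R, $$
which, upon separating $v \geq 0$ and $v < 0$, reduces in both cases to the tangent-line bound $e^{-t} \geq 1-t$. Applying it pointwise with $v = p_f$, integrating against $dP$ (recall $dP = f_0\,dx$), and then using $e^{-p_f} \leq \sqrt 2$ gives
$$ \rho_1^2(p_f) = 2\int\bigl(e^{|p_f|} - 1 - |p_f|\bigr)\,dP \;\leq\; 2\int e^{-p_f}(e^{p_f}-1)^2\,dP \;\leq\; 2\sqrt 2 \int (e^{p_f}-1)^2\,dP. $$
Since $e^{p_f} = \sqrt{\bar f/f_0}$, the last integral equals $\int(\sqrt{\bar f}-\sqrt{f_0})^2\,dx = 2\,h^2(\bar f, f_0)$, whence $\rho_1^2(p_f) \leq 4\sqrt 2\,h^2(\bar f, f_0) \leq 16\,h^2(\bar f, f_0)$, as claimed.

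For $h(\bar f, f_0) \leq h(f,f_0)/\sqrt 2$ it suffices to prove the pointwise inequality $(\sqrt{\bar f} - \sqrt{f_0})^2 \leq \tfrac12(\sqrt f - \sqrt{f_0})^2$ and integrate. Writing $x = \sqrt f$, $y = \sqrt{f_0}$ and rationalizing,
$$ \sqrt{\bar f} - \sqrt{f_0} \;=\; \frac{\bar f - f_0}{\sqrt{\bar f} + \sqrt{f_0}} \;=\; \frac{(x-y)(x+y)}{2\bigl(\sqrt{\bar f} + y\bigr)}; $$
since $\sqrt{\bar f} = \sqrt{(x^2+y^2)/2} \geq \tfrac12(x+y)$ by QM--AM, the pointwise bound reduces to $x + y \leq \sqrt 2\,(\sqrt{\bar f}+y)$, which holds because $x + (1-\sqrt2)y \leq x \leq \sqrt{x^2+y^2} = \sqrt 2\,\sqrt{\bar f}$. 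Integrating gives $h^2(\bar f, f_0) \leq \tfrac12 h^2(f,f_0)$, and chaining with the previous paragraph yields $\rho_1(p_f) \leq 4\,h(\bar f, f_0) \leq 4\,h(f,f_0)/\sqrt 2$. The only step requiring care is the elementary exponential inequality and recognizing that the lower bound on $p_f$ is precisely what is needed to pass from $e^{|p_f|}-1-|p_f|$ to a constant multiple of $(e^{p_f}-1)^2$; one should also keep in mind that $\rho_1$ and $h$ are computed against the same measure, namely the true distribution $P$ with density $f_0$, so that the identification $\int(e^{p_f}-1)^2\,dP = 2h^2(\bar f,f_0)$ is legitimate.
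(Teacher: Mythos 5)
Your proof is correct: the paper does not prove this lemma itself (it simply cites van de Geer, Lemmas 7.2 and 4.2), and your self-contained argument --- bounding $e^{|v|}-1-|v|$ by $e^{-v}(e^{v}-1)^2$, exploiting $e^{-p_f}\le\sqrt{2}$ from $\bar f\ge f_0/2$, identifying $\int(e^{p_f}-1)^2\,dP=2h^2(\bar f,f_0)$ since $dP=f_0\,dx$, and the pointwise comparison $(\sqrt{\bar f}-\sqrt{f_0})^2\le\tfrac12(\sqrt f-\sqrt{f_0})^2$ --- is exactly the standard route behind those citations, and it even yields the slightly sharper constant $\rho_1(p_f)\le(4\sqrt2)^{1/2}\,h(\bar f,f_0)$. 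One cosmetic remark: the QM--AM step is never actually used, since the reduction to $x+y\le\sqrt2\,(\sqrt{\bar f}+y)$ is pure algebra from the rationalization and your bound $x\le\sqrt{x^2+y^2}=\sqrt2\,\sqrt{\bar f}$ already closes it.
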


\begin{lemma} \label{rhorelate}
Let $L$ and $K$ be natural numbers such that $L > K$. Then for any function $g$, $\rho_{K}(g) \geq \rho_{L}(g)$.
\end{lemma}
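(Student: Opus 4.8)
The plan is to reduce the inequality $\rho_{K}(g) \ge \rho_{L}(g)$ to a pointwise monotonicity statement about the function appearing under the integral sign. Since $\rho_{K}(g)$ and $\rho_{L}(g)$ are nonnegative by definition, it suffices to prove $\rho_{K}^{2}(g) \ge \rho_{L}^{2}(g)$, and since
$$\rho_{K}^{2}(g) = 2K^{2}\int \left( e^{|g|/K} - 1 - |g|/K \right) dP$$
with $P$ a (positive) probability measure, it is enough to show that for every fixed $t \ge 0$ the map $\kappa \mapsto 2\kappa^{2}\left( e^{t/\kappa} - 1 - t/\kappa \right)$ is nonincreasing on $(0,\infty)$; then apply this with $t = |g(x)|$ for each $x$ and integrate against $dP$, invoking $L > K$.

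For the monotonicity claim I would substitute $u = 1/\kappa$, rewriting the function as $\phi_{t}(u) = \tfrac{2}{u^{2}}\left( e^{tu} - 1 - tu \right)$, and expand the exponential in a power series. The terms of order $0$ and $1$ cancel, leaving $\phi_{t}(u) = 2\sum_{m \ge 0} \tfrac{t^{m+2}}{(m+2)!}\, u^{m}$, a power series with nonnegative coefficients. Each monomial $u^{m}$ is nondecreasing on $(0,\infty)$, so $\phi_{t}$ is nondecreasing in $u$, hence $\kappa \mapsto \phi_{t}(1/\kappa)$ is nonincreasing in $\kappa$, as required. A direct calculus alternative is available: writing $s = t/\kappa$, the derivative in $\kappa$ has the sign of $\psi(s) := (2 - s)e^{s} - 2 - s$, and $\psi(0) = \psi'(0) = 0$ while $\psi''(s) = -s e^{s} \le 0$ for $s \ge 0$, so $\psi(s) \le 0$ on $[0,\infty)$.

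There is no real obstacle here — the whole argument is elementary — and the only point that merits a word of care is the passage from the pointwise inequality of integrands to the inequality of integrals, which is immediate since the integrand is nonnegative and $P$ is a positive measure (equivalently, monotonicity of a sum of nonnegative nondecreasing functions requires no delicate interchange). Assembling these pieces yields $\rho_{K}^{2}(g) \ge \rho_{L}^{2}(g)$, and taking square roots gives $\rho_{K}(g) \ge \rho_{L}(g)$.
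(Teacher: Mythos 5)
Your proposal is correct and its main argument is essentially the paper's proof: expand $e^{|g|/K}-1-|g|/K$ in a Taylor series so that $\rho_K^2(g)=2\int\sum_{m\ge 2}\frac{|g|^m}{m!\,K^{m-2}}\,dP$, note the coefficients are nonincreasing in $K$ for $m\ge 2$, and compare termwise before integrating (you phrase this as pointwise monotonicity in $\kappa$ and then integrate, which is the same computation). The supplementary calculus check via $\psi(s)=(2-s)e^s-2-s$ is a fine alternative but not needed.
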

\begin{proof}
From the Taylor series expansion of $e^x$,
\begin{align*}
\rho^{2}_{K}(g) &= 2 K^2 \int \left( e^{\abs{g}/K}-1-\abs{g}/K \right) \, dP \\
&= 2 \int K^2 \sum_{m=2}^{\infty} \frac{\abs{g}^m}{m! \cdot K^m} \, dP \\
&= 2 \int \sum_{m=2}^{\infty} \frac{\abs{g}^m}{m! \cdot K^{m-2}} \, dP \\
&\geq 2 \int \sum_{m=2}^{\infty} \frac{\abs{g}^m}{m! \cdot L^{m-2}} \, dP \\
&= \rho^{2}_{L}(g)
\end{align*}
\end{proof}

This last lemma is a culmination of new results on bracketing entropy. Its proof can be found in Appendix \ref{AppB}, along with other contributions on bracketing entropy of function classes with uniformly bounded variation. We denote the quantity $1+I(f)+I_{0}$ by $J(f)$.

\begin{lemma} \label{CoverAndBound}
The set of functions $\mathcal{P}_{M} = \{p_{f}: J(f) \leq M\}$ satisfies, for some constant $A$,
$$H_{B}(\delta, \mathcal{P}_{M}, P) \leq A \cdot \frac{M}{\delta}, \; \; \forall \delta > 0.$$
Furthermore, $p_{f} \in \mathcal{P}_{M}$ implies $\norm{p_{f}}_{\infty} < M$.
\end{lemma}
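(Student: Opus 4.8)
The plan is to prove the two assertions separately, reducing both to elementary properties of the map $p_f = \tfrac12\log\tfrac{f+f_0}{2f_0}$ together with the classical bracketing-entropy estimate for bounded-variation balls. Throughout I will use that log-densities of bounded variation on a compact interval are uniformly bounded; concretely, after normalizing the domain to unit length (harmless, since both $\TV$ and $h^2$ are invariant under rescaling an edge), the constraint $\int_0^1 f\,dx = 1$ forces $\inf\log f \le 0 \le \sup\log f$, so $\norm{\log f}_\infty \le \TV(\log f) = I(f)$, and likewise $\norm{\log f_0}_\infty \le I_0$.

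For the uniform bound I would argue directly on $p_f$. From $f + f_0 \ge f_0 > 0$ we get $\tfrac{f+f_0}{2f_0}\ge\tfrac12$, hence $p_f \ge -\tfrac12\log 2 \ge -\tfrac12$; from $f+f_0 \le 2\max(f,f_0)$ we get $\tfrac{f+f_0}{2f_0}\le\max(1,f/f_0)$, hence $p_f \le \tfrac12(\log f - \log f_0)_+ \le \tfrac12(\norm{\log f}_\infty + \norm{\log f_0}_\infty) \le \tfrac12(I(f)+I_0)$. When $J(f) = 1+I(f)+I_0 \le M$ this gives $-\tfrac12 \le p_f \le \tfrac{M}{2}$, so $\norm{p_f}_\infty \le M/2 < M$ for $M \ge 1$.

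For the bracketing bound the key move is to bracket $\log f$ rather than $p_f$. When $J(f) \le M$, the function $\log f$ lies in the bounded-variation ball $\{g : \TV(g) \le M,\ \norm{g}_\infty \le M\}$, and the $L_2(P)$-bracketing entropy of such a ball is $O(M/\delta)$: via the Jordan decomposition one writes a BV function as a difference of two monotone functions with range controlled by $M$, and combining brackets for the two monotone pieces reduces matters to the distribution-free bracketing estimate for bounded monotone functions, which is $O((\text{range})/\delta)$ — this is the ``similar to monotonic shape-constrained estimation'' argument and is what the appendix supplies. It then remains to push brackets for $\log f$ forward to brackets for $p_f$: for each $x$ the map $\Phi_x(t):=\tfrac12\log\tfrac{e^t+f_0(x)}{2f_0(x)}$ is strictly increasing with $0<\Phi_x'(t)=\tfrac12\tfrac{e^t}{e^t+f_0(x)}\le\tfrac12$, i.e. it is nondecreasing and $\tfrac12$-Lipschitz uniformly in $x$, and $p_f(x)=\Phi_x(\log f(x))$. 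Hence if $[g^L,g^U]$ is an $L_2(P)$-bracket of size $\delta$ for $\log f$, then $\bigl[\,x\mapsto\Phi_x(g^L(x)),\ x\mapsto\Phi_x(g^U(x))\,\bigr]$ is an $L_2(P)$-bracket for $p_f$ of size at most $\tfrac12\delta$; consequently $H_B(\delta,\mathcal{P}_M,P) \le H_B(2\delta,\{\log f: J(f)\le M\},P) \le A\cdot M/\delta$.

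The main obstacle is conceptual rather than technical, and it is exactly this routing: $p_f$ itself does \emph{not} have total variation controlled linearly in $M$ (bounding $\TV(p_f)$ through $\TV(f) \le \norm{f}_\infty I(f)$ introduces a factor exponential in $I(f)$), so the BV bracketing bound cannot be applied to $\mathcal{P}_M$ directly; bracketing $\log f$ — which genuinely sits in a BV ball of radius $\asymp M$ — and then transporting the brackets along the contraction $t\mapsto\Phi_x(t)$ is what makes the $M/\delta$ rate come out without loss. The remaining ingredients, namely the uniform bound on log-densities and the BV bracketing-number estimate, are elementary or classical.
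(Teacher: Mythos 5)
Your proof is correct, but it takes a genuinely different route for the entropy bound than the paper does, and the ``obstacle'' you describe is a misperception. The paper brackets $\mathcal{P}_M$ directly: Lemma \ref{claimslemma} observes that $\TV(\log(g+a))\le\TV(\log g)$ for any constant $a\ge 0$, which applied with $g=f/(2f_0)$ and $a=\tfrac12$ gives $\TV(p_f)\le\tfrac12\bigl(\TV(\log f)+\TV(\log f_0)\bigr)\le J(f)/2$; since $\int f=\int f_0=1$ forces $p_f$ to take both nonnegative and nonpositive values, this yields $\norm{p_f}_\infty\le M/2$ with no rescaling, and the BV-ball bracketing estimate (Jordan decomposition plus the monotone-class bound of \cite{vandervaart}, i.e.\ Lemmas \ref{diff}--\ref{brackent}) is then applied to $\mathcal{P}_M$ itself. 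So $\TV(p_f)$ \emph{is} controlled linearly in $M$; the exponential loss you worry about only appears if one passes through $\TV(f)\le\norm{f}_\infty I(f)$, which the paper's $\log(g+a)$ inequality avoids. Your alternative---bracket the log-densities, which after the unit-length normalization lie in a BV ball of radius of order $M$ with $\norm{\log f}_\infty\le I(f)$, and transport brackets through the pointwise map $\Phi_x(t)=\tfrac12\log\tfrac{e^t+f_0(x)}{2f_0(x)}$, nondecreasing and $\tfrac12$-Lipschitz uniformly in $x$---is equally valid: monotonicity preserves the bracket ordering, the Lipschitz bound contracts the $L_2(P)$ width, and $H_B(\delta,\mathcal{P}_M,P)\le H_B(2\delta,\{\log f:J(f)\le M\},P)=O(M/\delta)$ follows; your uniform bound $\norm{p_f}_\infty\le M/2<M$ is also sound. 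What each approach buys: the paper's keeps every estimate inside $\mathcal{P}_M$ and gets the sup-norm claim with no normalization argument, while yours never needs to compute $\TV(p_f)$ and isolates the transfer as a generic ``monotone Lipschitz maps preserve brackets'' principle, at the mild cost of the rescaling step and the bound $\norm{\log f}_\infty\le I(f)$.
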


With these lemmas in hand, we are ready to state and prove our main results. We will prove a sequence of constrained results, and then use a peeling device to obtain the concentration inequalities. The method of proof, and particularly our use of the peeling device, is interesting in its own right. Our first result is Lemma \ref{LemSktch3}, which establishes bounds for the supremum of the empirical process indexed by $\{f: J(f) \leq M \text{ and } h(f,f_{0}) \leq \delta\}$ for constants $\delta$ and $M$. 

\emph{Proof of Lemma \ref{LemSktch3}.}

By Lemma \ref{7.2}, $h(f,f_{0}) \leq \delta$ gives that $\rho_{1}(p_{f}) \leq \frac{4}{\sqrt{2}} \delta = 2^{3/2} \delta$. 

By Lemma \ref{rhorelate}, $\rho_{1}(p_{f}) \leq 2^{3/2} \delta$ gives that $\rho_{4M}(p_{f}) \leq 2^{3/2} \delta$ for all $M \geq 1$.

By Lemma \ref{CoverAndBound}, $p_{f} \in \mathcal{P}_{M}$ gives that $\norm{p_{f}}_{\infty} \leq M$. From Lemmas \ref{5.10} and \ref{CoverAndBound}.
$$\mathcal{H}_{B,4M}(\delta, \mathcal{P}_{M}, P) \leq H_{B}(\delta/\sqrt{2}, \mathcal{P}_{M}, P) \leq \frac{A \sqrt{2} M}{\delta}.$$

Collecting these facts, we seek to apply Theorem \ref{5.11}. We have $\rho_{4M}(p_{f}) \leq 2^{3/2} \delta$. From the conditions in the theorem (with $R = 2^{3/2} \delta$, $K = 4M$ and $a = 2^{-1/2} C_{1} \sqrt{M} \delta^{1/2}$), it suffices to choose $\delta, C_{0}$, $C_{1}$ such that
\begin{align}
a &\leq C_{1} \sqrt{n} \frac{2^3 \delta^2}{4 M} = \frac{2C_{1} \sqrt{n} \delta^2}{M} \label{1st}\\
a &\geq C_{0} \int_{0}^{R} H_{B}^{1/2} (u/\sqrt{2}, \mathcal{P}_{M}, P) = 2 C_0 \sqrt{A M \delta} \label{3rd}\\ 
C_{0}^{2} &\geq C^{2}(C_{1}+1) \label{4th}
\end{align}

Choose $C_{1} = 2 C_0 \sqrt{2 A}$. Then \eqref{1st} is satisfied for $\delta \geq \frac{M}{2} \cdot n^{-1/3}$, $\eqref{3rd}$ is satisfied by the choice of $a$, and $\eqref{4th}$ is satisfied for large enough $C_{0}$. By Theorem \ref{5.11} we have for all $\delta \geq \frac{M}{2} \cdot n^{-1/3}$ (if $C_{1} \geq 1$)
\begin{align*}
\bP\left(\sup_{p_{f} \in \mathcal{P}_{M}, h(f,f_{0}) \leq \delta} \abs{\sqrt{n} (P_{n}-P)(p_{f})} \geq 2 C_{1} \sqrt{M} \delta^{1/2} \right) &\leq C \exp\left[-\frac{4 C_{1}^2 M \delta}{C^2(C_1+1) 2^3 \delta^2} \right]\\
&\leq C \exp \left[-\frac{C_{1} M \delta^{-1}}{4 C^2} \right]
\end{align*}
\qed

\begin{theorem} \label{OutThm}
There are constants $c$, $n_{0}$ and $t_0$ so that when $n \geq n_{0}$ and $T \geq t_0$
$$\bP\left(\sup_{p_f \in P, h(f,f_{0}) > n^{-1/3} J(f)} \frac{\abs{\sqrt{n} (P_{n} - P)(p_{f})}}{h^{1/2}(f,f_{0}) J^{1/2}(f)}\geq T \right) \leq c \exp \left[-\frac{T}{c^2}\right].$$
\end{theorem}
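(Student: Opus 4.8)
The plan is to deduce Theorem~\ref{OutThm} from the single-cell bound of Lemma~\ref{LemSktch3} via a two-dimensional peeling device, peeling simultaneously in the complexity level $M = J(f) = 1 + I(f) + I_0$ and in the Hellinger radius $\delta = h(f,f_0)$. The starting observation is a truncation: since $J(f) \ge 1$ and $h(f,f_0) \le 1$ always, any $f \in \mathcal F$ with $h(f,f_0) > n^{-1/3} J(f)$ satisfies $1 \le J(f) < n^{1/3}$, so only finitely many dyadic blocks $J(f) \in [2^j, 2^{j+1})$ — those with $2^j < n^{1/3}$ — are nonempty, and inside each such block $h(f,f_0)$ is confined to $\big(2^j n^{-1/3},\, 1\big]$.

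For a fixed $j$ I would set $\delta_0 = 2^j n^{-1/3}$, $\delta_k = 2^k \delta_0$, and decompose the relevant part of the index set into cells $\cE_{j,k} = \{\, p_f : J(f) \in [2^j, 2^{j+1}),\ h(f,f_0) \in (\delta_k, \delta_{k+1}] \,\}$, for $k$ ranging over the (finitely many) values with $\delta_k \le 1$, i.e. $k$ up to some $K_j \asymp \tfrac13 \log_2 n - j$. On $\cE_{j,k}$ the denominator obeys $h^{1/2}(f,f_0) J^{1/2}(f) \ge \delta_k^{1/2}\, 2^{j/2}$, and $\cE_{j,k} \subseteq \mathcal{P}_{2^{j+1}} \cap \{h(f,f_0) \le \delta_{k+1}\}$; hence the event that the normalized empirical process exceeds $T$ somewhere on $\cE_{j,k}$ forces $\sup|\sqrt n (P_n - P)(p_f)|$ over this larger set to be at least $T \delta_k^{1/2} 2^{j/2}$, which a one-line computation identifies with $2 C_1 (2^{j+1})^{1/2} \delta_{k+1}^{1/2}$ precisely when $C_1 = T/4$. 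I would then invoke Lemma~\ref{LemSktch3} with $M = 2^{j+1}$, $\delta = \delta_{k+1}$, $C_1 = T/4$ (valid once $T \ge 4 c_1 =: t_0$, and noting $\delta_{k+1} \ge 2\delta_0 \ge \tfrac{M}{2} n^{-1/3}$), obtaining $\bP(\cE_{j,k}\text{ is bad}) \le C \exp\!\big[-\tfrac{C_1 \, 2^{j+1}}{4 C^2 \delta_{k+1}}\big] = C\exp\!\big[-\tfrac{T\, 2^{-k} n^{1/3}}{16 C^2}\big]$.

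The final step is a union bound over all cells. The key point is that, for fixed $j$, the quantities $2^{-k} n^{1/3}$ run through a geometric ladder $\asymp 2^{j+1}, 2^{j+2}, \dots, n^{1/3}$ as $k$ decreases from $K_j$, so $\sum_k C \exp\!\big[-\tfrac{T 2^{-k} n^{1/3}}{16 C^2}\big]$ is dominated by its leading ($k = K_j$) term and is at most $2 C \exp\!\big[-\tfrac{T 2^j}{8 C^2}\big]$ once $T$ is large; summing the latter over $j \ge 0$ gives another super-geometric series bounded by $4 C \exp\!\big[-\tfrac{T}{8 C^2}\big]$. Taking $c = 4C$ (so that $c^2 = 16C^2$), $t_0 = 4 c_1$, and $n_0$ large enough that the block ranges are well-defined then yields the claimed bound $\bP(\cdots) \le c \exp[-T/c^2]$.

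The main obstacle is the bookkeeping of this double peeling: the $\delta$-grid inside each $J$-block must be chosen so that the hypothesis $\delta \ge \tfrac{M}{2} n^{-1/3}$ of Lemma~\ref{LemSktch3} is met uniformly while the cells still cover $\{h > n^{-1/3} J\}$, and one must verify that the doubly-indexed sum of exponential tails collapses to a single clean bound $c \exp[-T/c^2]$ rather than diverging. This works because the constraint $h \le 1$ truncates the range of $J(f)$ and because both partial sums are governed by their leading terms, but aligning the constants so that the same $c$ controls both the prefactor and the exponent requires some care.
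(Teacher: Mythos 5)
Your proposal is correct and is essentially the paper's own argument: the paper also deduces Theorem \ref{OutThm} from Lemma \ref{LemSktch3} by applying the peeling device twice, once in the Hellinger radius and once in $J(f)$ (done there as a nested two-step peel with downward dyadic shells in $h$, rather than your simultaneous two-dimensional cells anchored at $\delta_0 = 2^j n^{-1/3}$), and the geometric summation of the exponential tails is handled the same way. The only blemish is the inner-sum bound $2C\exp[-T2^j/(8C^2)]$, whose exponent should be $T2^j/(16C^2)$ given your leading-term computation; this is immaterial since the final constant $c$ is arbitrary and $c=4C$, $c^2=16C^2$ still yields $c\exp[-T/c^2]$.
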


\begin{proof}
We first prove the following: there are constants $n_{0}$, $t_{0}$ and $c_{0}$ such that for all $n \geq n_{0}$, $T \geq t_{0}$, and $M \geq 1$
\begin{equation}\label{WasALemma}
\bP\left(\sup_{p_f \in \mathcal{P}_{M}, h(f,f_{0}) > \frac{M}{2} n^{-1/3}} \frac{\abs{\sqrt{n} (P_{n} - P)(p_{f})}}{h^{1/2}(f,f_{0})} \geq T \sqrt{\frac{M}{2}} \right) \leq c_{1} \exp\left[- \frac{T M}{c^{2}_{1}}\right].
\end{equation}

The proof of this claim is an application of the peeling device \cite{vandeGeer}[Section 5.3] to Lemma \ref{LemSktch3}. Let $S = \min\{s \in \N: 2^{-s} < \frac{M}{2} n^{-1/3}\}$. We will form a union bound by partitioning into sets with $\{2^{-s-1} < h(f,f_0) \leq 2^{-s}\}$ for integer-valued $s$. Because Hellinger distance is bounded above by $1$, we need not consider negative values of $s$. Let $T = 4 C_{1}$. Applying this union bound, we have
\begin{align*}
& \bP\left(\sup_{p_f \in \mathcal{P}_{M}, h(f,f_{0}) > \frac{M}{2} n^{-1/3}} \frac{\abs{\sqrt{n} (P_{n} - P)(p_{f})}}{h^{1/2}(f,f_{0})} \geq T \sqrt{\frac{M}{2}} \right) \\
\leq  & \sum_{s=1}^{S} \bP \left(\sup_{p_f \in \mathcal{P}_{M}, 2^{-s} < h(f,f_{0}) \leq 2^{-s+1}} \frac{\abs{\sqrt{n} (P_{n} - P)(p_{f})}}{h^{1/2}(f, f_{0})} \geq T \sqrt{\frac{M}{2}} \right) \\
\leq & \sum_{s=1}^{S} \bP \left(\sup_{p_f \in \mathcal{P}_{M}, h(f,f_{0}) \leq 2^{-s+1}}  \abs{\sqrt{n} (P_{n} - P)(p_{f})} \geq 2^{\frac{-s+1}{2}} \cdot 2 C_{1} \sqrt{M} \right) 
\end{align*}
We have $2^{-s+1} \geq \frac{M}{2} n^{-1/3}$ for $s \leq S$, so applying Lemma \ref{LemSktch3} gives the further bound
\begin{align}
& \leq  \sum_{s=1}^{S} C \cdot \exp \left[-\frac{C_{1} \cdot M \cdot (2^{-s+1})^{-1}}{4 C^2}\right] \nonumber \\
& =  \sum_{s=1}^{S} C \cdot \exp \left[-\frac{C_{1} \cdot M \cdot 2^{s-1}}{4 C^2}\right] \nonumber\\
& \leq  \sum_{s=1}^{S} C \cdot \exp\left[-\frac{C_{1} M}{8 C^2} - \frac{2^{s-2}}{4C^2}\right] \label{algebra}\\
& \leq  \exp\left[-\frac{C_{1} M}{8 C^2} \right] \sum_{s=1}^{S} C \exp\left[-\frac{2^{s-2}}{4C^2}\right] \nonumber\\ 
& =  c_{1} \exp\left[-\frac{T M}{c_{1}^{2}}\right]. \nonumber
\end{align}
Here, $c_1$ is some constant, since the final summation is convergent as $S$ approaches infinity. The third inequality in this chain follows from $C_{1} M 2^{s-1} \geq \frac{M C_{1}}{2} + M C_{1} 2^{s-2}$, so that when $M \geq 1$ and $C_{1} \geq 1$, 
$$C_1 M 2^{s-1} \geq \frac{M C_{1}}{2} + 2^{s-2}.$$
Of course, it suffices to consider $M \geq 1$ because $J(f) \geq 1$. This proves the claim.

We use the claim to prove the result by again applying the peeling device, but this time with respect to $J(f)$. Because $J(f) \geq 1$, we need only peel in sets $\{2^s \leq J(f) \leq 2^{s+1}\}$ for $s \geq 0$. This gives
\begin{align*}
& \bP\left( \sup_{p_f \in \mathcal{P}, h(f,f_{0}) > n^{-1/3} J(f)} \frac{\abs{\sqrt{n} (P_{n} - P)(p_{f})}}{h^{1/2}(f,f_0) J^{1/2}(f)} \geq T\right) \\
\leq &\sum_{s=0}^{\infty} \bP\left( \sup_{p_f \in \mathcal{P}, J(f) \leq 2^{s+1}, h(f,f_0) > n^{-1/3} 2^{s}} \frac{\abs{\sqrt{n} (P_{n} - P)(p_{f})}}{h^{1/2}(f,f_0)} \geq T 2^{s/2}\right).
\end{align*}
Applying the claim and manipulating as in \eqref{algebra}, there is a constant $c$ which permits the following bound.
\begin{align*}
\leq &\sum_{s=0}^{\infty} c_1 \exp \left[-\frac{T 2^{s+1}}{c_{1}^{2}}\right] \\
\leq &\exp \left[-\frac{T}{2 c_{1}^{2}}\right] \sum_{s=0}^{\infty} c_1 \exp\left[-\frac{2^s}{c_{1}^2} \right] \\
\leq &c \exp\left[-\frac{T}{c^2}\right].
\end{align*}
\end{proof}

\begin{theorem} \label{InCor}
There are constants $n_{0}$, $t_{0}$, and $c$ such that for all $n \geq n_{0}$ and $T \geq t_0$
$$\bP\left(\sup_{p_{f} \in \mathcal{P}, h(f,f_0) \leq n^{-1/3} J(f)} \frac{\abs{n^{2/3} (P_n - P)(p_{f})}}{J(f)} \geq T \right) \leq c_{0} \exp \left[- \frac{T}{c_{0}^{2}}\right]$$
\end{theorem}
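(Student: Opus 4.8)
The plan is to follow the two-stage strategy used for Theorem \ref{OutThm}: a constrained concentration bound on $\mathcal{P}_M$ intersected with the small-Hellinger region, from a single application of Lemma \ref{LemSktch3}, followed by peeling over $J(f)$. The difference from the ``large $\hat h$'' case is that dividing by $J(f)$ instead of by $h^{1/2}(f,f_0)$ does \emph{not} inject a factor of $M$ into the exponent coming from Lemma \ref{LemSktch3}; so the shell sum will not converge geometrically as it did in Theorem \ref{OutThm}, and the deterministic $\ell_\infty$ bound on $p_f$ from Lemma \ref{CoverAndBound} has to be used to keep the number of contributing shells finite.

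\emph{Constrained bound.} Fix $M\ge1$ and apply Lemma \ref{LemSktch3} to the class $\{p_f\in\mathcal{P}_M:h(f,f_0)\le n^{-1/3}M\}$ with $\delta=n^{-1/3}M$, which satisfies $\delta\ge\tfrac{M}{2}n^{-1/3}$. Since $2C_1\sqrt M\,\delta^{1/2}=2C_1Mn^{-1/6}$, $C_1M\delta^{-1}=C_1n^{1/3}$, and $n^{2/3}\abs{(P_n-P)(p_f)}=n^{1/6}\cdot\sqrt n\,\abs{(P_n-P)(p_f)}$, Lemma \ref{LemSktch3} yields, for every $C_1\ge c_1$,
$$\bP\Bigl(\sup_{p_f\in\mathcal{P}_M,\ h(f,f_0)\le n^{-1/3}M}n^{2/3}\,\abs{(P_n-P)(p_f)}\ \ge\ 2C_1M\Bigr)\ \le\ C\exp\Bigl[-\frac{C_1n^{1/3}}{4C^2}\Bigr].$$
This plays the role of \eqref{WasALemma}; note the exponent carries no $M$, because the Hellinger radius permitted by the constraint is itself of order $Mn^{-1/3}$.

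\emph{Peeling over $J(f)$ with a shell cutoff.} By Lemma \ref{CoverAndBound} (applied with $M=J(f)$), $\norm{p_f}_\infty<J(f)$, hence $\abs{(P_n-P)(p_f)}\le 2\norm{p_f}_\infty<2J(f)$ and $n^{2/3}\abs{(P_n-P)(p_f)}/J(f)<2n^{2/3}$ for every $f$. So the event in the statement is empty on each dyadic shell $\{2^s\le J(f)\le2^{s+1}\}$ with $2^s>2n^{2/3}/T$, and one only has to union over $0\le s\le S_{\max}:=\lceil\log_2(2n^{2/3}/T)\rceil$. On shell $s$, the event forces $n^{2/3}\abs{(P_n-P)(p_f)}\ge T2^s$ and $h(f,f_0)\le n^{-1/3}J(f)\le n^{-1/3}2^{s+1}$; taking $C_1=T/4$ (admissible once $T\ge t_0:=4c_1$, with $c_1\ge1$ without loss) makes $2C_1\cdot2^{s+1}=T2^s$, so the display above with $M=2^{s+1}$ bounds the shell-$s$ probability by $C\exp[-Tn^{1/3}/(16C^2)]$.

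\emph{Conclusion.} Summing over the $S_{\max}+1=O(\log n)$ shells gives $\bP(\sup\ge T)\le(S_{\max}+1)\,C\exp[-Tn^{1/3}/(16C^2)]$. Since $S_{\max}+1\le\log_2(2n^{2/3})+2$ (using $T\ge t_0\ge1$) grows only logarithmically while $\exp[-Tn^{1/3}/(16C^2)]=\exp[-T/(16C^2)]\exp[-T(n^{1/3}-1)/(16C^2)]$ decays super-polynomially in $n$, one may enlarge $n_0$ so that the prefactor is absorbed by $\exp[-T(n^{1/3}-1)/(16C^2)]$ for all $n\ge n_0$, $T\ge t_0$, leaving $C\exp[-T/(16C^2)]\le c_0\exp[-T/c_0^2]$ with $c_0=4C$. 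The main obstacle is precisely the point flagged at the outset: because the normalization here lacks the $h^{1/2}$ that made the exponent in Lemma \ref{LemSktch3} scale with $M$ in Theorem \ref{OutThm}, the shell sum has constant terms and one must truncate the peeling using the deterministic bound $\norm{p_f}_\infty<J(f)$ before the union bound can be closed.
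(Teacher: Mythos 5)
There is a genuine gap, and it is exactly at the truncation step on which your whole shell sum depends. From Lemma \ref{CoverAndBound} you get $\norm{p_f}_{\infty} < J(f)$, hence $\abs{(P_n-P)(p_f)} < 2J(f)$; but after dividing by $J(f)$ the factor of $J(f)$ \emph{cancels}, leaving the bound $n^{2/3}\abs{(P_n-P)(p_f)}/J(f) < 2n^{2/3}$ uniformly over all $f$ --- it does not decay like $2n^{2/3}/J(f)$ across shells. Consequently the event $\{n^{2/3}\abs{(P_n-P)(p_f)}/J(f) \geq T\}$ is not empty on shells with $2^s > 2n^{2/3}/T$ (it is only empty when $T \geq 2n^{2/3}$, which says nothing for fixed $T \geq t_0$ and large $n$), so there is no finite $S_{\max}$ and you must union over infinitely many shells. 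Since your per-shell bound $C\exp\left[-Tn^{1/3}/(16C^2)\right]$ carries no $s$-dependence --- precisely because you applied Lemma \ref{LemSktch3} once per shell at the maximal radius $\delta = n^{-1/3}M$, which makes $M\delta^{-1} = n^{1/3}$ independent of $M$ --- the union bound diverges and the argument collapses. Nor can this be patched by simply enlarging $\delta$ or $C_1$ on the large-$J$ shells: any single application of Lemma \ref{LemSktch3} per $J$-shell with the threshold matched to $T2^s$ yields an exponent that is constant in $s$, so the tail of the shell sum never converges.

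The mechanism that closes the sum, and the route the paper takes, is a \emph{second} peeling inside each $J$-shell, this time in $h(f,f_0)$: on $\{2^{-r-1} < h(f,f_0) \leq 2^{-r}\} \cap \{2^s \leq J(f) \leq 2^{s+1}\}$ one writes the normalizer as $\sqrt{J(f)}\sqrt{J(f)n^{-1/3}}$, lower-bounds $J(f)n^{-1/3}$ by $h(f,f_0) > 2^{-r-1}$ (and by $2^{-R_s-2}$ on the innermost core piece), and applies Lemma \ref{LemSktch3} at radius $\delta \asymp 2^{-r}$. This injects the factor $M\delta^{-1} \asymp 2^{s+1}2^{r}$ into the exponent, so each term is of the form $c_1\exp\left[-T\,2^{s+1}2^{r}/c_1^2\right]$, and both geometric sums (over $r$ up to $R_s$ and over $s \geq 0$) converge, via the manipulation in \eqref{algebra}, to a bound $c_0\exp\left[-T/c_0^2\right]$ that holds uniformly in $n$ --- with no need for the $n$-dependent absorption of a logarithmic prefactor that your conclusion requires. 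Your constrained display (the analogue of \eqref{WasALemma}) is correct and coincides with the paper's treatment of the core piece $\mathcal{N}_{s,R_s}$; what is missing is the $r$-peeling that supplies the $s$-dependence in the exponent, which your deterministic $\ell_\infty$ cutoff cannot replace.
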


\begin{proof}
First we apply the peeling device to the quantity $J(f)$. We partition into sets with $2^s < J(f) \leq 2^{s+1}$. Since $J(f) \geq 1$, it suffices to take $s \geq 0$. We have
\begin{align*}
& \bP\left(\sup_{p_{f} \in \mathcal{P}, h(f,f_0) \leq n^{-1/3} J(f)} \frac{\abs{\sqrt{n}(P_{n} - P)(p_{f})}}{J(f) n^{-1/6}} \geq T\right) \\
=& \bP\left(\sup_{p_{f} \in \mathcal{P}, h(f,f_0) \leq n^{-1/3} J(f)} \frac{\abs{\sqrt{n}(P_{n} - P)(p_{f})}}{\sqrt{J(f)} \sqrt{J(f) n^{-1/3}}} \geq T\right) \\
\leq &\sum_{s=0}^{S} \bP\left(\sup_{p_f \in \mathcal{P}, \, h(f,f_0) \leq n^{-1/3} J(f), \, 2^s \leq J(f) \leq 2^{s+1}} \frac{\abs{\sqrt{n} (P_n - P)(p_f)}}{\sqrt{J(f)} \sqrt{J(f) n^{-1/3}}} \geq T \right) 
\end{align*}
We peel this expression in $h(f,f_0)$. For $s \in \mathbb{N}$, let $R_{s} = \max\{r \in \mathbb{N}: 2^{-r} \geq n^{-1/3} 2^{s+1}\}$. Let
$$\mathcal{N}_{s,r} = \{p_f \in \mathcal{P}, \, 2^{-r-1} < h(f,   f_0) \leq 2^{-r}, \, 2^s \leq J(f) \leq 2^{s+1}, \, h(f,f_0) \leq n^{-1/3} J(f)\}$$
 for $r = 0,...,R_s-1$ and
$$\mathcal{N}_{s,R_{s}} = \{p_f \in \mathcal{P}, \, h(f,f_0) \leq 2^{s+1} n^{-1/3} \leq 2^{-R_s}, \, 2^s \leq J(f)   \leq 2^{s+1}, \, h(f,f_0) \leq n^{-1/3} J(f)\}.$$
Applying the peeling device gives the further bound.
\begin{align} \label{summand}
\leq &\sum_{s=0}^{\infty} \left\{ \sum_{r=0}^{R_s-1} \bP\left(\sup_{\mathcal{N}_{s,r}} \frac{\abs{\sqrt{n} (P_n - P)(p_f)}}{\sqrt{J(f)} \sqrt{J(f) n^{-1/3}}} \geq T \right) + \bP\left(\sup_{\mathcal{N}_{s,R_{s}}} \frac{\abs{\sqrt{n} (P_n - P)(p_f)}}{\sqrt{J(f)} \sqrt{J(f) n^{-1/3}}} \geq T \right) \right\}.
\end{align}
In this last term, $J(f) n^{-1/3}$ and $J(f)$ can be bounded below on $\mathcal{N}_{s,R_{s}}$. Indeed, $J(f) > 2^{s}$ and 
$$J(f) n^{-1/3} > 2^s n^{-1/3} > 2^{-R_{s}-2}.$$
Inserting these bounds gives
\begin{align*}
&\bP\left(\sup_{\mathcal{N}_{s,R_s}} \frac{\abs{\sqrt{n} (P_n - P)(p_f)}}{\sqrt{J(f)} \sqrt{J(f) n^{-1/3}}} \geq T \right)\\
\leq &\bP\left(\sup_{p_f \in \mathcal{P}_{2^{s+1}}, \, h(f,f_0) \leq 2^{-R_s}} \frac{\abs{\sqrt{n} (P_n - P)(p_f)}}{2^{s/2} 2^{-(R_s+2)/2}} \geq T \right).
\end{align*}
Applying Lemma \ref{LemSktch3} (with $T=4\sqrt{2} C_1$) bounds this term by an expression of the form $c_{1} \exp\left[-\frac{T 2^{s+1} 2^{R_{s}}}{c_{1}^2}\right]$, for some constant $c_1$.
For $r < R_s$, we have the following chain of inequalities
\begin{align*}
& \bP\left(\sup_{N_{s,r}} \frac{\abs{\sqrt{n} (P_n - P)(p_f)}}{\sqrt{J(f)} \sqrt{J(f) n^{-1/3}}} \geq T \right)\\
\leq & \bP\left(\sup_{p_f \in \mathcal{P}_{2^{s+1}}, \, 2^{-r-1} < h(f,f_0) \leq 2^{-r}} \frac{\abs{\sqrt{n} (P_n - P)(p_f)}}{2^{s/2} h^{1/2}(f,f_0)} \geq T \right)\\
\leq & \bP\left(\sup_{p_f \in \mathcal{P}_{2^{s+1}}, \, h(f,f_0) \leq 2^{-r}} \frac{\abs{\sqrt{n} (P_n - P)(p_f)}}{2^{s/2} 2^{-(r+1)/2}} \geq T \right).
\end{align*}
According to the definition of $R_s$, $2^{-r} \geq 2^{s+1} n^{-1/3}$, so Lemma \ref{LemSktch3} (with $T = 4 C_1$) allows us to bound this probability by $c_2 \exp\left[-\frac{T 2^{s+1} 2^{r}}{c_{2}^2}\right]$.

The double summand \eqref{summand} is thus bounded by
$$\sum_{s=0}^{\infty} \left\{\sum_{r=0}^{R_s-1} c_{1} \exp \left[-\frac{T 2^{s+1} 2^{r}}{c_{1}^2}\right] + c_2 \exp\left[-\frac{T 2^{s+1} 2^{R_s}}{c_{2}^2}\right] \right\}.$$

Reducing twice according to the manipulation in \eqref{algebra}, this expression can be bounded by a term of the form $c_0 \exp\left[-\frac{T}{c_{0}^2}\right]$.
\end{proof}

\subsection{Depth-First Embedding a Geometric Network into $\R$}

The goal of this section is to define an embedding $\gamma$, of a fixed geometric network $G$ into $\R$, which approximately preserves total variation.
Let $g$ be a function of bounded variation on $G$. On each edge $e$, which is identified with the interval $[0, l_{e}]$, 
\begin{equation} \label{TVeq}
\TV(g_{e}) = \abs{g_{e}(0) - \lim_{x \searrow 0} g_{e}(x)} + \TV(g_{e}|_{(0, l_{e})}) + \abs{g_{e}(l_{e}) - \lim_{x \nearrow l_{e}} g(x)}.
\end{equation}
Define 
$$\tilde{g}_{e}(x) = \begin{cases} \lim_{z \searrow 0} g_{e}(z) & \text{if } x = 0 \\
g_{e}(x) & \text{if } x \in (0, l_{e}) \\
\lim_{z \nearrow l_{e}} g_{e}(z) & \text{if } x = l_{e}
\end{cases}.$$
The fact that $g$ is of bounded variation gives that these limits exist. 
Furthermore,
$$\TV(g_{e}|_{(0,l_{e})}) = \TV(\tilde{g}_{e}|_{[0,l_{e}]}).$$
We can therefore rewrite \eqref{TVeq} as
$$\TV(g_{e}) = \abs{g_{e}(0) - \tilde{g}_{e}(0)} + \TV(\tilde{g}_e) + \abs{g_{e}(l_{e}) - \tilde{g}_{e}(l_{e})}.$$
From this equation we conclude that, repeating this procedure for all edges $e$
\begin{equation} \label{TVeq2}
\TV_{G}(g) = \sum_{e \in E} \abs{g_{e}(0) - \tilde{g}_{e}(0)} + \TV(\tilde{g}_{e}) + \abs{g_{e}(l_{e}) - \tilde{g}_{e}(l_{e})}.
\end{equation}
Equation \eqref{TVeq2} gives us the following insight: the total variation of a function on a network can be decomposed into jumps at nodes and total variation along open intervals. By separating limit nodes from the true value at the node, we can create an expanded network that represents this decomposition. For each node, and each edge incident to that node, define a limit node as the limit approaching the node along the incident edge. Similarly, define a value node as the true value at a node. The expanded graph is the defined on the limit and value nodes, with the inherited connectivity. Open intervals corresponding to the original edges in the geometric graph are edges between two limit nodes, and value nodes are only connected to limit nodes. We perform this expansion in order to guarantee that each edge in the original network is traversed by a depth-first search.

\begin{figure}[ht!]
\includegraphics[width=12cm]{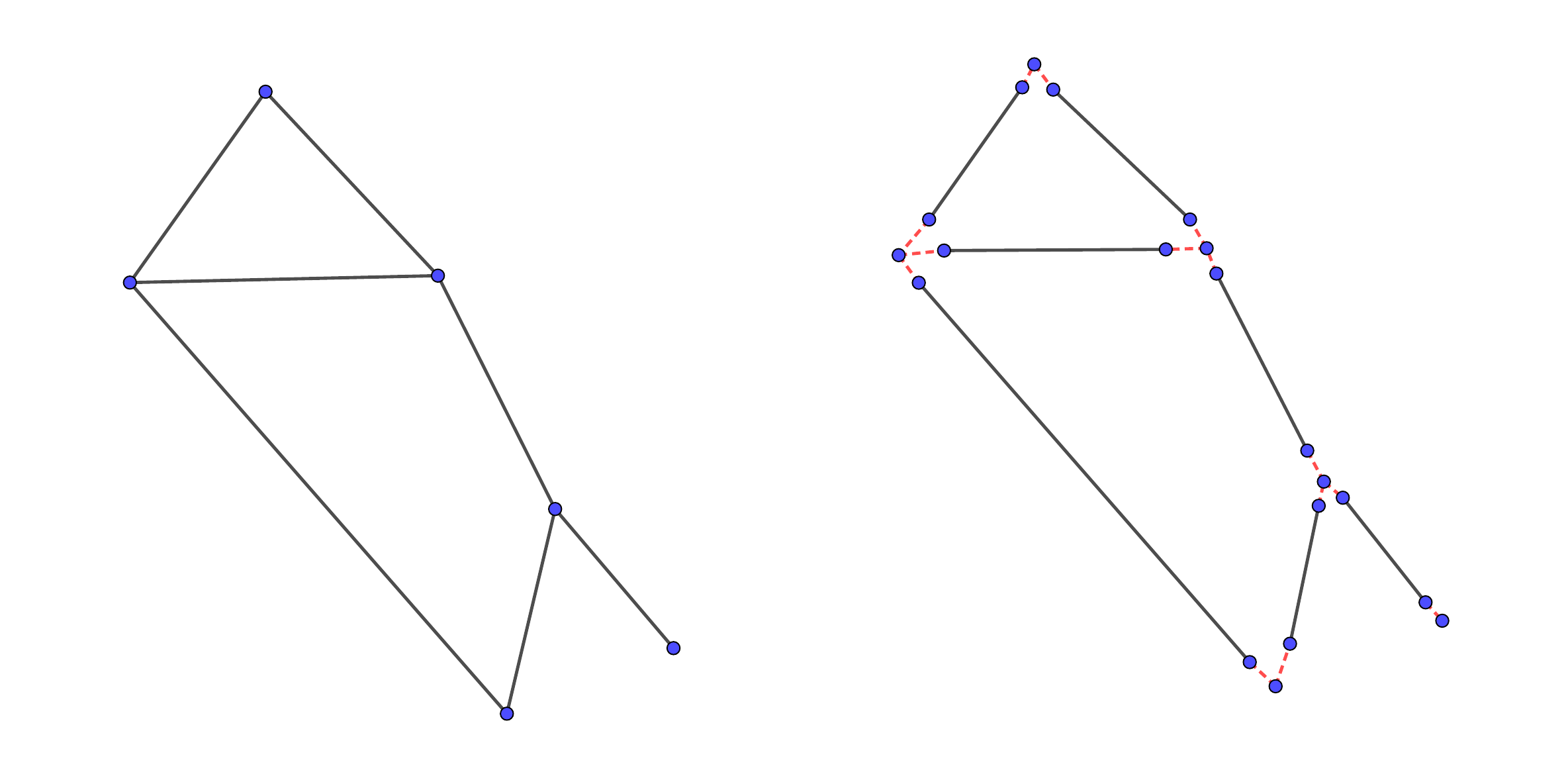}
\centering
\caption{A geometric network on the left, and its expansion on the right. In the expansion, red edges represent edges between limit nodes and value nodes. Black edges correspond to open intervals in the original geometric network.}
\label{fig1}
\end{figure}

In order to perform the embedding of $G$ into $\R$, we apply a slight modification of the technique in depth-first search fused lasso \cite{padilla}. The idea is this: traverse the nodes of the expanded network according to depth-first search, starting at some arbitrary root node. Glue edges together according to the order in which they are visited in the depth-first search. Each of the intervals will be traversed, according to depth-first search. For any function the total variation of the resulting univariate embedding never exceeds twice that of the graph-induced total variation. We formalize this result in the following theorem.

\begin{theorem} \label{TVBounds}
Let $G$ be a connected geometric network, and $\gamma:G \to \R$ be the embedding of $G$ into $\R$ according to depth-first search. Then
\begin{enumerate}[(i)]
\item Each edge of the original (non-expanded) graph is traversed.
\item $\TV(g \circ \gamma^{-1}) \leq 2 \TV_{G}(g)$ for all $g: G \to \R$.
\item Because we use them simultaneously, we dnote $\mu$ and $\mu_G$ denote the Lebesgue  and base measure on $\R$ and $G$, respectively. For any function $f: G \to \R$ and set $A \subseteq \R$,
$$\int_{\gamma^{-1}(A)} f \, d\mu_{G} = \int_{A} f \circ \gamma^{-1} \, d\mu.$$
It follows that for a random variable $X$ on $G$ with density $f_{0}$, $\gamma(X)$ has density $f \circ \gamma^{-1}$. Furthermore, for any functions $f$ and $f_{0}$ on $G$, $h(f, f_{0}) = h(f \circ \gamma^{-1}, f_{0} \circ \gamma^{-1})$.
\end{enumerate}
\end{theorem}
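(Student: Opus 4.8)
The plan is to treat the three claims in the order (i), (ii), (iii), with (ii) carrying essentially all of the content. The unifying device is to work on the expanded graph $\tilde G$ of Figure~\ref{fig1} (the subdivision of $G$ in which each original edge $e$ is replaced by the chain value-node\,--\,limit-node\,--\,black edge\,--\,limit-node\,--\,value-node), and to weight the edges of $\tilde G$ so that $\TV_G(g)$ becomes a sum of ``edge variations'': each black edge $\beta$, carrying the open interval of an original edge $e$ with the function $\tilde g_e$, gets weight $V(\beta):=\TV(\tilde g_e)$, and each red edge between a value node $v$ and an incident limit node gets weight equal to the corresponding one-sided jump $|g(v)-\tilde g_e(v)|$. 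Equation~\eqref{TVeq2} is exactly the identity $\TV_G(g)=\sum_{\epsilon\in E(\tilde G)}V(\epsilon)$. For (i), observe that $\tilde G$ is connected because $G$ is, so a depth-first search from any root visits every vertex of $\tilde G$ and therefore processes (and lays down the interval of) every edge; in particular every black edge is traversed, which is the purpose of the expansion: it turns every original edge into an edge of $\tilde G$ that DFS is forced to handle, and it records the true vertex values $g(v)$ as honest vertices of the traversal, kept separate from the one-sided limits $\tilde g_e(v)$.

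For (ii) I would adapt the doubling trick behind the depth-first-search fused lasso of \cite{padilla} to the continuous setting. Fix a DFS tree $T$ of $\tilde G$ and form a closed walk $W^\ast$ on the topological space $\tilde G$ that traverses every edge exactly twice: take the Euler tour of $T$ (each tree edge once down and once up) and, for each non-tree edge, splice in a there-and-back detour at one of its endpoints. Reading $g$ along $W^\ast$, the total variation accumulated is at most $\sum_{\epsilon}2V(\epsilon)=2\TV_G(g)$, since one traversal of a black edge contributes exactly $\TV(\tilde g_e)$ and one traversal of a red edge contributes exactly its jump. The embedding $\gamma$ lays each black interval down exactly once, in the order and orientation of one chosen traversal of each black edge along $W^\ast$ (say the first one it meets); consequently $g\circ\gamma^{-1}$ is obtained from the walk $g|_{W^\ast}$ by deleting the finitely many sub-walks lying strictly between consecutive chosen traversals. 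Deleting a sub-walk running from $x$ to $y$ can only decrease total variation, because the lone ``gluing jump'' $|g(x)-g(y)|$ it leaves behind is bounded by the variation of $g$ along that sub-walk, and the deleted sub-walks are pairwise disjoint. Hence $\TV(g\circ\gamma^{-1})\le 2\TV_G(g)$, which is (ii).

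Claim (iii) is then routine. Off the finite set of (images of) nodes and limit nodes, $\gamma$ is a bijection of $G$ onto $[0,L]$ with $L=\sum_e l_e$, and on each edge it is an isometry (a translation composed with at most a reflection), so $\gamma$ pushes the base measure $\mu_G$ forward to Lebesgue measure $\mu$ on $[0,L]$. The identity $\int_{\gamma^{-1}(A)}f\,d\mu_G=\int_A f\circ\gamma^{-1}\,d\mu$ is the image-measure formula, and the two stated consequences are immediate: if $X$ has density $f_0$ then for measurable $A$, $\bP(\gamma(X)\in A)=\bP(X\in\gamma^{-1}(A))=\int_{\gamma^{-1}(A)}f_0\,d\mu_G=\int_A f_0\circ\gamma^{-1}\,d\mu$, so $f_0\circ\gamma^{-1}$ is the density of $\gamma(X)$; and $h^2(f,f_0)=\tfrac12\int_G(\sqrt f-\sqrt{f_0})^2\,d\mu_G=\tfrac12\int_{\R}(\sqrt{f\circ\gamma^{-1}}-\sqrt{f_0\circ\gamma^{-1}})^2\,d\mu=h^2(f\circ\gamma^{-1},f_0\circ\gamma^{-1})$.

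The step I expect to be the real obstacle is the bookkeeping inside (ii): making precise that ``the order in which edges are visited by the depth-first search'' can be realized as a coherent choice of exactly one traversal of each black edge along a fixed twice-covering walk $W^\ast$, keeping track of the orientations in which the intervals are laid down, and verifying that the pieces of $W^\ast$ removed when forming $\gamma$ are genuinely disjoint, so that the ``deletions only decrease total variation'' step applies cleanly. Everything else — connectivity of $\tilde G$ for (i), and the piecewise-isometry and change-of-variables computations for (iii) — is routine.
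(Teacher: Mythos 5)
Your parts (ii) and (iii) are essentially the paper's own argument in different packaging: your walk $W^\ast$ covering every edge of the expanded graph exactly twice, together with the observation that the gluing jump created by deleting a sub-walk is bounded by the variation of $g$ along that sub-walk, is the same combination of the ``DFS uses each edge at most twice'' fact and the triangle inequality that the paper applies to consecutive DFS-visited nodes; and (iii) is the same piecewise-isometry, image-measure computation as in the paper. So the bulk of the proposal is sound and not a genuinely different route.

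The genuine gap is in (i). You argue that since $\tilde G$ is connected, DFS ``visits every vertex of $\tilde G$ and therefore processes (and lays down the interval of) every edge.'' That inference is false: in any connected graph with cycles, DFS visits every vertex while leaving the non-tree (back) edges untraversed, and only edges actually traversed by the search get their intervals laid down by $\gamma$. Connectivity alone therefore does not give (i); what is needed is the degree-two structure of the limit nodes, which is exactly what the expansion is for, and which you assert (``DFS is forced to handle every edge'') but never prove. Concretely: the endpoints of a black edge are limit nodes of degree two. When DFS first reaches one of them, say $u_e$, the opposite limit node $v_e$ cannot already have been visited, since $v_e$'s only neighbors are $u_e$ and its value node, and had DFS reached $v_e$ earlier it would at that moment have crossed the black edge to the then-unvisited $u_e$. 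Hence DFS crosses the black edge from $u_e$, so every black edge is a tree edge and its interval is laid down; this is the paper's leaf/limit-node argument. Note the gap also leaks into your (ii): identifying $\gamma$ with ``the first traversal of each black edge along $W^\ast$'' presupposes that every black edge is a tree edge of the DFS, i.e.\ it presupposes (i) in the strong form just described. Once (i) is proved this way, the splicing and deletion argument in your (ii) and the change-of-variables argument in (iii) go through.
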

\begin{proof}
For (i), assume for contradiction that there is an open interval of the network $G$ which is not traversed in the depth-first search of the expanded network. Because the degree of a limit node is two,  a limit node must have been a leaf of the DFS spanning tree. But this cannot be. Indeed, one of the limit nodes of the open interval must have been reached first in the depth-first search. Because limit nodes have degree two, when DFS reached that limit node it would proceed across the edge, contradicting that the open interval was not traversed.

For (ii), consider two nodes visited consecutively in DFS of the expanded graph: $\tau(i)$ and $\tau(i+1)$, the $i$th and $i+1$th nodes visited, respectfully. There are two cases to consider. First, assume that $\tau(i)$ is not a leaf of the DFS tree. This implies there is an edge $e$ such that $\TV(g|_{\tau(i)}^{\tau(i+1)}) = \TV(g|_{e})$. For the other case, assume that $\tau(i)$ is a leaf of the DFS tree. From (i) we know that $\tau(i)$ is not a limit node. And because every limit node has degree two, we have that $\tau(i+1)$ is a limit node. Hence the univariate total variation between $\tau(i)$ and $\tau(i+1)$ is $\abs{g(\tau(i+1))-g(\tau(i))}$. Furthermore, there is a path $\pi$, traversed by DFS, such that $\pi$ starts at $\tau(i)$ and ends at $\tau(i+1)$. This requires that the network $G$ be connected, so that the path $\pi$ is a subset of the graph $G$. According to the triangle inequality,
$$\TV(g|_{\tau(i)}^{\tau(i+1)}) \leq \TV_{G}(g|_{\pi}).$$ 

We next use the following fundamental property of DFS (see for example, \cite{cormen}): DFS visits each edge exactly twice. In other words, each edge in $G$ can occur as a member of $\pi$ at most twice. This gives that
$$\TV(g) = \sum_{i} \TV(g|_{\tau(i)}^{\tau(i+1)}) \leq \sum_{i} \TV_{G}(g|_{\pi}) \leq 2 \sum_{e \in E} \TV_{G}(g|_{e}).$$

For (iii), let $f: G \to \R$, and $A \subseteq \gamma(G) \subset \R$. Then
$$\int_{\gamma^{-1}(A)} f \, d\mu_{G} = \sum_{e \in E} \int_{\gamma^{-1}(A) \cap e} f \, d\mu_{G}.$$
On each edge $e$, $\gamma$ is the identity and $\mu_{G} = \mu$. Therefore,
$$\sum_{e \in E} \int_{\gamma^{-1}(A) \cap e} f \, d\mu_{G} = \sum_{e \in E} \int_{A \cap \gamma(e)} f \circ \gamma^{-1} \, d\mu = \int_{A} f \circ \gamma^{-1} \, d\mu.$$
The remaining claims follow from this result. For any random variable $x$ on $G$,
$$\bP(\gamma(x) \in A) = \bP(x \in \gamma^{-1}(A)) = \int_{\gamma^{-1}(A)} f \, d\mu_{G} = \int_{A} f \circ \gamma^{-1} \, d\mu.$$
Therefore $f \circ \gamma^{-1}$ is the density of $\gamma(x)$. Similarly, have that $h(f, f_{0}) = h(f \circ \gamma^{-1}, f_{0} \circ \gamma^{-1})$ because
$$\int_{G} (\sqrt{f} - \sqrt{f_{0}})^2 \, d\mu_{G} = \int_{\gamma^{-1}(\gamma(G))} (\sqrt{f} - \sqrt{f_{0}})^{2} \, d\mu_{G} = \int_{\gamma(G)} (\sqrt{f \circ \gamma^{-1}} - \sqrt{f_{0} \circ \gamma^{-1}})^2 \, d\mu.$$
\end{proof}

\section{Appendix: Bracketing Entropy Results} \label{AppB}
The primary result in this appendix is the following.
\begin{theorem} \label{BrackEnt}
Let $\mathcal{P}_{M}$ be the set of functions $\{p_{f}: f \in \mathcal{F}, \, J(f) \leq M\}$. For some constant $A$, the bracketing entropy of $\mathcal{P}$ satisfies 
$$H_{B}(\delta, \mathcal{P}_{M}, P) \leq A \cdot \frac{M}{\delta}, \; \; \; \; \forall \delta >0.$$
\end{theorem}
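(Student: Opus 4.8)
\emph{Proof proposal.} The plan is to realize $\mathcal{P}_M$ as a subset of a class of \emph{uniformly bounded} functions of bounded variation, with both the variation bound and the uniform bound of order $M$, and then to invoke the classical bracketing-entropy estimate for bounded monotone functions (this is the analogy with shape-constrained estimation alluded to before the statement). To that end I would first rewrite $p_f$ conveniently: setting $g=\log f$, $g_0=\log f_0$, and $u=g-g_0$, one has $p_f=\phi\circ u$ where $\phi(t)=\tfrac12\log\tfrac{1+e^t}{2}$. The map $\phi$ is nondecreasing with $\phi(0)=0$ and $\phi'(t)=\tfrac12\cdot\tfrac{e^t}{1+e^t}\in(0,\tfrac12)$, hence is $\tfrac12$-Lipschitz, so by the chain rule for total variation and subadditivity
\[
\TV(p_f)\le\tfrac12\TV(u)\le\tfrac12\bigl(\TV(g)+\TV(g_0)\bigr)=\tfrac12\bigl(I(f)+I_0\bigr)\le\tfrac{M}{2}.
\]

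Next I would obtain a uniform bound on $p_f$ that is \emph{linear} in $M$. The naive estimate, bounding $\TV$ and the sup-norm of $\log(f+f_0)$ directly, only yields bounds growing exponentially in $M$ and is too weak. Instead note that $e^{2p_f}=\tfrac{(f+f_0)/2}{f_0}$, so $\E_{f_0}[e^{2p_f}]=\int\tfrac{f+f_0}{2}\,dx=1$. Since $f_0>0$ on the compact interval $[a,b]$, this identity forces $e^{2p_f}\le1$ somewhere and $\ge1$ somewhere, i.e.\ $p_f$ attains a nonpositive and a nonnegative value; together with $\sup p_f-\inf p_f\le\TV(p_f)\le M/2$ this gives $\norm{p_f}_\infty\le M/2$. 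Hence $\mathcal{P}_M\subseteq\mathcal{B}(M):=\{p:\TV(p)\le M,\ \norm{p}_\infty\le M\}$, which also establishes the auxiliary sup-norm claim of Lemma \ref{CoverAndBound}.

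Finally I would bound $H_B(\delta,\mathcal{B}(M),P)$. By the Jordan decomposition, every $p\in\mathcal{B}(M)$ can be written $p=q_1-q_2$ with $q_1,q_2$ nondecreasing, nonnegative, and bounded above by $\norm{p}_\infty+\TV(p)\le2M$. Since combining a $\delta/2$-bracket for each of $q_1,q_2$ produces a $\delta$-bracket for $p$, one gets $H_B(\delta,\mathcal{B}(M),P)\le2\,H_B(\delta/2,\mathcal{M},P)$, where $\mathcal{M}$ is the class of nondecreasing functions $[a,b]\to[0,2M]$. Rescaling to $[0,1]$ and applying the classical bracketing bound for bounded monotone functions (van der Vaart--Wellner; see \cite{vandervaart}), which is uniform over the underlying probability measure, gives $H_B(\eta,\mathcal{M},P)\le KM/\eta$ for all $\eta>0$ (the entropy being $0$ once $\eta$ exceeds the diameter $\le2M$). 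Chaining these inequalities yields $H_B(\delta,\mathcal{P}_M,P)\le H_B(\delta,\mathcal{B}(M),P)\le A\,M/\delta$ for a suitable constant $A$, which is the assertion.

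The main obstacle is the second step: securing the linear-in-$M$ sup-norm bound on $p_f$, for which the identity $\E_{f_0}[e^{2p_f}]=1$ is the crucial ingredient, supplying a ``free'' reference point at which $p_f$ is controlled. The remaining steps---the passage from $\mathcal{P}_M$ to $\mathcal{B}(M)$, the Jordan decomposition into monotone pieces, and the transfer of the known $L_2(P)$ monotone-entropy bound with the correct bracket-radius bookkeeping---are standard, but should be written out with some care regarding the constants so that the final bound is genuinely $O(M/\delta)$ rather than $O(e^{cM}/\delta)$.
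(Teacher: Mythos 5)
Your proposal is correct and follows essentially the same route as the paper: bound $\TV(p_f)$ and $\norm{p_f}_\infty$ by $O(M)$ (via a sign-change argument plus the variation bound), write each element as a difference of bounded monotone functions, invoke the van der Vaart--Wellner bracketing bound for monotone classes, and combine brackets. The only differences are cosmetic---you get the variation bound by composing with the $\tfrac12$-Lipschitz map $\phi$ and the sign change from $\E_{f_0}[e^{2p_f}]=1$, whereas the paper uses the inequality $\TV(\log(f+a))\le\TV(\log f)$ and compares $f$ with $f_0$ directly via $\int f=\int f_0=1$.
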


The proof of this result is decomposed into the following lemmas. In Lemma \ref{bound} we show that $\mathcal{P}_{M}$ is uniformly bounded, has nonnegative and nonpositive values, and has uniformly bounded total variation. In Lemma \ref{brackent}, we show that any set of functions satisfying these properties is sufficient for the conclusion in Theorem \ref{BrackEnt}. This gives the result for $\mathcal{P}_{M}$.

\begin{lemma}\label{bound}
The set of functions $\{p_{f} : J(f) \leq M\}$ has total variation uniformly bounded by $M/2$, and each function in the set takes nonnegative and nonpositive values. Furthermore, $J(f) \leq M$ gives that $\norm{p_{f}}_{\infty} \leq M/2$.
\end{lemma}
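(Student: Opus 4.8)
The plan is to verify the three assertions of the lemma in the order: total-variation bound, sign property, then sup-norm bound, with the last deduced from the first two. Throughout I will use that $p_{f} = \tfrac{1}{2}\log\tfrac{f+f_{0}}{2f_{0}}$, $I(f) = \TV(\log f)$, $I_{0} = \TV(\log f_{0})$, and $J(f) = 1+I(f)+I_{0}$.

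For the total-variation bound the key point is that $p_{f}$ is a fixed Lipschitz reparametrization of $\log f - \log f_{0}$. Writing $\psi(t) = \tfrac{1}{2}\log\tfrac{e^{t}+1}{2}$ and using $\tfrac{f+f_{0}}{2f_{0}} = \tfrac{1}{2}\bigl(e^{\log f - \log f_{0}}+1\bigr)$, we have $p_{f} = \psi(\log f - \log f_{0})$. Since $\psi'(t) = \tfrac{1}{2}\cdot\tfrac{e^{t}}{e^{t}+1} \in (0,\tfrac{1}{2})$, the map $\psi$ is $\tfrac{1}{2}$-Lipschitz, so composition with $\psi$ multiplies total variation by at most $\tfrac12$; combined with subadditivity of $\TV$ on differences, applied edgewise so that it also covers the network case, this gives
\[
\TV(p_{f}) \;\le\; \tfrac{1}{2}\,\TV(\log f - \log f_{0}) \;\le\; \tfrac{1}{2}\bigl(I(f)+I_{0}\bigr) \;\le\; \tfrac{1}{2}J(f) \;\le\; \tfrac{M}{2}.
\]

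For the sign property, the normalizations $\int_{G} f = \int_{G} f_{0} = 1$ prevent $f-f_{0}$ from having a constant sign: $\{f \ge f_{0}\}$ is nonempty, since otherwise $f_{0}-f$ would be everywhere positive yet integrate to zero, and likewise $\{f \le f_{0}\}$ is nonempty. Picking $x_{+}\in\{f\ge f_{0}\}$ and $x_{-}\in\{f\le f_{0}\}$ yields $p_{f}(x_{+}) \ge \tfrac12\log 1 = 0 \ge p_{f}(x_{-})$, so $p_{f}$ attains both a nonnegative and a nonpositive value. For the sup-norm bound, fix any point $x$ and join it to $x_{-}$ by a sub-interval, or, on a connected network, by a simple path; the total variation of $p_{f}$ along it is at most $\TV(p_{f}) \le M/2$, so $p_{f}(x) \le p_{f}(x_{-}) + M/2 \le M/2$, and symmetrically $p_{f}(x) \ge p_{f}(x_{+}) - M/2 \ge -M/2$. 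Hence $\norm{p_{f}}_{\infty} \le M/2$.

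I do not expect a real obstacle here; the only step carrying genuine content is the Lipschitz reparametrization --- recognizing $p_{f} = \psi(\log f - \log f_{0})$ with $\psi$ a universal $\tfrac12$-Lipschitz function and checking that this estimate transfers to total variation edge by edge on a network. Once that is in place, the integral normalization and the finiteness of $\TV(p_{f})$ make the sign and sup-norm statements immediate.
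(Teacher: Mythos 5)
Your proof is correct, and the sign and sup-norm arguments coincide with the paper's (integrability of both $f$ and $f_0$ forces $p_f$ to take values of both signs; a function that crosses zero and has total variation at most $M/2$ is bounded by $M/2$). Where you diverge is the total-variation bound: the paper proves an auxiliary monotonicity fact, $\TV(\log(f+a)) \le \TV(\log f)$ for $a \ge 0$ (Lemma \ref{claimslemma}, proved by comparing ratios $\log\frac{f(x_2)+a}{f(x_1)+a} \le \log\frac{f(x_2)}{f(x_1)}$), uses it to drop the $+\tfrac12$ inside $p_f = \tfrac12\log\bigl(\tfrac{f}{2f_0}+\tfrac12\bigr)$, and then invokes subadditivity of $\TV$ on $\tfrac12\log f - \tfrac12\log(2f_0)$ to reach $\TV(p_f) \le \tfrac12\bigl(I(f)+I_0\bigr)$. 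You instead write $p_f = \psi(\log f - \log f_0)$ with $\psi(t) = \tfrac12\log\tfrac{e^t+1}{2}$, observe $\psi' \in (0,\tfrac12)$, and get the same bound in one stroke from Lipschitz composition plus subadditivity of $\TV$ on the difference $\log f - \log f_0$; this is slightly more self-contained since it absorbs the paper's auxiliary lemma and the factor $\tfrac12$ into a single derivative computation, while the paper's ratio argument is purely order-theoretic and needs no differentiability of the outer map. Your explicit path-based derivation of $\norm{p_f}_\infty \le M/2$ (and the remark that it extends along simple paths on a connected network) just spells out the one-line observation the paper states; both are fine, noting only that the paper's Appendix \ref{AppB} works on an interval, with the network case handled separately by the depth-first embedding.
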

\begin{proof}
The assertion about total variation follows from Lemma \ref{claimslemma}. We also have that each function takes both nonpositive and nonnegative values. Indeed, consider $p_{f}$. From its definition
$$p_{f} = \frac{1}{2} \log\left( \frac{f +f_0}{2 f_{0}} \right).$$
For each $f$, the fact that both $f$ and $f_0$ integrate to $1$ give that for some point $\underline{x} \in \mathcal{X}$ $f(\underline{x}) \leq f_{0}(\underline{x})$. We then have
$$p_{f}(\underline{x}) = \frac{1}{2} \log \left( \frac{f(\underline{x}) + f_{0}(\underline{x})}{2 f_{0}(\underline{x})} \right) \leq 0.$$
Similarly, there exists $\bar{x} \in \mathcal{X}$ such that $f(\bar{x}) \geq f_{0}(\bar{x})$. We then have that $p_{f}(\bar{x}) \geq 0$.

The last claim follows by combining both of the above: a function which takes both nonnegative and nonpositive values and has total variation bounded by $M/2$, is bounded by $M/2$ itself. 
\end{proof}

\begin{lemma} \label{claimslemma} We have the following results.
\begin{enumerate}
\item For any constant $a \geq 0$ and any function $f$, $\TV(\log(f(x) + a)) \leq \TV(\log(f(x)))$. \label{claim1}
\item $J(p_{f}) \leq M$ gives that $\TV(p_{f}) \leq \frac{M}{2}$.
\end{enumerate}
\end{lemma}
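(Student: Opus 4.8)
The plan is to prove the two claims in the order listed: claim~\ref{claim1} carries all the content, and claim~2 will follow by rewriting $p_f$ so that claim~\ref{claim1} applies directly. For claim~\ref{claim1}, I would exploit that adding a nonnegative constant inside a logarithm is a contraction once we work in the $\log$-coordinate. Setting $\phi_a(t) = \log(e^t + a)$ gives $\log(f(x)+a) = \phi_a(\log f(x))$ pointwise --- this needs $f>0$ so that $\log f$ is defined, and $a\ge 0$ so that $f+a>0$. Since $\phi_a'(t) = e^t/(e^t+a) \in [0,1]$ whenever $a\ge 0$, the map $\phi_a$ is nondecreasing and $1$-Lipschitz, so for any finite ordered point set $P = \{z_1,\dots,z_k\}$,
$$\sum_i \abs{\phi_a(\log f(z_i)) - \phi_a(\log f(z_{i+1}))} \le \sum_i \abs{\log f(z_i) - \log f(z_{i+1})} \le \TV(\log f).$$
Taking the supremum over $P$ yields $\TV(\log(f+a)) \le \TV(\log f)$ (vacuous if $\TV(\log f)=\infty$).

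For claim~2, I would first rewrite $p_f$ to put it in the scope of claim~\ref{claim1} with $(f/f_0,\,1)$ playing the role of $(f,a)$:
$$p_f = \frac{1}{2}\log\frac{f+f_0}{2f_0} = -\frac{1}{2}\log 2 + \frac{1}{2}\log\!\left(\frac{f}{f_0}+1\right).$$
Shift-invariance of $\TV$ kills the constant term, and then claim~\ref{claim1} together with subadditivity of $\TV$ gives
$$\TV(p_f) = \frac{1}{2}\TV\!\left(\log\!\left(\frac{f}{f_0}+1\right)\right) \le \frac{1}{2}\TV\!\left(\log\frac{f}{f_0}\right) \le \frac{1}{2}\bigl(\TV(\log f) + \TV(\log f_0)\bigr) = \frac{1}{2}\bigl(I(f)+I_0\bigr),$$
and since $I(f)+I_0 \le J(f) \le M$ this is at most $M/2$. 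All logarithms here are well defined because $f,f_0>0$ forces $f/f_0>0$ and $f/f_0+1>0$.

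I do not expect a genuine obstacle: the only care needed is the positivity bookkeeping that keeps every logarithm well defined, and running the contraction estimate partition-by-partition before passing to the supremum. One notational wrinkle is that the statement is written with hypothesis $J(p_f)\le M$, whereas the hypothesis actually needed --- and the one invoked in Lemma~\ref{bound} --- is $J(f)\le M$, which is what the argument above uses.
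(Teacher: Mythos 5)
Your proof is correct and follows essentially the same route as the paper: the pointwise inequality behind your $1$-Lipschitz map $\phi_a(t)=\log(e^t+a)$ is exactly the paper's ratio comparison $\log\frac{f(x_2)+a}{f(x_1)+a}\le\log\frac{f(x_2)}{f(x_1)}$, followed by the same partition-supremum step, and claim 2 is the same shift-invariance plus subadditivity computation (the paper applies claim 1 with $a=\tfrac12$ to $\frac{f}{2f_0}+\tfrac12$ rather than pulling out $-\tfrac12\log 2$ first). Your observation about the hypothesis is also right: the paper's own proof concludes $\TV(p_f)\le J(f)/2$, so the intended hypothesis is $J(f)\le M$, matching the use in Lemma \ref{bound}.
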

\begin{proof}
The first claim is intuitive because the derivative of $\log$ is strictly decreasing. For the proof, consider any two points $x_1, x_2$ in a compact interval $I$. Let $f$ be a real-valued function on $I$. Consider $\abs{\log(f(x_{2})+a) - \log(f(x_{1})+a)}$ for some $a \geq 0$. Without loss of generality, assume that $f(x_{2}) \geq f(x_{1})$. Then
\begin{align} \label{sumBound}
& \abs{\log(f(x_{2})+a) - \log(f(x_{1})+a)} = \log(f(x_{2})+a)-\log(f(x_{1})+a) \nonumber\\
&= \log\left(\frac{f(x_{2})+a}{f(x_{1})+a}\right) \nonumber\\
&\leq \log\left( \frac{f(x_{2})}{f(x_{1})}\right).
\end{align}
Total variation is defined as the supremum over all point partitions $P$, in the interval $I$, of the following sum
$$\TV(g) = \sup_{P} \sum_{x \in P} \abs{g(x_{i+1})-g(x_{i})}.$$
In computing $\TV(\log(f(x)+a))$, we bound each of the terms in the summand with \eqref{sumBound}, to conclude that $\TV(\log(f(x)+a)) \leq \TV(\log(f(x)))$. This gives us the first claim.

For the second claim, we use the following facts about total variation: $\TV(f+g) \leq \TV(f) + \TV(g)$, $\TV(-f) = \TV(f)$, and $\TV(c) = 0$ for any constant $c$. Using these and the first claim, we have
\begin{align*}
\TV(p_{f}) &= \TV\left(\frac{1}{2}\log\left(\frac{f+f_{0}}{2f_{0}}\right)\right) \\ 
& = \TV\left(\frac{1}{2} \log \left(\frac{f}{2f_{0}} + \frac{1}{2}\right) \right) \;\;\;\; \text{ so by the first claim} \\
&\leq \TV\left(\frac{1}{2}\log\left(\frac{f}{2f_{0}}\right)\right) \\
&=\TV(\frac{1}{2}\log(f) - \frac{1}{2} \log(2f_{0})) \\
&\leq \frac{1}{2} \TV(\log(f)) +\frac{1}{2}\TV(\log(2f_{0})) \\
&\leq \frac{1}{2}\TV(\log(f)) + \frac{1}{2}\TV(\log(f_{0}) + \frac{1}{2} \TV(\log(2))\\
&\leq \frac{J(f)}{2}.
\end{align*}
This gives the second claim.
\end{proof}

We next have a lemma for the bracketing entropy of monotone classes of functions, which we will relate to functions of bounded variation. Denote the \emph{bracketing number} of the function class $\F$ with bracketing width $\epsilon$ and metric $d: \mathcal{F} \times \mathcal{F} \to \R$ by $N_{[]}(\epsilon, \mathcal{F}, d)$.

\begin{lemma}[\cite{vandervaart}, Theorem 2.7.5] \label{MonoBrack}
For every probability measure $Q$, there exists a constant $A$ such that the bracketing of monotone functions $f: \R \to [0,1]$ satisfies
$$\log N_{[]}(\epsilon, \cF, L_{2}(Q)) \leq K \left(\frac{1}{\epsilon}\right).$$
\end{lemma}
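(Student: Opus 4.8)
\emph{Proof proposal.} This is a classical estimate --- the bracketing entropy bound for uniformly bounded monotone functions, due to Kolmogorov and Tikhomirov and recorded in sharp form as Theorem 2.7.5 of \cite{vandervaart}, which is how it is invoked here. Were one to reprove it, the plan would be first to normalize the problem and then to exhibit an explicit family of brackets. A nonincreasing $f$ becomes nondecreasing under $f \mapsto 1-f$, which changes neither the class nor the metric, so it suffices to treat nondecreasing $f : \R \to [0,1]$. Writing $G$ for the distribution function of $Q$ and $G^{-1}$ for the associated quantile function, the pushforward of Lebesgue measure on $[0,1]$ under $G^{-1}$ is $Q$; hence $\int (f_1-f_2)^2 \, dQ = \int_0^1 (f_1 \circ G^{-1} - f_2 \circ G^{-1})^2$, and $f \mapsto f \circ G^{-1}$ carries nondecreasing functions $\R \to [0,1]$ into nondecreasing functions $[0,1] \to [0,1]$. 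Modulo the routine care needed for atoms and flat stretches of $G$ --- which only merge points, and so can neither enlarge $L_2(Q)$ distances nor increase the number of brackets required --- a bracket in $L_2$ of Lebesgue measure for the class $\mathcal{M}$ of nondecreasing functions $[0,1] \to [0,1]$ pulls back to a bracket of the same width for the original class. So it is enough to prove $\log N_{[]}(\epsilon, \mathcal{M}, L_2) \leq K/\epsilon$.

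For the brackets, fix an integer $m$ of order $1/\epsilon$, let $b_k = k/m$ be a grid of the range, and for $g \in \mathcal{M}$ record its crossing times $\xi_k = \inf\{t : g(t) \geq b_k\}$, so that $g$ takes values in $[b_{k-1},b_k]$ on $(\xi_{k-1},\xi_k]$. Rounding each $\xi_k$ to a grid of $O(m)$ equally spaced points of $[0,1]$, and taking the corresponding lower and upper nondecreasing step functions widened by one grid step on each side so that $l \leq g \leq u$, produces a bracket for $g$. Each such pair $(l,u)$ is a nondecreasing lattice path through an $O(m) \times (m+1)$ grid, so there are at most $e^{O(m)}$ of them, giving $\log N_{[]} = O(m) = O(1/\epsilon)$. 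This part is easy.

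The difficulty, and the step I expect to be the real obstacle, is controlling the $L_2$ width of these brackets without inflating $m$ to order $1/\epsilon^2$. A uniform grid is not sufficient: a $g$ that is essentially a single large jump forces the bracket on the one cell straddling the jump to be an interval of width $\approx 1$ over a cell of Lebesgue measure $\approx 1/m$, already contributing $\approx 1/m$ to $\| u - l \|_2^2$ and so demanding $m \gtrsim 1/\epsilon^2$; the same square is lost if one instead bounds the $L_2$ width through the $L_1$ width. Every naive construction stalls here at $1/\epsilon^2$. Attaining the sharp $1/\epsilon$ requires a scale adapted partition --- finer in the domain precisely where $g$ varies quickly, coarser elsewhere --- together with a careful accounting, using monotonicity so that the total range oscillation is at most $1$ and refinements draw only on that fixed budget, which shows the number of cells nonetheless remains $O(1/\epsilon)$ while $\| u - l \|_2^2$ remains $O(\epsilon^2)$. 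This is exactly the content of Theorem 2.7.5 of \cite{vandervaart}; since the lemma is used here only in that quoted form, this bookkeeping need not be reproduced and the proof of the lemma is simply the reference.
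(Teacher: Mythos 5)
Your proposal matches the paper exactly: the paper gives no proof of this lemma, invoking it verbatim as Theorem 2.7.5 of \cite{vandervaart}, and you likewise (correctly) conclude that the proof is the citation itself. Your sketch of the reduction to Lebesgue measure and the reason a naive uniform-grid bracket only yields $1/\epsilon^2$ is accurate but not required, since the sharp $1/\epsilon$ bound is precisely the quoted result.
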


\begin{lemma}\label{diff}
Let $f: [a,b] \to \R$ be a function such that $\TV(f) \leq k$, and there are $\bar{x}$ and $\underline{x}$ in $[a,b]$ such that $f(\bar{x}) \geq 0$ and $f(\underline{x}) \leq 0$. Then $f$ can be represented as the difference of two nondecreasing functions $g, h$ with $\TV(g)$ and $\TV(h)$ bounded by $k$. Furthermore, for all $x \in [a,b]$,
$$-k \leq g(x) \leq k \text{   and   } -k \leq h(x) \leq k.$$
\end{lemma}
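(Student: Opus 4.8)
The plan is to invoke the classical Jordan decomposition, adapted so as to track the constant $k$. Define the variation function $V(x) := \TV(f|_{[a,x]})$ for $x \in [a,b]$; since $\TV(f|_{[a,x]}) \le \TV(f) \le k$, this is finite, it is clearly nondecreasing, and $V(a) = 0$, $V(b) = \TV(f) \le k$. Set $g := \tfrac12(V + f)$ and $h := \tfrac12(V - f)$, so that $f = g - h$ by construction. To see that $g$ and $h$ are nondecreasing, fix $a \le x < y \le b$. By additivity of total variation over subintervals (the same splitting used in the proof of Lemma \ref{FinDim}), $V(y) - V(x) = \TV(f|_{[x,y]})$, and taking the partition $\{x,y\}$ gives $\TV(f|_{[x,y]}) \ge |f(y) - f(x)|$. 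Hence $(V(y) \pm f(y)) - (V(x) \pm f(x)) = (V(y) - V(x)) \pm (f(y) - f(x)) \ge |f(y) - f(x)| \pm (f(y) - f(x)) \ge 0$, so both $V + f$ and $V - f$, and therefore $g$ and $h$, are nondecreasing.

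Next I would establish the uniform bound $|f(x)| \le k$ for every $x \in [a,b]$; this is the only place the hypothesis on $\bar{x}$ and $\underline{x}$ enters. If $f(x) \ge 0$, then since $f(\underline{x}) \le 0$ we get $f(x) \le f(x) - f(\underline{x}) \le |f(x) - f(\underline{x})| \le \TV(f) \le k$. If $f(x) \le 0$, then since $f(\bar{x}) \ge 0$ we get $-f(x) \le f(\bar{x}) - f(x) \le |f(x) - f(\bar{x})| \le \TV(f) \le k$. Either way $|f(x)| \le k$.

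Finally I would read off the asserted bounds. Because $g$ and $h$ are nondecreasing on $[a,b]$, $\TV(g) = g(b) - g(a) = \tfrac12\bigl(V(b) + f(b) - f(a)\bigr) \le \tfrac12(k + k) = k$, using $f(b) - f(a) \le |f(b) - f(a)| \le \TV(f) \le k$; replacing $f$ by $-f$ gives $\TV(h) \le k$ by the same computation. For the pointwise bounds, $0 \le V(x) \le k$ together with $|f(x)| \le k$ yield $-\tfrac{k}{2} \le g(x) = \tfrac12(V(x) + f(x)) \le k$ and likewise $-\tfrac{k}{2} \le h(x) \le k$, both of which lie in $[-k,k]$. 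The argument is essentially the standard Jordan decomposition, so there is no substantive obstacle; the only non-automatic ingredient is the uniform bound $|f| \le k$, which is exactly why the two-sided sign hypothesis on $f$ is imposed, and the only point requiring a little care is keeping track of the factor of $\tfrac12$ when passing from $f$ to $V \pm f$.
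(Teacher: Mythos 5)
Your proposal is correct and follows essentially the same route as the paper: the Jordan decomposition $g = \tfrac12(V+f)$, $h = \tfrac12(V-f)$ with the variation function $V(x)=\TV(f|_{[a,x]})$, monotonicity via $V(y)-V(x)\ge |f(y)-f(x)|$, and the sign hypothesis on $\bar{x},\underline{x}$ used only to get the uniform bound $|f|\le k$. Your derivation of the pointwise bounds (directly from $0\le V\le k$ and $|f|\le k$) is if anything a bit cleaner than the paper's endpoint-plus-monotonicity argument, but it is the same proof in substance.
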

\begin{proof}
Denote by $\TV_{x'}^{x''}(f)$ the total variation of $f$ on $[x', x'']$. Define
$$g(x) := \frac{f(x)+\TV_{a}^{x}(f)}{2}, \; \; h(x) := \frac{\TV_{a}^{x}(f) - f(x)}{2}.$$
Of course, $f = g -h$.

Let $x'' > x'$. Then
\begin{equation}\label{eq1}
g(x'') - g(x') = \frac{f(x'')-f(x') + \TV_{a}^{x''}(f)-\TV_{a}^{x'}(f)}{2}
\end{equation}
and
\begin{equation}\label{eq2}
h(x'') - h(x') = \frac{\TV_{a}^{x''}(f) - \TV_{a}^{x'}(f) - (f(x'') - f(x'))}{2}.
\end{equation}
We have that
$$\TV_{a}^{x''}(f) - \TV_{a}^{x'}(f) = \TV_{x'}^{x''}(f) \geq \abs{f(x'') - f(x')}$$
which allows us to conclude that \eqref{eq1} and \eqref{eq2} are positive. Hence $g$ and $h$ are nondecreasing. Lastly,
$$\TV_{a}^{b}(g) = \frac{f(b) + \TV_{a}^{b}(f) - f(a) + 0}{2} = \frac{f(b) -f(a) + \TV_{a}^{b}(f)}{2} \leq \TV_{a}^{b}(f) \leq k$$
and
$$\TV_{a}^{b}(h) = \frac{\TV_{a}^{b}(f)-f(b) +(0 - f(a))}{2} = \frac{\TV_{a}^{b}(f) +f(a) - f(b)}{2} \leq \TV_{a}^{b}(f) \leq k$$
We have shown the total variation bounds. 

The inequality in the statement of the lemma follows from the nondecreasing nature of these functions. From this property, we have
\begin{equation}\label{eq3}
\frac{f(a)}{2} = g(a) \leq g(x) \leq g(b) = \frac{k-f(b)}{2}
\end{equation}
and 
\begin{equation} \label{eq4}
\frac{-f(a)}{2} = h(a) \leq  h(x) \leq h(b) = \frac{k-f(b)}{2}.
\end{equation}
Because of the fact the assumptions on $\bar{x}$ and $\underline{x}$, $\abs{f(a)} \leq \max\{\abs{f(a) - f(\underline{x})}, \abs{f(a) - f(\bar{x})}\} \leq \TV_{a}^{b}(f) = k$. The same is true of $f(b)$. The conclusion follows by substituting these inequalities into \eqref{eq3} and \eqref{eq4}.
\end{proof}

\begin{lemma}\label{brackent}
Let $\mathcal{G}$ be a set of functions each of which has nonnegative and nonpositive values and have total variation bounded by $M$. Let $Q$ be a probability measure. The bracketing entropy of $\mathcal{G}$ grows like $\frac{1}{\epsilon}$. That is, for some constant $K$ not depending on $Q$,
$$\log N_{[]}(\epsilon, \mathcal{G}, L_{2}(Q)) \leq K (\frac{M}{\epsilon})$$
\end{lemma}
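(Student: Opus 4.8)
\emph{Proof of Lemma \ref{brackent}.}

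The plan is to reduce the statement to the known bracketing entropy bound for monotone functions, Lemma \ref{MonoBrack}, using the Jordan-type decomposition supplied by Lemma \ref{diff}, and then to combine brackets for the two monotone pieces. Throughout we regard the functions in $\mathcal{G}$ as defined on a common compact interval $[a,b]\subseteq\R$, and we extend $Q$ to a measure on $\R$ by assigning zero mass outside $[a,b]$.

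First, fix $f\in\mathcal{G}$. By hypothesis $\TV(f)\le M$ and $f$ attains both a nonnegative and a nonpositive value, so Lemma \ref{diff} applies with $k=M$: we may write $f=g-h$ where $g,h$ are nondecreasing on $[a,b]$, with $\TV(g)\le M$, $\TV(h)\le M$, and $-M\le g(x),h(x)\le M$ for all $x\in[a,b]$. Extending $g$ and $h$ to all of $\R$ by their boundary values keeps them nondecreasing and $[-M,M]$-valued, so the affinely rescaled functions $\bar g:=(g+M)/(2M)$ and $\bar h:=(h+M)/(2M)$ lie in the class $\mathcal{M}$ of nondecreasing functions $\R\to[0,1]$. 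As $f$ ranges over $\mathcal{G}$, each of $\bar g,\bar h$ ranges within $\mathcal{M}$.

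Next, apply Lemma \ref{MonoBrack}: there is a constant $K_0$, independent of $Q$, with $\log N_{[]}(\eta,\mathcal{M},L_2(Q))\le K_0/\eta$ for all $\eta>0$. Pick $\eta=\epsilon/(4M)$ and let $\mathcal{B}$ be a minimal family of $L_2(Q)$-brackets of width at most $\eta$ covering $\mathcal{M}$, so that $\log\abs{\mathcal{B}}\le 4K_0M/\epsilon$. Undoing the rescaling, each bracket $[\ell,u]\in\mathcal{B}$ yields a bracket $[\,2M\ell-M,\ 2Mu-M\,]$ for the corresponding monotone summand, of $L_2(Q)$-width at most $2M\eta=\epsilon/2$ (and restricting to $[a,b]$ can only decrease the width).

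Finally, combine the brackets. If $[\ell_1,u_1]$ brackets $\bar g$ and $[\ell_2,u_2]$ brackets $\bar h$, with the corresponding rescaled brackets having $L_2(Q)$-width at most $\epsilon/2$, then $[\,(2M\ell_1-M)-(2Mu_2-M),\ (2Mu_1-M)-(2M\ell_2-M)\,]$ is a bracket for $f=g-h$, and by the triangle inequality in $L_2(Q)$ its width is at most the sum of the two widths, hence at most $\epsilon$. The number of such product brackets is at most $\abs{\mathcal{B}}^2$, so
$$\log N_{[]}(\epsilon,\mathcal{G},L_2(Q))\le 2\log\abs{\mathcal{B}}\le \frac{8K_0M}{\epsilon},$$
which is the claim with $K=8K_0$, a constant not depending on $Q$. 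The only points requiring care are the bookkeeping of the rescaling constants and the observation that the brackets for the two monotone summands multiply, so their logarithms add; the substantive $1/\epsilon$ rate is inherited directly from Lemma \ref{MonoBrack}.
\qed
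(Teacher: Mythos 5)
Your proof is correct and follows essentially the same route as the paper: decompose each function via Lemma \ref{diff} into a difference of nondecreasing functions, rescale affinely to apply the monotone bracketing bound of Lemma \ref{MonoBrack} at radius of order $\epsilon/(4M)$, and combine the two families of brackets so the entropies add. The only difference from the paper's write-up is the order of the rescaling bookkeeping (the paper first divides by $M$, you apply Lemma \ref{diff} at level $k=M$ and rescale afterwards), which is immaterial.
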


\begin{proof}
Consider the set of functions $\bar{\mathcal{G}} = \frac{1}{M}\mathcal{G}$. From Lemma \ref{bound}, the set $\bar{\mathcal{G}}$ maps from $\R$ to $[-1,1]$, and has total variation bounded by $1$. By Lemma \ref{diff}, $\bar{\mathcal{G}} \subseteq \mathcal{H} - \mathcal{F}$, where each $\mathcal{H}$ and $\mathcal{F}$ contain monotone functions which map $\R \to [-1,1]$. By Lemma \ref{MonoBrack}, the classes $\bar{\mathcal{H}}:= \frac{1}{2} \mathcal{H} + \frac{1}{2}$ and $\bar{\mathcal{F}}:= \frac{1}{2} \mathcal{F}+\frac{1}{2}$ each have bracketing numbers of the form $L^{C_{1}/\epsilon}$ and $L^{C_{2}/\epsilon}$ for constants $L, C_{1}$, and $C_{2}$. We can form an $\epsilon$-bracket of $\mathcal{G}$ from all pairs of $\epsilon/4M$ brackets of $\bar{\mathcal{H}}$ and $\bar{\mathcal{F}}$.

$$g = M \bar{g} = M (h-f) = M 2(\bar{h}-\frac{1}{2} - (\bar{f} - \frac{1}{2})) = 2M(\bar{h} - \bar{f}).$$
Access to an $\epsilon/2M$ bracketing cover of $\bar{\mathcal{H}}$ and $\bar{\mathcal{F}}$ gives functions $l,u,a,b$ such that
$$l \leq h \leq u, \;\;\; a \leq f \leq b$$
and both $\norm{l-u}_{L_{2}(Q)}$ and $\norm{b-a}_{L_{2}(Q)}$ are less than $\epsilon/4M$. We then have
$$2M(l-b) \leq g \leq 2M(u-a).$$
We have formed a bracket in $\mathcal{G}$ of the form $(2M(l-b), 2M(u-a))$. These form an $\epsilon$ bracket because
$$\norm{2M(l-b)-2M(u-a)}_{L_{2}(Q)} \leq 2M \norm{l-u} + 2M\norm{a-b} < \epsilon.$$

There are $L^{4 M C_{1}/\epsilon} \times L^{4 M C_{2}/\epsilon}$ such brackets, so the bracketing entropy satisfies
$$\log N_{[]}(\epsilon, \mathcal{G}, L_{2}(Q)) \leq 4(C_{1}+C_{2}) \log(L) \frac{M}{\epsilon}.$$
This gives the result.
\end{proof}
\end{document}